\documentclass[11pt]{article}

\usepackage[utf8]{inputenc}
\usepackage[T1]{fontenc}
\usepackage[a4paper,margin=1in]{geometry}
\usepackage{amsmath,amssymb,amsthm,mathtools}
\usepackage{microtype}
\usepackage{enumitem}
\usepackage{hyperref}
\usepackage{booktabs}
\usepackage{cleveref,aliascnt}
\usepackage{xspace}
\usepackage{thm-restate,color,xcolor}
\usepackage{amsmath,amssymb,bbm,amsthm}
\usepackage{fullpage}
\usepackage{boxedminipage}
\usepackage{amsmath}
\usepackage{mathtools}
\usepackage{framed}
\usepackage[framemethod=tikz]{mdframed}
\usepackage{titlesec}
\usepackage{lipsum}%
\usepackage{cleveref,aliascnt}
\usepackage{tikz}
\usepackage{multirow}
\usepackage{comment}

\usepackage{wrapfig}
\usepackage[ruled,vlined,linesnumbered]{algorithm2e}
\usepackage{mathtools}
\DeclarePairedDelimiter\ceil{\lceil}{\rceil}

\crefname{algocf}{algorithm}{algorithms}
\Crefname{algocf}{Algorithm}{Algorithms}

\newtheorem{theorem}{Theorem}[section]
\newtheorem{lemma}[theorem]{Lemma}
\newtheorem{claim}[theorem]{Claim}
\newtheorem{definition}[theorem]{Definition}
\newtheorem{remark}[theorem]{Remark}
\newtheorem{corollary}[theorem]{Corollary}
\newtheorem{proposition}[theorem]{Proposition}
\newtheorem{observation}[theorem]{Observation}

\newcommand{\PiPath}{P_i}
\newcommand{\V}{V}
\newcommand{\E}{E}
\newcommand{\G}{G}

\newcommand{\dist}{\mathrm{dist}}

\newcommand{\distX}[1]{\mathrm{dist}_{#1}}

\newcommand{\Hi}{H_i}

\newcommand{\Gleq}[1]{G_{\le #1}}
\newcommand{\GleqCirc}[1]{G^{\circ}_{\le #1}}

\newcommand{\HleqCirc}[1]{H^{\circ}_{\le #1}}
\newcommand{\card}[1]{\left|#1\right|}

\newcommand{\Ball}{B}

\newcommand{\pp}{\mathcal{P}}

\title{New Greedy Spanners and Applications}
\author{Elizaveta Popova\thanks{Weizmann Institute. Email: \texttt{elizaveta.popova@weizmann.ac.il}.}\and Elad Tzalik\thanks{Weizmann Institute. Email: \texttt{elad.tzalik@weizmann.ac.il}. Supported by the Adams Fellowship Program of the Israel Academy of Sciences and Humanities.} }
\date{
\today
}

\begin{document}
\maketitle

\begin{abstract}
We present a simple greedy procedure to compute an  $(\alpha,\beta)$-spanner for a graph $G$. We then show that this procedure is useful for building fault-tolerant spanners, as well as spanners for weighted graphs.

Our first main result is an algorithm that, given a multigraph $G$, outputs an $f$ edge fault-tolerant $(k,k-1)$-spanner $H$ of size $O(fn^{1+\frac1k})$ which is tight. To our knowledge, this is the first tight result concerning the price of fault tolerance in spanners which are not multiplicative, in any model of faults.

Our second main result is a new construction of a spanner for weighted graphs. We show that any weighted graph $G$ has a subgraph $H$ with $O(n^{1+\frac{1}{k}})$ edges such that any path $P$ of hop-length $\ell$ in $G$ has a replacement path $P'$ in $H$ of weighted length $\leq w(P)+(2k-2)w^{(1/2)}(P)$ where $w(P)$ is the total edge weight of $P$, and $w^{(1/2)}$ denotes the sum of the largest $\lceil \frac{\ell}{2} \rceil$ edge weights along $P$. Moreover, we show such approximation is optimal for shortest paths of \emph{hop-length} $2$. To our knowledge, this is the first construction of a ``spanner'' for weighted graphs that strictly improves upon the stretch of multiplicative $(2k-1)$-spanners for all non-adjacent vertex pairs, while maintaining the same size bound.

Our technique is based on using clustering and ball-growing, which are methods commonly used in \emph{designing} spanner algorithms, to \emph{analyze} simple greedy algorithms. This allows us to combine the flexibility of clustering approaches with the unique properties of the greedy algorithm to get improved bounds. In particular, our methods give a very short proof that the parallel greedy spanner adds $O(kn^{1+\frac{1}{k}})$ edges, improving upon known bounds.

\end{abstract}

\section{Introduction}
\label{sec:intro}
Let $G$ be an $n$-vertex weighted graph.
A \emph{spanner} of $G$ is a subgraph that approximately preserves distances.
Formally, a subgraph $H$ is a $t$-spanner of $G$ if $\dist_H (u,v) \leq t \cdot \dist_G (u,v)$ for all $u,v\in V$ (where $\dist_X$ denotes the shortest path distance in a graph $X$).
Since their introduction by Peleg and Ullman \cite{PelegU88} and Peleg and Sch\"{a}ffer \cite{PelegS89}, spanners have been found to be extremely useful for a wide variety of applications, including network tasks like routing and synchronization \cite{PelegU89a,ThorupZ01,Cowen01,CowenW04,PelegU88}, preconditioning linear systems \cite{ElkinEST08}, distance estimation \cite{ThorupZ05,BaswanaS04},  and many others. The first tight construction of spanners was given by \cite{AlthoferDDJS93}, which proved that any graph admits a $(2k-1)$-spanner with $O(n^{1+\frac{1}{k}})$ edges, which is best possible\footnote{The $O(n^{1+\frac{1}{k}})$ upper bound is tight \emph{assuming the girth conjecture of Erd\H{o}s}. Nevertheless, the algorithm of \cite{AlthoferDDJS93} is optimal \emph{unconditionally}.}.

The foundational work of Althöfer et al. \cite{AlthoferDDJS93}, focuses on providing a stretch $(2k-1)$ for \emph{adjacent} pairs of vertices, and highlights that graphs of girth $>2k$ are a bottleneck for such sparsification. To handle these bottlenecks, much work has focused on $(\alpha,\beta)$-spanners.
\begin{definition}[(\(\alpha,\beta\))-spanner]
Let $G=(V,E)$ be a graph.  
A subgraph $H \subseteq G$ is called an \emph{$(\alpha,\beta)$-spanner} of $G$ if for all $u,v \in V$,
\[
  \dist_H(u,v) \;\leq\; \alpha \cdot \dist_G(u,v) + \beta .
\]
When $\beta = 0$ we call $H$ a \emph{multiplicative spanner}, and when $\alpha = 1$ we call it an \emph{additive spanner}.
\end{definition}

Intuitively, $(\alpha,\beta)$-spanners have effectively multiplicative stretch $\alpha$ for vertices that are far apart in $G$, while $\beta$ masks the local bottlenecks such a graph may have -- in order to achieve an upper bound $O(n^{1+\frac{1}{k}})$  on the size of the spanner. The first indication that one can provide improved stretch
for distant vertices is the seminal work of Elkin and Peleg \cite{ElkinP04}. In \cite{ElkinP04} the authors  constructed $(1+\varepsilon,\beta)$ spanners of size $O_{\varepsilon,k}(n^{1+\frac{1}{k}})$
edges and $\beta = O\left(\frac{\log(k)}{\varepsilon}\right)^{\log k}$ for every integer $k$ and $\varepsilon\in(0,1)$, as well as a $(k-1,O(k))$-spanner.
This sparked many follow-up works to investigate the tradeoffs of size, distance and stretch achievable in graph sparsification and in particular the $(k,k-1)$ spanner of Baswana et al. \cite{BaswanaKMP10}, the spanner of Thorup and Zwick \cite{ThorupZ06}, and many additional works e.g. \cite{Chechik13,Ben-LevyP20,ElkinGN22}.  

In real-world scenarios, spanners are frequently employed in systems whose components are susceptible to occasional breakdowns.
It is therefore desirable to have spanners possessing resilience to such failures, leading to the notion of \emph{fault-tolerant (FT) spanners}.
Two extensively studied types of faults are \emph{edge faults} and \emph{vertex faults}.
Spanners in the edge/vertex fault-tolerant (E/VFT) settings are defined as follows:

\begin{definition}[E/VFT Spanner]
    An $f$-EFT (VFT) $(\alpha,\beta)$-spanner of $G$ is a subgraph $H$ such that for every set $F$ of at most $f$ edges (vertices) in $G$, it holds that $H-F$ is an $(\alpha,\beta)$-spanner of $G-F$.
\end{definition}

E/VFT spanners have received major interest in recent years; see, e.g.,
\cite{ChechikLPR10,DinitzK11,BodwinDPW18,BodwinP19,DinitzR20,BodwinDR21,BodwinDR22,Parter22} and references therein. The current state of affairs is that multiplicative FT spanners are well understood; this is reflected by essentially tight bounds on the size of fault tolerant spanners in case of vertex faults \cite{BodwinP19,Parter22}, edge faults \cite{BodwinDR22}, bounded degree faults \cite{BodwinHP24}, and color faults \cite{PetruschkaST24_ITCS,ParterPST25}. 

In contrast, the right size bounds required from an $f$-FT $(\alpha,\beta)$ spanners is poorly understood, and no tight bounds are known for given $(\alpha,\beta)$ that improve the stretch obtained by the multiplicative $f$-FT $(2k-1)$-spanner \emph{in any FT model}. The work of \cite{BraunschvigCPS15} studied FT additive and $(\alpha,\beta)$-spanners and constructed $f$-EFT $(k+\varepsilon,k-1)$-spanners with $O \left(f\cdot \left(\frac{(k-1)}{\varepsilon} \right)^{2f} n^{1+\frac{1}{k}} \right)$ many edges. An alternative construction may be given using the method of Dinitz and Krauthgamer \cite{DinitzK11} to obtain $f$-EFT $(k,k-1)$-spanners with $\tilde{O}\left(f^3  n^{1+\frac{1}{k}} \right)$ many edges. Finally, one may use an $f$-EFT $k$-spanner which can be done in $O(f n^{1+(1/\lceil k/2\rceil)})$ for multigraphs \cite{ChechikLPR10,BodwinP19}, and $O(kf^{\frac{1}{2}}n^{1+(1/\lceil k/2\rceil)} + fn)$ for simple graphs. Moreover, each of the aforementioned results, improves upon the others for some range of $f$, which sums up into an unclear picture on the ``overhead'' one needs to pay for faults in $(\alpha,\beta)$-spanners. Meanwhile, for lower bounds, a folklore lower bound, a multigraph with $\Omega(fn^{1+\frac{1}{k}})$ edges and no proper $f$-EFT $(k,k-1)$-spanner, is known\footnote{The lower bound is conditional on the Erd\H{o}s girth conjecture. The conjecture states that for all integers $k$ there exist an $n$ node graph with $\Omega(n^{1+\frac{1}{k}})$ edges and girth $>2k+1$. It is common in the area of FT spanners to base lower bounds on the Erd\H{o}s girth conjecture, see e.g. \cite{BodwinDPW18,BodwinHP24,PetruschkaST24_ITCS} and reference therein.}.

Our first main result is clarifying the picture for $f$-EFT, $(k,k-1)$-spanners in \emph{multigraphs}, by giving a tight upper bound matching the best known lower bound, resolving a long-standing gap, and improving the $\Tilde{O}(f^3)$ dependency on $f$ by the Dinitz-Krauthgamer approach, to a tight $O(f)$ overhead.  We prove:

\begin{theorem}\label{thm:FT-spanner-bounds-intro}
Let $G$ be a multigraph and let $k,f\in\mathbb{N}$. Then:
\begin{enumerate}[leftmargin=*,itemsep=2pt]
  \item \textbf{Existence.} There exists an $f$-EFT $(k,k\!-\!1)$-spanner $H$ of $G$ with $O\!\big(f\,n^{1+1/k}\big)$ edges.
  \item \textbf{Construction.} There is a polynomial-time algorithm that outputs an $f$-EFT $(k,k\!-\!1)$-spanner $H$ of $G$ with $O\!\big(k f\,n^{1+1/k}\big)$ edges.
\end{enumerate}
\end{theorem}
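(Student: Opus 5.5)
We will prove \Cref{thm:FT-spanner-bounds-intro} by analyzing a fault-tolerant version of the greedy procedure for $(\alpha,\beta)$-spanners described in the abstract. We process the vertices (or edges grouped by endpoint) in a fixed order and build $H$ incrementally. An edge $(u,v)$ of $G$ is added to $H$ if and only if there is a fault set $F$ with $|F|\le f$, $(u,v)\notin F$, such that the current $H-F$ fails to contain a short replacement path for $(u,v)$. Here ``short'' means the $(k,k-1)$-type condition is violated, i.e., $\dist_{H-F}(u,v) > 2k-1$. The specific criterion will be the one for which the paper's $(\alpha,\beta)$ greedy guarantees $(k,k-1)$ stretch for all pairs, not only adjacent ones.

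\textbf{Correctness.} We show that for every $F$, the graph $H-F$ is a $(k,k-1)$-spanner of $G-F$. Fix $F$ and a shortest path $P$ in $G-F$. Every edge $e$ of $P$ that is not in $H$ was rejected because, at the moment it was considered, the replacement condition already held for every fault set of size at most $f$; in particular it held for $F$. Subsequent additions only help. The stitching argument then follows the paper's fault-free $(k,k-1)$ analysis, which presumably charges the additive $k-1$ to a single ``cluster detour'' rather than to each edge. Because that argument uses only the per-edge greedy guarantee with respect to $F$, it transfers verbatim to $H-F$.

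\textbf{Size bound via clustering.} Following the paper's technique, we use ball-growing and clustering to \emph{analyze} the greedy output. We grow clusters in $H$ through $k$ levels. Each time an edge is added, some fault set $F$ witnesses that the endpoints lie at distance $>2k-1$ in $H-F$, so the added edge certifies a ``new'' neighbor outside a radius-$(k-1)$ ball in $H-F$. For the existential bound $O(fn^{1+1/k})$, we run a Bodwin--Dinitz--Robelle-style argument, using either random sampling of fault sets or a blocking-set argument. The key object is a subgraph whose ``$(k,k-1)$-girth'' is large after deleting $O(f)$ edges per vertex. In that subgraph, ball growth at rate $n^{1/k}$ from each vertex forces at most $O(fn^{1/k})$ charged edges per vertex. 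Multigraphs are handled cleanly because parallel copies simply increase the multiplicity that a fault set must cover, which is exactly what the $f$ factor pays for.

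\textbf{Polynomial-time construction.} Testing whether a bad fault set $F$ exists is hard in general, so we replace the exact test with an efficient approximate one. We greedily find up to $f+1$ (or $k(f+1)$) edge-disjoint short paths between $u$ and $v$ in the current $H$, via iterated BFS restricted to hop length $2k-1$. We add $(u,v)$ only if this search fails. Since short paths can overlap in $\le 2k-1$ edges, the greedy disjoint-path packing loses a factor $O(k)$, which yields $O(kfn^{1+1/k})$ edges.

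\textbf{Main obstacle.} The hardest step is the size analysis for the existential bound: showing the FT greedy adds only $O(f n^{1+1/k})$ edges with no $k$ or $\log$ loss. We would first attempt a direct charging argument. For each added edge we fix its witness $F_e$ and sample $H'\subseteq H$ by keeping each edge with probability $1/f$, restricting to the ``good'' edges $e$ with $e\in H'$ and $F_e\cap H'=\emptyset$. This leaves $\Omega(|H|/f)$ edges in expectation, forming a graph in which the clustering analysis bounds the edge count by $O(n^{1+1/k})$.
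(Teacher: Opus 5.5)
\textbf{There is a genuine gap.} Your algorithm only ever tests single edges against the threshold $2k-1$. That is what $(k,k-1)$ stretch means for adjacent pairs ($k\cdot 1+k-1$), so at best your procedure builds an $f$-EFT $1\to 2k-1$ spanner, i.e.\ a fault-tolerant multiplicative $(2k-1)$-spanner. Such a graph is not a $(k,k-1)$-spanner. The ``cluster detour'' that your correctness paragraph defers to does not exist, and no stitching of per-edge guarantees can produce one.

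A counterexample already works for $f=0$. Let $G$ be two cycles of length $2k$ sharing a vertex $m$. Let $x$ be a neighbor of $m$ on the first cycle and $y$ a neighbor of $m$ on the second. Process $\{x,m\}$ and $\{m,y\}$ last. Both are rejected, since each has a $(2k-1)$-hop detour. Then $\dist_H(x,y)=4k-2>3k-1=2k+(k-1)$ for every $k\ge 2$, although $\dist_G(x,y)=2$. Changing the threshold does not help. A per-edge threshold $t$ yields $td$ after stitching, so surviving long paths forces $t\le k$. That is a multiplicative $k$-spanner, which under the girth conjecture needs $n^{1+1/\lceil k/2\rceil}$ edges. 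Consequently both your correctness argument and your size analysis concern the wrong object. The size analysis is essentially the Bodwin--Patel blocking-set bound for FT $(2k-1)$-spanners, which the paper uses only as an ingredient.

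\textbf{The missing idea is to make $2$-paths the unit of the greedy.} Add a whole $2$-path $P=(x,m,y)$ whenever some $F$ with $|F|\le f$ and $F\cap E(P)=\emptyset$ has $\dist_{H-F}(x,y)>2k$. Take the union of the result with an $f$-EFT $1\to 2k-1$ spanner. Along a shortest $x$--$y$ path in $G-F$ of length $d$, pair consecutive edges: each pair costs at most $2k$, and at most one leftover edge costs at most $2k-1$. This gives $kd+k-1$.

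Bounding the number of added $2$-paths then needs two things your sketch does not supply.
\begin{enumerate}
\item The fault-free $2\to 2k$ greedy on path collections must add only $O(n^{1+1/k})$ paths. Its output can contain cliques of size $\Omega(\sqrt n)$, so no girth argument is available. This is exactly where the paper's clustering, lateral-clustering and path-buying analysis (Theorem~\ref{thm:main}) is needed.
\item The reduction from FT to non-FT must lose only a factor $f$. Independent edge sampling with probability $\Theta(1/f)$, as in your last paragraph, keeps a $2$-path with probability only $\Theta(1/f^2)$ and yields $O(f^2n^{1+1/k})$. The paper instead charges each path sharing an edge with an earlier path to its single new edge, which is addable by the EFT $1\to 2k-1$ greedy on multigraphs. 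It then samples the remaining pairwise edge-disjoint paths as whole units with probability $1/(2f)$, so that the unblocked survivors form a run of the non-FT $2\to 2k$ greedy (Theorem~\ref{thm:comb-eft-2}).
\end{enumerate}

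Your polynomial-time device, packing edge-disjoint short paths in the style of Dinitz--Robelle, is the right idea, but it must also be applied to $2$-paths. Search for $x$--$y$ paths of length at most $2k$ in $H\setminus E(P)$. The union of the at most $f$ packed paths, which has at most $2kf$ edges, serves as the blocking set. Theorem~\ref{thm:comb-eft-2} with $2kf$ in place of $f$ then gives $O(kfn^{1+1/k})$.
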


We highlight that spanners are never
used with $k$ larger than $O(\log n)$, since the additional stretch does not bring additional sparsity.
On the other hand, the fault
parameter $f$ can be significantly larger, for example polynomial in $n$. In light of this, the improvement of \Cref{thm:FT-spanner-bounds-intro} on the current known bounds is substantial, and achieves stretch of an $(k,k-1)$-spanner under faults, with the same upper bound formerly known only for the weaker multiplicative $(2k-1)$ fault tolerant spanners, see \Cref{table:size-bounds}.

We emphasize two natural reasons to focus on multigraphs: (1) We believe that they are more natural than simple graphs in the presence of faults. This is reflected e.g. in the fact that when adding colors to the edges and allowing color faults, the bounds for FT-spanners in multigraphs do not change. On the other hand, when introducing color faults to a simple graph one gets exactly the same behavior as multigraphs, e.g. the upper/lower bounds for edge-color-fault tolerant spanners in \emph{simple} graphs jumps to $\Omega(fn^{1+\frac{1}{k}})$, matching the bound for edge faults/ edge-color faults in multigraphs\footnote{See  \cite{PetruschkaST24_ITCS} for a more detailed discussion on the best bounds on the various fault models, and the effect of introducing colors.}.  (2) Even for the well-studied \emph{multiplicative} EFT spanners, tight lower bounds are known for multigraphs but not for simple graphs, see \cite{BodwinDR22}.

\begin{table}
\centering
\scalebox{.8}{
\begin{tabular}{|c|c|c|c|c|}
\toprule
$(\alpha, \beta)$ & \textbf{Size bound}                                                                                                                  & \textbf{Ref.}\\ \midrule\midrule
    $(k+\varepsilon,k-1)$ & $O \left(f\cdot \left(\frac{(k-1)}{\varepsilon} \right)^{2f} n^{1+\frac{1}{k}} \right)$ & \cite{BraunschvigCPS15}\\
    \midrule
    $(k,k-1)$ & $\tilde{O}\left(f^3  n^{1+\frac{1}{k}} \right)$ &\cite{DinitzK11}\\
    \midrule
    $(k,0)$ & $O(f n^{1+(1/\lceil k/2\rceil)})$ & \cite{ChechikLPR10,BodwinP19}\\
    \midrule
    $(2k-1,0)$ & $O(f n^{1+1/k})$ & \cite{ChechikLPR10,BodwinP19}\\
    \midrule
    $(k, k-1)$ & $O( f n^{1+\frac{1}{k}})$ & \textbf{new}

      \\ \bottomrule
\end{tabular}
}
\caption{Size bounds for $f$-EFT $(\alpha, \beta)$-spanners on multigraphs.}\label{table:size-bounds}
\end{table}

Our second main result concerns weighted graphs. Spanners are often used to compress metric spaces that correspond to weighted
input graphs, see e.g. \cite{CaiK97,DobsonB14,MarbleB13,SalzmanSAH14}. Standard multiplicative spanners can handle weights easily, yet obtaining different better stretches for further pairs is not as obvious. The work of Cohen \cite{Cohen00} constructed weighted $(1+\varepsilon,\beta)$ spanners where the additive term corresponds to a $+ \beta \cdot w_{max}(G)$ approximation of the distance on top of the multiplicative stretch, where $w_{max}(G)$ denotes the maximum edge weight over all edges in $G$, which was then improved by \cite{Elkin01}. In recent years, there has been a renewed and growing interest in understanding spanners for weighted graphs. The work of \cite{ElkinGN22} achieved constructions of weighted $(\alpha,\beta)$ spanners with \emph{local} stretch, meaning that $w_{max}(G)$ is replaced by the maximum $w_{max}(P)$, which stands for the maximum edge weight along a shortest path $P$ between nodes. This was later further studied for \emph{additive} stretch in a sequence of works obtaining essentially tight results \cite{AhmedBSKS20,ElkinGN23,AhmedBHKS21,LaL24}.

While for unweighted graphs, an $(\alpha,\beta)$ spanner has  a better stretch than an $(\alpha',\beta')$ spanner with $\alpha<\alpha'$ for far enough vertex pairs, this is not necessarily the case in the weighted setting as the $+ \beta w_{max}$ term can significantly outweigh the weight of the path. In particular, while the line of works \cite{AhmedBSKS20,ElkinGN23,AhmedBHKS21,LaL24} gives tight bounds for additive spanners, such spanners may require $\Omega(n^{\frac{4}{3}-\varepsilon})$ edges for any $\epsilon>0$ by the seminal work of Abboud and Bodwin \cite{DBLP:journals/jacm/AbboudB17}. The state of weighted spanners with $O(n^{1+\frac{1}{k}})$ edges and local stretch guarantee is less understood, with essentially a single construction of Elkin Gitlitz and Neiman \cite{ElkinGN22} of a weighted $\left(O(1),k^{O(1)} \right)$- spanner with $O(n^{1+\frac{1}{k}})$ edges. Our second result aims to fill in two gaps that weighted $\left(O(1),k^{O(1)} \right)$- spanner with $O(n^{1+\frac{1}{k}})$ have: (1) the bound is ineffective for paths $P$ of \emph{hop length} $O(k^{O(1)})$, in the sense that a $(2k-1)$ multiplicative spanner produces a better stretch, and (2) In case of very irregular weights in $G$, the $+k^{O(1)}w_{max}(P)$ term may significantly outweigh the constant multiplicative stretch guaranteed. We complement their result by filling these gaps. We show an algorithm that takes a graph $G$ and returns a subgraph $H$ of size $O(n^{1+\frac{1}{k}})$ which \emph{strictly} improves over the stretch of the multiplicative greedy algorithm for all non-adjacent vertices, and obtains optimal approximation for vertices whose shortest path consists of two edges. Given a path $P$ of length $\ell$, let $w(P)$ denote the sums of weights along the path, i.e. $w(P)=\sum_{e\in P} w(e)$, and similarly define $w^{(1/2)}(P)$ the sum of the $\lceil \frac{\ell}{2}\rceil$ highest weights along the path $P$. We prove:

\begin{theorem}\label{thm:spanners-for-weighted-graphs-intro}
For every weighted graph $G$, there exists a subgraph $H$ with $O\!\big(n^{1+1/k}\big)$ edges such that for every pair $x,y$ joined by a path $P$ in $G$,
\[
  \dist_H(x,y) \;\le\; w(P) + (2k-2)\,w^{(1/2)}(P).
\]
Moreover, for shortest paths of hop-length $2$ this bound is best possible.
\end{theorem}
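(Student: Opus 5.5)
We would prove the upper bound with a greedy construction and the optimality claim with a girth-based lower bound. For the construction, sort the edges of $G$ by nondecreasing weight and process them one at a time. An edge $e=(u,v)$ of weight $w$ is added to $H$ unless $H$ already contains a $u$--$v$ path $Q$ whose weighted length satisfies $w(Q)\le w+(2k-2)w=(2k-1)w$ and which uses only edges of weight at most $w$. To get the ``$+ (2k-2)w^{(1/2)}$'' behaviour on longer paths, we strengthen the rule in the spirit of the paper's $(\alpha,\beta)$-greedy procedure. Instead of single edges we process pairs: for every 2-hop path $u-z-v$ with edge weights $a\le b$, in order of $b$, we add an edge only if $\dist_H(u,v)> a+b+(2k-2)b$. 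Equivalently, for each edge $e=(u,v)$ and each neighbour $z$, we ensure $\dist_H(u,z)\le w(u,v)+w(v,z)+(2k-2)\max(w(u,v),w(v,z))$ whenever $e$ is skipped.

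Stretch. Given a path $P=(x_0,\dots,x_\ell)$, split it into consecutive hops of length $2$, with a single edge left over when $\ell$ is odd. Within each pair $(x_{2i},x_{2i+1},x_{2i+2})$ the guarantee yields a replacement of cost at most $w(e_{2i+1})+w(e_{2i+2})+(2k-2)\max$. Summing over the pairs gives $w(P)+(2k-2)\sum_i \max_i$. The sum of the pairwise maxima is at most the sum of the $\lceil \ell/2\rceil$ largest weights, since we take at most one weight from each of the $\lceil\ell/2\rceil$ disjoint blocks. So the bound reduces to establishing the 2-hop guarantee together with the single-edge guarantee $\dist_H(u,v)\le (2k-1)w(u,v)$, which is stronger.

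Size, the main obstacle. We must show that the pair-greedy adds $O(n^{1+1/k})$ edges. Following the paper's approach, we would analyze the greedy output with clustering and ball-growing. Fix a threshold and consider the added edges of weight roughly $w$. Grow balls in $H$ around vertices, in hop radius $\le k-1$, using only edges of weight $\le w$. The standard Baswana--Sen / ball-growing argument shows that if the graph is dense (average degree $\gg n^{1/k}$), some ball grows by a factor $n^{1/k}$ at every step up to radius $k$. This produces two added edges $(u,z),(z,v)$ or $(u,v)$ whose endpoints are already joined inside the cluster by a path of at most $2k-2$ hops of weight $\le w$. Such a path contradicts the rejection rule, because $2k-2$ hops each of weight $\le b$, plus the two edges, cost at most $a+b+(2k-2)b$.

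The delicate point is the ordering. When the later edge was considered, the cluster edges must already have been present, so we need to process weights in increasing order and charge each added edge to its heavier endpoint-cluster. We would handle this exactly as in the paper's short analysis of the parallel greedy spanner. An extra factor $k$ may appear at this stage; removing it may need an Erd\H{o}s-girth-style counting on the final graph, where every added edge closes no short light cycle. In that count, $H$ restricted appropriately has ``weighted girth'' exceeding $2k$, and the Moore bound gives $O(n^{1+1/k})$.

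Optimality for hop-length 2. Take a graph of girth $>2k+2$ with $\Omega(n^{1+1/k})$ edges, which exists assuming the girth conjecture, and give all edges unit weight. Any subgraph with $o(n^{1+1/k})$ edges misses some edge $(u,z)$. Any path $u-z-v$ has bound $2+(2k-2)=2k$, while the replacement must go around a cycle of length $>2k+2$, which is more than $2k$. A slightly sharper choice of weights (one edge tiny, one unit) can force the exact constant. This shows that no smaller coefficient than $2k-2$ on $w^{(1/2)}$ is achievable in this size.
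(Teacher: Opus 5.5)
Your stretch reduction is fine and is what the paper does: split $P$ into $2$-hop blocks plus at most one edge, and take the union with a $(2k-1)$-spanner. The genuine gap is the size bound, which is the substance of the theorem.

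\textbf{The size argument.} Your key step says ball-growing yields added edges whose endpoints are already joined by a short light path, ``contradicting the rejection rule.'' This fails. The rule for a $2$-path $(u,z,v)$ tests only $\dist_H(u,v)$. It says nothing about $\dist_H(u,z)$ or $\dist_H(z,v)$, so an added $2$-path can close arbitrarily short cycles. (The rule is also unclear about which edge is added; to certify the $2$-path you must add both.) The clique example in the introduction shows this concretely: the $2\to 2k$ greedy output can contain an $\Omega(\sqrt n)$-clique, already with unit weights. So no girth or Moore-bound argument bounds a path-greedy, including your fallback on ``weighted girth $>2k$.''

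The parallel-greedy analysis does not transfer either. There, every added edge joins a distant pair and boosts its less-clustered endpoint. An added $2$-path guarantees only one distant pair (\Cref{clm:split-k}), and if both its endpoints are already fully clustered, nothing is boosted. Those steps need two further ingredients:
\begin{enumerate}
\item a global count of pairs in a product of balls whose distance crosses a threshold, each pair counted at most once;
\item lateral clustering, to give the third vertex a large ball.
\end{enumerate}
With weights, the count needs the crossing thresholds to be monotone in the processing order. Ordering $2$-paths by $w_{\max}$ does not give this, and the paper explicitly says it found no single order that makes a path-greedy analysis work. Your admitted possible extra factor $k$ would also miss the stated bound.

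\textbf{What the paper does instead.} It runs five phases with different keys:
\begin{enumerate}
\item a multiplicative $(2k-1)$-spanner;
\item clustering in weight order on $H^{\circ}_{\le\omega}$, which defines saturated edges and the thresholds $w_u$;
\item lateral clustering from each $v$, with key $(R-1)w_u+w(v,u)$;
\item distance-reduction edges;
\item a final greedy phase keyed by $2w(e_{\mathrm{lat}})+(k-1)w(e_{\mathrm{sat}})$.
\end{enumerate}
The Phase-5 key is exactly the threshold in the count of \Cref{clm:phase5}, which makes that count monotone. \Cref{lem:key-lemma} shows that paths surviving Phases 3--4 need no repair.

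\textbf{The lower bound.} This also needs repair. A graph of girth $>2k+2$ with $\Omega(n^{1+1/k})$ edges does not exist, since the Moore bound caps it at $O(n^{1+1/(k+1)})$ edges. More fundamentally, a witness with only $\Theta(n^{1+1/k})$ edges can never show optimality at size $O(n^{1+1/k})$, because $H=G$ is then allowed. You need polynomially more edges.

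The paper (\Cref{clm:lower-bound-wtd}) takes a graph of girth $\ge 2k$ with $\Omega(n^{1+1/(k-1)})$ unit-weight edges. There, unit weights alone only give $\dist_H(u,v)\ge 2k-2$ for a broken $2$-path, i.e.\ coefficient $2k-4$. So it also attaches a pendant leaf $x'$ of weight $\varepsilon$ to every vertex. A proper subgraph must keep all leaf edges, hence misses some unit edge $xy$. For $P=(x',x,y)$ this forces $\dist_H(x',y)\ge \varepsilon+(2k-1)=w(P)+(2k-2)w_{\max}(P)$. Your closing remark about a tiny edge points this way, but the leaf construction with these girth and density parameters is the actual argument.
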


To our knowledge, such tight bounds in \Cref{thm:spanners-for-weighted-graphs-intro} were not known, even in the unweighted setting. In the unweighted case, the above result implies that pairs of vertices at distance $2$ in $G$ have stretch $k$ in $H$. The work of Baswana et al. \cite{BaswanaKMP10} achieves an upper bound of $O(kn^{1+\frac{1}{k}})$ for $(k,k-1)$ spanners while the work of Parter \cite{Parter14} achieved stretch $k$ for vertices at distance $2$ (in the unweighted case) and size $O(k^{2}n^{1+\frac{1}{k}})$, which was improved recently and independently by Chechik and Lifshitz \cite{ChechikL26} to a spanner with size $O(n^{1+1/k}+kn)$ which achieves stretch $k$ for pairs at distance $2$ and is optimal for $k=O(\log(n)/\log\log(n))$\footnote{Their construction also achieves stretch $k$ for pairs at distance $3$. We do not follow this path, but it is not hard to show that this can be achieved by the techniques presented in this paper, which implies the existence of hybrid spanners (see \cite{Parter14}) and optimal size $O(n^{1+1/k})$ assuming the girth conjecture. }. Nevertheless in our viewpoint handling weighted graphs is the more interesting part of \Cref{thm:spanners-for-weighted-graphs-intro}, whereas the $k$ factors is a secondary improvement.

Finally, we show that our technique is robust and useful for analyzing other variants of the greedy spanner. The parallel greedy spanner algorithm, introduced by Haeupler, Hershkovitz and Tan \cite{HaeuplerHT23}, is a variant of the standard greedy algorithm of \cite{AlthoferDDJS93} for computing a $(2k-1)$ spanner, where instead of considering an edge $e_i$ at the $i^{th}$ step, a set of edges $M_i$ which forms a matching is being considered, and each edge of $M_i$ decides for itself greedily if it could be discarded from $G$, or if it will be added to $H$ (the decision is done by checking for $e=\{u,v\}$ if $\dist_{H_i}(u,v)>2k-1$, where $H_i$ is the spanner before $M_i$ was considered). The parallel greedy spanner algorithm appears in relation to later works on length-constrained expanders and routing algorithms, see \cite{ChuzhoyP25,HaeuplerH025}. The main result of \cite{HaeuplerHT23} is that the parallel greedy algorithm adds at most $O(n^{1+O(\frac{1}{k})})$ edges overall, which was later improved to $O(k)^k \cdot O(n^{1+\frac{1}{k}})$ by Bodwin Haeupler and Parter \cite{BodwinHP24}. Both papers are technically heavy, with \cite{HaeuplerHT23} ``length-constrained'' expander decomposition machinery, and \cite{BodwinHP24} used the counting/dispersion approach   of \cite{BodwinDR22} together with a complex choice of paths to count over\footnote{The work of \cite{BodwinHP24} does not specifically target parallel greedy spanners but the more challenging bounded-degree FT-spanners, yet their results imply upper bounds on the output size of the parallel greedy spanner.}. We show that by applying the tools we develop in this work, one gets an elementary analysis (at most two pages) of the following improved bound:

\begin{proposition}\label{prop:parallel-greedy-spanner-intro}
    The output of the parallel greedy spanner algorithm with stretch parameter $(2k-1)$ has $O(kn^{1+\frac{1}{k}})$ many edges, and arboricity $O(kn^{1/k})$.
\end{proposition}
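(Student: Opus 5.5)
We will reduce the statement to the arboricity bound: if we can orient the output $H$ so that every vertex has out-degree $O(kn^{1/k})$, then both the edge count $O(kn^{1+1/k})$ and the arboricity $O(kn^{1/k})$ follow at once.

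\textbf{Orientation.} We process the rounds $M_1,M_2,\dots$ in order and let $H_i$ be the spanner before round $i$. An edge $e=\{u,v\}\in M_i$ is added only if $\dist_{H_i}(u,v)>2k-1$. Because $M_i$ is a matching, each vertex receives at most one new edge per round, so the structure within a single round is simple. We orient each added edge so as to charge it to an endpoint, and the goal is to show that no vertex is charged more than $O(kn^{1/k})$ times.

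\textbf{Charging via ball growing.} The analysis grows balls in the final spanner. Fix a vertex $v$. For radii $r=0,1,\dots,k-1$, consider the ball $B_r(v)$ in the subgraph of edges charged ``outward''. The key property we want is a growth claim: if $|B_{r+1}|<n^{1/k}|B_r|$ for some $r$ (which must occur for some $r\le k-1$), then the edges crossing from $B_r$ at that layer are few.

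This is the standard Baswana--Sen or Althöfer-style argument. Two edges leaving the same cluster and added in different rounds would close a cycle of length at most $2k$. For the second-added edge, this cycle would give a path of length at most $2k-1$ already present in $H_i$, contradicting its addition. So at most one edge per (cluster, neighboring cluster) pair per round can be added, and the later edge of any such pair is always rejected unless the two edges arrive in the \emph{same} round. Same-round edges can only create conflicts involving matching edges, and since a matching touches each vertex once, they contribute at most one extra edge per vertex per ``layer''.

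\textbf{Counting and the main obstacle.} Summing over the $k$ layers gives $O(k)$ extra edges per cluster on top of the $n^{1/k}$-per-vertex term. The layered structure then yields out-degree $O(kn^{1/k})$, i.e.\ a factor $k$ times $n^{1/k}$.

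The main obstacle is handling the simultaneity in a single round. Two edges of $M_i$ may each be justified by paths that use the other, so the cycle argument must use only $H_i$. This forces a cycle of length at most $2k$ to lie in $H_i\cup\{e\}$ for the later-considered edge, and we must order the edges within $M_i$ and check that a cluster of radius $r\le k-1$ provides the needed short path in $H_i$.

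I would first try to define the clusters via the BFS balls in the final $H$. I would then argue that every edge of $H$ inside a ball of radius $r$ around $v$, with $|B_{r+1}|\le n^{1/k}|B_r|$, except those of the current round, was present before. The factor $k$ (versus $1$ for sequential greedy) comes exactly from losing one matching edge per radius level.
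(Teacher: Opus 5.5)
There is a genuine gap. The charging rule is never specified, and the step meant to give the $O(kn^{1/k})$ bound rests on a cycle argument that does not survive parallelism.

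In the parallel process, every edge of round $i$ is tested against $H_i$ only. So a cycle of length at most $2k$ in the output yields a contradiction only when the latest round among its edges contributes \emph{exactly one} edge to it. Your clusters are BFS balls in the \emph{final} spanner, which causes two problems:
\begin{itemize}
\item Their internal tree edges may have been added after both crossing edges, so the short cycle need not exist in $H_i$ when either crossing edge is tested.
\item The latest round can contribute several edges of the cycle, since a matching can contain every other edge of a $2k$-cycle.
\end{itemize}
Hence the claim that the later of two edges leaving a cluster is rejected ``unless the two edges arrive in the same round'' is false as stated. The assertion that same-round conflicts cost only ``one extra edge per vertex per layer'' has no supporting mechanism. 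The hypercube example in the paper outputs all of $Q_k$, which has girth $4$. So the output can be far from high girth, and a static Moore-type argument on the final graph needs a genuinely new ingredient. You correctly flag simultaneity as the main obstacle, but you never resolve it.

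The missing idea is to grow balls \emph{over time}, in $H_i$, rather than in the final graph:
\begin{itemize}
\item Define the cluster level $\ell_u$ of $u$ in $H_i$ as the largest $\ell$ with $|B_{H_i}(u,r)|\ge n^{r/k}$ for all $r\le \ell$. Charge an edge $\{u,v\}$ added in round $i$ to the endpoint $u$ with the smaller level.
\item A $k$-clustered vertex is within distance $k\le 2k-1$ of everything, so $\ell_u\le k-1$.
\item Then $\dist_{H_i}(u,v)>2k-1\ge 2\ell_u+1$ gives $B_{H_i}(v,\ell_u)\cap B_{H_i}(u,\ell_u+1)=\emptyset$.
\item On the other hand, $B_{H_i}(v,\ell_u)\subseteq B_{H_{i+1}}(u,\ell_u+1)$, because $H_{i+1}\supseteq H_i\cup\{uv\}$.
\item So each charge increases $|B(u,\ell_u+1)|$, which was below $n^{(\ell_u+1)/k}$, by at least $n^{\ell_u/k}$. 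Thus $u$ is charged at most $\lceil n^{1/k}\rceil$ times per level, and $O(kn^{1/k})$ times in total.
\item Because each round is a matching, distinct edges of one round are charged to distinct vertices. Orienting every edge toward its charged endpoint gives both the edge bound and the arboricity bound.
\end{itemize}
Simultaneity is harmless in this argument: disjointness is checked in $H_i$ alone, and the other edges of round $i$ only enlarge balls. The factor $k$ is the number of cluster levels, not a per-radius loss of matching edges.
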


We highlight that this tighter analysis of the parallel greedy spanner, was independently obtained in \cite{BodwinHHT26}. They show that the improved bounds on the parallel greedy spanner imply length constrained expander decompositions with improved size bounds. 

\paragraph{Acknowledgments}
We are deeply grateful to Merav Parter for her invaluable guidance, encouraging this collaboration, and suggesting the problem of constructing better FT $(\alpha,\beta)$-spanners. We thank Asaf Petruschka for clarifying the connection between the bounded-degree fault model and the parallel greedy spanner, and we thank Nathan Wallheimer and Ron Safier for useful discussions.

\subsection{Technical Overview}

The starting point of this work, as well as a central component of the FT-spanner we construct, is a new greedy algorithm we introduce and analyze.

\subsubsection{The Greedy $d\to r$ Spanner Algorithm}
Our goal will be to study the following problem: given a graph $G$ and integers $d,r$, find a sparse subgraph $H$, such that vertices at distance exactly $d$ in $G$ are at distance $\leq r$ in $H$:

\begin{definition}[$d \to r$ spanner]
    Let $d,r$ be natural numbers. We say that a subgraph $H$ of $G$ is a $d \to r$ spanner of $G$ if $\forall x,y \in V$:  
    \[\dist_G(x,y) = d \implies \dist_H(x,y) \leq r\]
\end{definition}

We highlight that this definition is similar to the definition of $f(d)$-spanner appearing in many works and refer the reader to the survey \cite{AhmedBSHJKS20} and reference therein. Given $f:\mathbb{N}\to\mathbb{N}$ an $f(d)$ spanner $H$ of $G$ is a subgraph $H$ which is a $d\to f(d)$ spanner for all $d$, hence the main distinction in our approach is specializing only in fixed $d,f(d)$. $d \to r$ spanners can be thought of as building blocks for $(\alpha,\beta)$ spanners in particular we highlight the work of Parter \cite{Parter14} giving an almost tight construction of $2\to2k$ spanners, and the later work of Parter and Ben-Levy \cite{Ben-LevyP20} which constructs $k^{\varepsilon}\to O_{\varepsilon}(k)$ spanners with $O_{k,\varepsilon}(n^{1+\frac1k})$ edges for $0<\varepsilon<1$, see \cite{ChechikL26} for a recent improvement for a wide range of $\varepsilon$. We emphasize that the union of a few $d\to r$ spanners usually make an $(\alpha,\beta)$-spanner and in particular it is folklore (and easy to prove) that a union of a $1\to 2k-1$ and a $2 \to 2k$  spanner of a graph $G$, always produces a $(k,k-1)$ spanner. We will construct spanners by analyzing the following simple greedy $d \to r$ spanner algorithm:

\begin{algorithm}[h]
\caption{\textsc{Greedy $d$ $\to$ $r$ Spanner}$(G,d,r)$}\label{alg:greedy-hybrid-spanner}
$H \leftarrow (V,\emptyset)$\;

\For{$x,y\in V$ with $\dist_G(x,y) = d$}{
  \If{$\dist_H(x,y)>r$}{
    Add to $H$ an arbitrary $d$-path $P_{x,y}$ in $G$ connecting $x$ and $y$\;
  }
}
\Return{$H$}\;
\end{algorithm}

\paragraph{Large Cliques in the Output of the $d\to r$ Greedy Algorithm.} 

While the high girth of the greedy spanner of \cite{AlthoferDDJS93} is a main feature used in most works, the output of the $d\to r$ greedy algorithm for $d>1$ may have very small girth. To demonstrate this, we show that the output of the $2 \to 2k$ greedy algorithm does not necessarily have high girth, e.g. it can have a clique of size $\sqrt{n}$. Consequently, getting a meaningful size bound for the output of the algorithm requires more than just girth.

\begin{claim}
    There exists a graph $G$ on $n$ vertices and an order of the $2$-paths in it such that when the greedy $2\to 2k$ algorithm processes the paths in order, the output has a clique of size $\Omega(\sqrt{n})$.
\end{claim}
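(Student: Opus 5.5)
The plan is to build $G$ explicitly as a clique with pendant vertices hanging off it, and to order the $2$-paths so that every clique edge is forced into $H$.

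\textbf{Construction.} Let $m=\lfloor\sqrt{n}\rfloor$, and let $A=\{a_1,\dots,a_m\}$ span a clique in $G$. For every pair $i<j$, add a fresh vertex $p_{ij}$ whose only neighbor is $a_i$. This uses $m+\binom{m}{2}\le n$ vertices, and any leftover vertices are left isolated. So every $p_{ij}$ is a leaf attached to $a_i$.

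\textbf{Key observation.} In $G$ we have $\dist_G(p_{ij},a_j)=2$, since $p_{ij}a_i a_j$ is a path and $p_{ij}$ is not adjacent to $a_j$. Moreover this is the \emph{unique} $2$-path between $p_{ij}$ and $a_j$, because $a_i$ is the only neighbor of $p_{ij}$. Hence whenever \Cref{alg:greedy-hybrid-spanner} decides to add a path for the pair $(p_{ij},a_j)$, it is forced to add the edges $p_{ij}a_i$ and $a_ia_j$. The algorithm's arbitrary choice of path therefore plays no role.

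\textbf{Ordering.} First process all pairs $(p_{ij},a_j)$ for $i<j$, in any order; process all remaining distance-$2$ pairs afterwards. When the pair $(p_{ij},a_j)$ is processed, no earlier pair involved $p_{ij}$. Since the algorithm only ever adds edges lying on a path between the pair currently being processed, no edge incident to $p_{ij}$ is in $H$ yet. So $p_{ij}$ is isolated in $H$ and $\dist_H(p_{ij},a_j)=\infty>2k$. The check therefore succeeds and $a_ia_j$ is added.

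\textbf{Conclusion.} Edges are never removed, so after this phase, and hence in the final output, $H$ contains every edge $a_ia_j$. Thus $H$ contains a clique of size $m=\Omega(\sqrt n)$.

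\textbf{Main obstacle.} The only delicate point is ensuring that the relevant pair is processed while $p_{ij}$ is still disconnected in $H$. The fresh pendant vertices together with the ordering that puts these pairs first handle this; everything else is routine counting.
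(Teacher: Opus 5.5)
Your proof is correct and is essentially the paper's argument. The paper likewise takes a $\sqrt{n}$-clique with pendant leaves ($t-1$ leaves $u_{i,j}$ on each $v_i$, processing $(v_{i+j},v_i,u_{i,j})$, versus your single leaf per unordered pair) and orders the $2$-paths so each introduces a fresh leaf, forcing the clique edge into $H$. The one point worth stating explicitly is that $p_{ij}$ has degree $1$, so it can never be an interior vertex of an added path; this is what guarantees it is still isolated in $H$ when its pair is processed.
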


\begin{proof}
    Set $n=t^2$, and consider a clique of $t$ vertices $v_1, ..., v_t$. Connect each $v_i$ to $t-1$ other vertices $u_{i,1}, ..., u_{i,t-1}$, where all $u_{i,j}$ are distinct and so each a leaf in the resulting graph, see Figure~\ref{fig:big_clique}. Consider the following sequence of $2$-paths: $((v_{i+j}, v_i, u_{i,j}))_{i\in [t],j\in [t-1]}$ in the lexicographical order of $(i,j)$, with the indices of $v_i$ considered modulo $t$. 
    
    By construction, each new path introduces a fresh leaf, ensuring it will be kept by the greedy algorithm, forcing all clique edges into the output (see \Cref{fig:big_clique}).
    This shows that the greedy algorithm is forced to include every clique edge. The union of these paths contains the entire graph, and in particular, a $t$-clique.
\end{proof}

\begin{figure}[h]
    \centering    \includegraphics[width=0.5\linewidth]{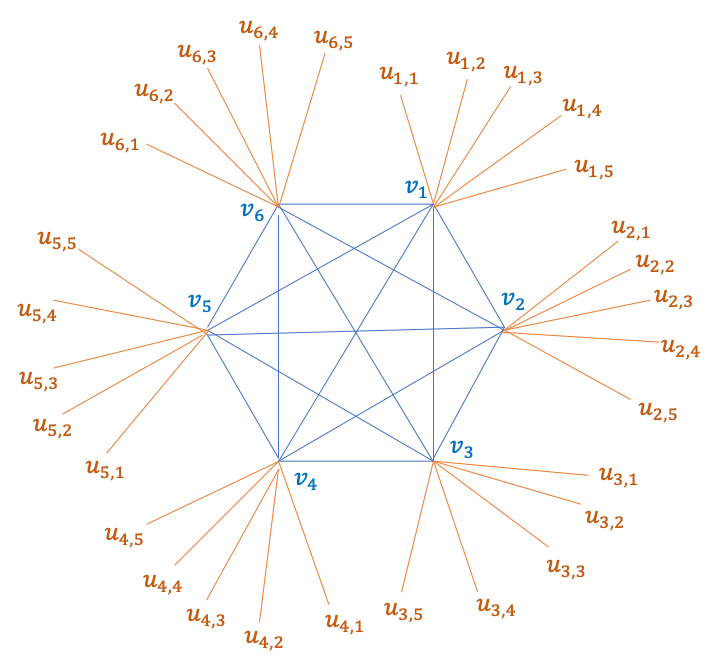}
    \caption{A big clique in the output of the $2\to 2k$ greedy spanner}
    \label{fig:big_clique}
\end{figure}

\paragraph{Greedy Clustering}
To bypass the lack of high girth we take a new approach, based on \emph{clustering} see e.g \cite{ElkinP04,BaswanaKMP10,BaswanaS07,Chechik13, Ben-LevyP20}. A central definition we use is that of a clustered vertex in a graph: a vertex $v$ is $\ell$-clustered in a graph $G$ with $n$ nodes if for all $i\leq \ell$, $|B_{G}(v,i)|\geq n^{i/k}$\footnote{Notice the definition of clusters depend on $n$ and $k$, though we omit $k$ from the definition for simplicity, as it is fixed throughout the paper.}. This definition is inspired by the new ``local'' viewpoint of the Baswana-Sen algorithm of \cite{ParterPST25}. We show that adding an edge between relatively far vertices in a graph helps them become more clustered, which is captured by the following key lemma:

\begin{lemma}[Neighborhood exchange lemma]\label{lem:ne-exchange-intro}
Let $(u,v)$ be vertices in a graph $H$, and assume that $\dist_H(u,v)>s$. Let $H'=H\cup \{u,v\}$. Then for every $\ell\le \lceil \frac{s}{2} \rceil-1$:
\begin{enumerate}[label=(\alph*),leftmargin=*,itemsep=2pt]
\item $\Ball_{H}(u,\ell)\cap \Ball_{H}(v,\ell+1)=\emptyset$.
\item $\Ball_{H'}(u,\ell)\subseteq \Ball_{H'}(v,\ell+1)$.
\end{enumerate}
\end{lemma}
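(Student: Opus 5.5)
Both parts follow from the triangle inequality, using the condition $\ell\le\lceil s/2\rceil-1$. This condition gives $2\ell+1\le 2\lceil s/2\rceil-1\le s$. Here $\Ball_X(x,i)$ is the set of vertices at distance at most $i$ from $x$ in $X$.

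For part (a), suppose some $w$ lies in $\Ball_H(u,\ell)\cap\Ball_H(v,\ell+1)$. Then $\dist_H(u,v)\le \dist_H(u,w)+\dist_H(w,v)\le 2\ell+1$. We need $2\ell+1\le s$ to contradict $\dist_H(u,v)>s$. If $s$ is even, $\lceil s/2\rceil=s/2$, so $2\ell+1\le s-1$. If $s$ is odd, $\lceil s/2\rceil=(s+1)/2$, so $2\ell+1\le s$. In both cases $\dist_H(u,v)\le s$, a contradiction, so the intersection is empty.

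For part (b), we do not need the distance hypothesis. Take $w\in\Ball_{H'}(u,\ell)$, so $\dist_{H'}(u,w)\le\ell$. Since $\{u,v\}$ is an edge of $H'$, we have $\dist_{H'}(v,w)\le 1+\dist_{H'}(u,w)\le\ell+1$. Hence $w\in\Ball_{H'}(v,\ell+1)$.

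The only step needing care is the parity bookkeeping in (a), which confirms that the bound $\ell\le\lceil s/2\rceil-1$ is exactly what makes $2\ell+1\le s$. Part (b) is immediate.
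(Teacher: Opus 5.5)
Your proposal is correct and follows the paper's proof essentially verbatim. Part (a) is the triangle-inequality contradiction $\dist_H(u,v)\le 2\ell+1\le s$, with the parity check made explicit. Part (b) is the one-step bound $\dist_{H'}(v,w)\le 1+\dist_{H'}(u,w)$ using the new edge $\{u,v\}$.
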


The merit of the lemma above is that it implies good bounds on the following algorithm to cluster a graph in a greedy fashion: For an input graph $G$, initialize an empty graph $H$ and iteratively consider an edge $e \in G$, and add $e$ to $H$ if: (1) The endpoints of $e$  are at mutual distance $>s$ in $H$, and (2) one of $e$'s endpoints is not $\lceil \frac{s}{2} \rceil$ clustered in $H$. Using the lemma we show that this process adds only $O(sn^{1+\frac{1}{k}})$ edges where $k$ is the parameter of clustering. In particular if one inserts to this process the edges of a graph $G$, one obtains a graph $H$ as output, together with the following invariant which is often critical in clustering methods\footnote{See e.g. the local analysis of the Baswana-Sen algorithm in \cite{ParterPST25}.}: for every $\{u,v\}=e \in E(G)-E(H)$ either (1) $u,v$ are both $\lceil \frac{s}{2} \rceil$ clustered or (2) $\dist_H(u,v) \leq s$\footnote{One can also extend this process naturally to weighted graphs, which we use in the main result of \Cref{sec:wtd2to2k}}. 
We then apply this lemma to show the edges added by the \emph{parallel} greedy algorithm helps vertices cluster rapidly   -- we obtain essentially tight bounds for parallel greedy spanners. The size bound of $O(sn^{1+\frac{1}{k}})$ on the number of edges added by the process has a factor $s$ which is present because the analysis of clustering using \Cref{lem:ne-exchange-intro} goes in $s$ rounds. 
It turns out that sequential algorithms enjoy a better upper bound than the one achieved by this round-by-round clustering. In \Cref{sec:greedy-spanner-warm-up} we prove that when edges in the greedy clustering algorithm are added sequentially, the number of edges added is $O(n^{1+\frac{1}{k}})$.

\paragraph{Obtaining Tight Bounds for the $2\to2k$ Greedy Spanner:} 

Our approach is based on combining the clustering ideas formerly described with a ``path-buying''  inspired algorithm, describing how distances shrink throughout the algorithm, together with an extra step we call \emph{lateral clustering}. We consider the paths $P_1,\ldots,P_t$, with $P_i=(x_i,m_i,y_i)$ added by the $2\to 2k$ greedy algorithm, in their order. For each $P_i$ we consider the pair $(x_i,m_i)$ which was at distance $>k$ before adding $P_i$ to the graph (such a pair must exist by the triangle inequality), we assume without loss of generality this pair is $(x_i,m_i)$ (otherwise reverse the path) and call it the $i^{th}$ distant pair. Concretely if $x_i$ or $m_i$ is not $\lceil \frac{k}{2} \rceil$-clustered, we can ``charge'' adding $P_i$ as the addition of an edge to the greedy clustering algorithm with distance parameter $k$ -- as the sequential process adds $O(n^{1+\frac{1}{k}})$ edges, there are only $O(n^{1+\frac1k})$ paths with a distant pair with both endpoints not $\lceil \frac{k}{2} \rceil$-clustered.\footnote{In other words, we use a \emph{moderate} girth $>k+1$ subgraph for the analysis of the clustering that occur during the greedy $2\to 2k$ algorithm.} E.g. in \Cref{fig:big_clique} the orange edges are distant and contribute to the clustering of their endpoints, while the blue edges of the clique are typically not distant and do not contribute to clustering \emph{greedily}, nevertheless we can still use such paths for clustering in a more intentional way, we describe now. 

The rest of the paths have distant pairs which are \emph{both} already $\lceil \frac{k}{2} \rceil$- clustered, let $P_i$ be such path. If $y_i$ was also $\lceil \frac{k}{2} \rceil$-clustered we will be done by a ``path-buying''  like argument (Originally appearing in \cite{BaswanaKMP10}) -- we can show that the $\Omega(n^{\frac{k-1}{k}})$ pairs of vertices in $B_H(x_i,\lfloor \frac{k}{2} \rfloor) \times B_H(y_i,\lceil \frac{k}{2} \rceil) $ are at distance $>k$ before adding $P_i$ but $\leq k$ after adding $P_i$, as this can happens at most once for a pair of vertices, and as there are $O(n^2)$ pairs of vertices this will apply the desired upper bound for such paths. 
While on a high level, the approach above describes well the main ideas that go into the proof, unfortunately it does not work as stated -- there is no guarantee that $y_i$ is $\lceil \frac{k}{2} \rceil$ or even $1$-clustered! 
To handle this, the analysis requires an additional technical step which we call \emph{lateral} clustering. In the lateral clustering step we analyze the effect of adding $P_i$ on the size of the $\lceil \frac{k}{2} \rceil$-ball around $y_i$ and divide into the following cases: (1) Most vertices of the $(\lceil \frac{k}{2} \rceil-1)$-ball around $m_i$ were initially (before adding $P_i$) at distance $>\lceil \frac{k}{2} \rceil$ from $y_i$ -- in this case we still get a significant increase for the size of $B_H(y_i,\lceil \frac{k}{2} \rceil)$ by adding $P_i$, hence the number of steps for which such paths can increase the size of a $\lceil \frac{k}{2} \rceil$ around $y_i$ until it has a ball of size $\geq n^{\lceil k/2 \rceil}$ is $O(n^{1+1/k})$ getting us back to the path-buying type argument, as if $y_i$ is clustered. (2)  Most vertices of the $(\lceil \frac{k}{2} \rceil-1)$-ball around $m_i$ were initially (before adding $P_i$) at distance $\leq \lceil \frac{k}{2} \rceil$ from $y_i$.  In this case we apply a similar ``path-buying''  type argument for balls around $x_i$ and $m_i$, which concludes the proof.

In \Cref{sec:greedy-sqrt-spanner}, we show that the greedy algorithm can also be used for other values of $d$. We reprove a result of Ben-Levy and Parter (\cite{Ben-LevyP20}, Thm 1.2) of the existence of $\lceil \sqrt{k} \rceil\to O(k)$ spanners with $\tilde{O}(n^{1+\frac{1}{k}})$ edges, by showing the greedy algorithm with the same parameters achieves the same size bound.

\subsection{Handling Faults}

To handle faults, we analyze a  FT-greedy algorithm, originally due to \cite{BodwinDPW18}, adapted to the $d\to r$ spanner definition:

\begin{definition}[EFT $(d\to r)$ spanners]\label{def:eft}
Let $\G=(\V,\E)$ be a (multi)graph and $f\in\mathbb{N}$.
A subgraph $H\subseteq \G$ is an \emph{$f$-edge-fault tolerant} ($f$-EFT) \emph{$(d\to r)$ spanner} if for every $F\subseteq \E$ with $|F|\le f$ and all $x,y\in \V$:
    \[
        \dist_{G-F}(x,y)=d \implies \dist_{H-F}(x,y)\leq r
    \]
\end{definition}
We then study the FT-greedy spanner, which works as follows: For all $x,y$ go over all paths $P$ with $d$ edges between $x$ and $y$, and add $P$ to $H$ whenever there exist a fault set $F$ of $\leq f$ edges, disjoint from $P$, such that $\dist_{H-F}(x,y)>r$. We show that the algorithm above achieves the size bounds appearing in \Cref{thm:FT-spanner-bounds-intro}.

\paragraph{The Blocking Set Method for $d\to r$ Spanners}
In \cite{BodwinP19}, Bodwin and Patel studied VFT-spanners, and prove that the $f$-VFT $1\to2k-1$ greedy algorithm achieves optimal size bound. To do so, define a blocking set as a collection of edge-vertex pairs, such that each short ($\leq 2k$) cycle in $H$ contains one of the pairs. The FT greedy $1\to 2k-1$ algorithm naturally induces a blocking set of size $f|E(H)|$. They then show that a graph with such blocking set has $O(f^{1-\frac{1}{k}})$ edges for the vertex fault model. On an intuitive level, a graph that admits a small blocking set is often thought of as being close to having high girth, forcing it to be sparse.

In contrast we view the output of the greedy $d \to r$ as producing graphs such that many vertices have cuts of size $f$ that substantially separate them. A spark of this viewpoint appears in \cite{ParterT25}, where a tight analysis for a FT-greedy algorithm was given by applying a ``cut-based''  view instead of the cycle-based view of \cite{BodwinHP24}. By defining the $d \to r$ blocking set for the $d \to r$ FT-greedy algorithm, we show that the output size produced isn't too large when compared to the standard $d\to r$ greedy spanner. We show that if $b(n)$ is a bound on the maximal number of paths added by the greedy $2\to 2k$ spanner\footnote{We actually need a bound for a slightly different algorithm then \Cref{alg:greedy-hybrid-spanner}, nevertheless this is done for technical reason elaborated in \Cref{sec:ft-spanners}.} on an $n$-vertex graph. Then the $f$-FT $2\to2k$ greedy spanner adds at most $O(f\cdot b(n) + fn^{1+\frac{1}{k}})$ paths. This concludes the proof using the bounds obtained for the $2\to 2k$ spanner described previously.

The proof of the theorem above is short, and has two steps: (1) Showing that it is enough to prove the theorem when the FT algorithm adds edge-disjoint paths at each step, and (2) Showing that if the FT-greedy algorithm adds disjoint paths, then they may be thought of as ``long edges''  -- which reduces the dependence on the fault parameter to be linear similarly to \cite{BodwinP19, PetruschkaST24_ITCS}. 

In \Cref{sec:poly-time} we implement the modified FT-greedy approach of Dinitz and Robelle \cite{DinitzR20} to give a polynomial time construction as stated in \Cref{thm:FT-spanner-bounds-intro}.

\subsection{Improving Upon Multiplicative $2k-1$ Spanners for Weighted Graphs}

It is not hard to see that \Cref{thm:spanners-for-weighted-graphs-intro} would follow if we could show that any weighted graph $G$ has a subgraph $H$ with $O(n^{1+\frac{1}{k}})$ edges and the following property: any $2$-path $P$ in $G$ has a replacement path $P'$ in $H$ of weight $w(P)+(2k-2)w_{max}(P)$ which is what we achieve in \Cref{sec:wtd2to2k}. We also highlight that the section begins with a simple lower bound, \Cref{clm:lower-bound-wtd}, showing that the stretch $w(P)+(2k-2)w_{max}(P)$ is the best possible stretch that can be achieved by a subgraph with $O(n^{1+\frac{1}{k}})$ edges, assuming the girth conjecture.
 
Following the theme presented in the paper, one may guess the following generalization of the greedy $2\to 2k$ algorithm in the spirit of \cite{AlthoferDDJS93} - go over the paths $P$ in order of weight and add each path which does not have good approximation in the current subgraph and add every such path. The main difficulty in this approach stems from the following question: According to which order should $P$ be processed?  The multiplicative greedy spanner scans edges, and in such case there is a natural order - order of weight. On the other hand the greedy $d\to r$ spanner for $d>1$ scans paths of length $d$ -- if those were composed of weighted edges it is unclear which order to choose: maybe order the paths by maximum edge weight on the path? another natural options are the total edge weight/ a linear combination of the weights along the path. We didn't find a single order for which the analysis goes through.

A different attempt is to use what is called $d$-light initialization, appearing in \cite{AhmedBSKS20,AhmedBHKS21,ElkinGN23} in the context of weighted additive spanners. $d$-light initialization describes an initialization phase prior to running an algorithm, in which every vertex adds its $d$ lightest adjacent edges - in our case $O_k(n^{\frac{1}{k}})$ edges. This approach is effective in handling $k\leq 4$ but doesn't seem to generalize to higher $k$ - the main reason is that it allows to compare weights in a very local way (around a vertex), which does not propagate to something meaningful for long enough paths. Even if one replaces the initialization phase by a clustering phase, it seems that additional structure is needed to design the algorithm.

In light of the above, we took a different approach - instead of running a greedy algorithm in some unique order, such that an analysis similar to \Cref{thm:main} follows, we break the algorithm according to the steps in \emph{the analysis} of \Cref{thm:main}, by introducing weighted notions of clustering, lateral clustering, a greedy phase, and a new "distance-reduction" phase, each with its own novel order. We believe that this approach can be fruitful for other constructions of spanners for weighted graphs.

We now briefly describe the phases of the algorithm, their corresponding order, and what they informally try to achieve. The spanner will be the union of all edges added in the various phases. Assume for now $k$ is even, and $R=k/2$. The other parity of $k$ is handled similarly in \Cref{sec:wtd2to2k}.

\paragraph{Phase 1: Adding a Multiplicative Spanner} We begin by adding to the spanner a $(2k-1)$ multiplicative spanner with $O(n^{1+1/k})$ edges. The main reason for this phase is the following observation: every $2$-path $P$ (in $G$) with an edge in the final spanner (the second edge may not be present in the final spanner) has the right stretch in the final spanner. Hence it's enough to obtain a good stretch for $2$-paths in $G$ with both edges not in the final spanner. 

\paragraph{Phase 2: Greedy Clustering}
We go over all edges $e$ by increasing order of weight and will run the greedy clustering algorithm on all edges by order of weight and add those added by the process, which is $O(n^{1+1/k})$ edges. By the end of this phase we ensure that any edge $e=\{u,v\}$ not in the final spanner has either (1) $\dist_H(u,v)\leq kw(e)$, or (2) that $u,v$ are clustered at the time of considering $e$ -- which implies that $B_{H_{\leq w(e)}}(u, i)$, and $B_{H_{\leq w(e)}}(v, i)$ are of size $\geq n^{\frac{i}{k}}$, for $i \leq R$ \footnote{$H_{\leq w(e)}$ is the underlying of all edges of weight at most $w(e)$.}. In case an edge wasn't added due to condition (2) we call this edge saturated.

It now follows that paths $P$ with two edges that weren't included because $\dist_H(u,v)\leq kw(e)$ have good stretch already by the end of this phase. For the rest of the paths $P=(x,m,y)$ which do not meet the desired stretch by the end of phase $2$ we have (w.l.o.g) that both $x,m$ were clustered at the time (aka weight) for which the edge $\{x,m\}$ was considered in phase $2$. This allows to add edges that improve distances between vertices, while still adding only $O(n^{1+1/k})$ edges.

\paragraph{Phase 3: Lateral Clustering}
The lateral clustering phase is an algorithm where each vertex $y$ (independently) adds $O(n^{1/k})$ edges. For every vertex $v$ denote by $w_v$ to be the smallest weight of a saturated edge (see description of phase 2) incident to $v$. Each vertex $y$ goes over its neighbors $u$ by order of weight $\phi = w(y,u)+(R-1) w_u$ and checks whether adding the edge $\{y,u\}$ substantially increases the size of the ball of radius $\phi$ around $y$. The reason for choosing this particular order, is that this way we order $y$'s neighbor in a monotone way according to the implied distance (to $y$) guarantee we can get on the vertices from $u$'s ball, after the edge addition.

After phase 3 the neighbors of $y$ that were not added divide into two: (type 1 neighbors) neighbors $u$ of $y$ such that at time of considering $u$ the $\phi$-ball around $y$ was already of size $\Omega(n^{R/k})= \Omega(n^{1/2})$, and (type 2 neighbors) neighbors $u$ such that when considering $u$, the ball of radius $(R-1)w_u$ in the spanner, shares many neighbors with the ball of radius $\phi$ around $y$.

\paragraph{Phase 4: Global Distance Reduction by Edges}
In this phase we go over all \emph{saturated} edge $e=\{u,v\}$, that is edges that weren't added by phase $2$ due to both endpoints being clustered (according to the corresponding weight) and reconsider adding them to the spanner - as they may create useful shortcuts for vertex pairs between their clusters. More concretely, we go over $e=(u,v)$ by order of weight and for each $e$ we check how many pairs of $B(u,R w_u) \times B(v,(R-1)w_v)$ will be of distance $\leq kw(e)$ due to adding $e$ - if this quantity is $\Omega(n^{1-1/k})$ we will add $e$ to the spanner. Again this phase adds the right number of edges. The key property that holds by the end of this phase is that we have successfully handled all paths $P=(x,m,y)$ such that: (1) $(x,m)$ was saturated in phase 2, and wasn't added during phase 4 (2) when $y$ considered $m$ in phase 3, $m$ was a type 2 neighbor when considered by $y$.

\paragraph{Phase 5: Greedy Phase}
The fifth and last phase is greedy, we will go over all paths $P$ that do not have the right stretch in the spanner, and if a path doesn't satisfy the right stretch we add it. It is not hard to check that the only path $P=(x,m,y)$ that may with a bad stretch satisfies: (1) $(x,m)$ is saturated and wasn't added in phase 4, and (2) $m$ was a type 2 neighbor of $y$ (meaning that $y$ already has a sufficiently large ball around it). In such case, we show that many distances ($\Omega(n^{1-1/k})$ in the set $B_H(x,(R-1)w(x,m)) \times B_H(y,w(y,m)+(R-1)w_m)$ must improve by adding the path. Still we need to make sure that every pair  of vertices improve at most once. For this we choose the last order to be $2w(m,y)+(k-1)w(x,m)$ - which correspond to the distance bound of the short paths created by the greedy fixes.

\subsection{Organization of the Paper}

In \Cref{sec:greedy-spanner-warm-up} we set the stage and develop the key definitions and tools we need through the paper, and warm up by showing how to apply these tools to analyze the parallel greedy spanner. In \Cref{sec:analysis} we analyze the $2\to2k$ greedy algorithm. In \Cref{sec:wtd2to2k} we describe and analyze the weighted spanner in \Cref{thm:spanners-for-weighted-graphs-intro}. \Cref{sec:ft-spanners} concludes the construction of the $f$-FT $(k,k-1)$-spanner with size $O(fn^{1+\frac{1}{k}})$. 

\subsection{Notations}\label{subsec:notations}
For an unweighted graph $G$ we denote for $x,y\in V(G)$ the length of their shortest $x$ to $y$ path $P$ by $\dist_G(x,y)$. For a path $P=(v_1,\ldots v_{\ell+1})$ we use $x(P)$ to denote the first vertex of the path $v_1$ and similarly $y(P)$ denotes the last vertex on the path, and $\ell$ is the length of the path. We denote by $\Ball_G(v,\ell)$ the set of vertices $u\in V(G)$ that has a path of length $\leq \ell$ to $v$. 

A weighted graph $G$ is a triple $(V,E,w)$ with $w: E \to \mathbb{R}_{>0}$. For a path $P$ in a weighted graph we denote by $w(P)$ the sum of weights of $E(P)$, by $w_{\max}(P)$ (resp.\ $w_{\min}(P)$) the maximum (resp.\ minimum) edge weight along $P$. For a weighted graph $G$,  $\dist_G(x,y)$ denotes the minimum of $w(P)$ over all $x$ to $y$ paths in $G$. For a weighted graph $G$ we write $G^\circ$ for the underlying unweighted graph of $G$.
For a threshold $\omega>0$, let $\Gleq{\omega}$ be the subgraph of $G$ induced by edges of weight at most $\omega$, and let $\GleqCirc{\omega}$ be its underlying unweighted graph.  For two sets $A,B$ we denote $A-B$ and $A/B$ the corresponding difference set. For a set $F\subseteq E$ and graph $G=(V,E)$ denotes $G-F$ the graph $(V,E-F)$.

\section{Clustering in Changing Graphs}\label{sec:greedy-spanner-warm-up}

We now introduce tools that would be useful throughout the paper, and in particular, describe the greedy clustering algorithm, and prove Proposition~\ref{prop:parallel}.

We begin with a definition of a clustered vertex. The definition above depends on $k$, and throughout the paper all clusters will appear with the same $k$ which will be clear from context.

\begin{definition}[Clusters]\label{def:clusters_greedy}
A vertex $v$ has an $\ell$-cluster in a graph $H$ if for all $r\le \ell$ we have
\(
\lvert \Ball_{H}(v,r)\rvert \ge n^{r/k}.
\)
If $\ell$ is maximal with this property, we say $v$ is \emph{$\ell$-clustered}.
\end{definition}

Intuitively, being $\ell$-clustered means that the neighborhood of a vertex grows at least polynomially fast in radius, with growth $n^{\frac{1}{k}}$ per step.

Notice that any vertex is $0$-clustered in every graph. We also have the following observation.

\begin{observation}\label{cl:no-k-clustered}
    If a vertex $v$ is $k$-clustered in a graph $G$, then $ \forall y \in V: \dist_G(v,y)\leq k$.
\end{observation}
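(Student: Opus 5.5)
The plan is to use the clustering condition at radius $k$ together with the fact that a ball cannot contain more than the $n$ vertices of the graph. By \Cref{def:clusters_greedy}, if $v$ is $k$-clustered in $G$ then in particular $v$ has a $k$-cluster, so taking $r=k$ gives $\lvert \Ball_G(v,k)\rvert \ge n^{k/k}=n$.

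Since $\Ball_G(v,k)\subseteq V$ and $\lvert V\rvert=n$, we get $\Ball_G(v,k)=V$. By the definition of $\Ball_G(v,k)$ in \Cref{subsec:notations}, every $y\in V$ then has a path of length at most $k$ to $v$, i.e.\ $\dist_G(v,y)\le k$. This is exactly the claim.

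There is no real obstacle here. The only point to check is that $n$ in the clustering definition is the number of vertices of the same graph $G$ in which the ball is taken. This holds by the convention stated just before \Cref{def:clusters_greedy}, and it is the case in every application in the paper, where $H\subseteq G$ share the vertex set $V$.
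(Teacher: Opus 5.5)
Your proof is correct and is the same argument the paper gives. The radius-$k$ clustering condition yields $|B_G(v,k)|\ge n^{k/k}=n$, which forces $B_G(v,k)=V$. Your check that $n$ is the vertex count of the graph in which the ball is taken is a reasonable point that the paper leaves implicit.
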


\begin{proof}
    By definition a $k$-clustered vertex $v$ has $|B_G(v,k)|\geq n^{\frac{k}{k}}=n$. Hence $B_G(v,k)=V$ and the claim follows. 
\end{proof}

We next restate and prove the neighborhood exchange lemma. This lemma is crucial as it shows how clustering properties propagate when adding a new edge between distant vertices.

\begin{lemma}[Neighborhood exchange lemma]\label{lem:ne-exchange}
Let $(u,v)$ be vertices in a graph $H$, and assume that $\dist_H(u,v)>s$. Let $H'=H\cup \{u,v\}$. Then for every $\ell\le \lceil \frac{s}{2} \rceil-1$:
\begin{enumerate}[label=(\alph*),leftmargin=*,itemsep=2pt]
\item $\Ball_{H}(u,\ell)\cap \Ball_{H}(v,\ell+1)=\emptyset$ and  $\Ball_{H}(u,\ell+1)\cap \Ball_{H}(v,\ell)=\emptyset$ .
\item $\Ball_{H'}(u,\ell)\subseteq \Ball_{H'}(v,\ell+1)$, and $\Ball_{H'}(v,\ell)\subseteq \Ball_{H'}(u,\ell+1)$.
\end{enumerate}
\end{lemma}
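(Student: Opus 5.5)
Both parts follow from the triangle inequality, plus one observation about shortest paths in $H'$. Throughout, $\ell \le \lceil s/2\rceil - 1$, which gives $2\ell+1 \le 2\lceil s/2\rceil -1 \le s$.

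\textbf{Part (a).} Suppose some $z$ lies in $\Ball_H(u,\ell)\cap \Ball_H(v,\ell+1)$. By the triangle inequality in $H$,
\[
\dist_H(u,v)\le \dist_H(u,z)+\dist_H(z,v)\le 2\ell+1\le s .
\]
This contradicts $\dist_H(u,v)>s$. The second intersection is empty by the symmetric argument with the roles of $u$ and $v$ swapped.

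\textbf{Part (b).} Take $z\in \Ball_{H'}(u,\ell)$ and fix a shortest $u$--$z$ path $Q$ in $H'$, of length $\dist_{H'}(u,z)\le \ell$. There are two cases.
\begin{itemize}
\item If $Q$ does not use the new edge $\{u,v\}$, then prepending that edge to $Q$ gives a $v$--$z$ walk in $H'$ of length at most $\ell+1$. Hence $z\in \Ball_{H'}(v,\ell+1)$.
\item If $Q$ does use the edge $\{u,v\}$, then, since $Q$ is a shortest path starting at $u$, this edge must be its first edge. The remainder of $Q$ is then a $v$--$z$ path of length at most $\ell-1$, so $z\in \Ball_{H'}(v,\ell-1)\subseteq \Ball_{H'}(v,\ell+1)$.
\end{itemize}

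This argument is even simpler than it looks: $\dist_{H'}(v,z)\le 1+\dist_{H'}(u,z)$ holds for every $z$ directly by the triangle inequality through the edge $\{u,v\}$. So part (b) holds for every $\ell$, and the hypothesis $\dist_H(u,v)>s$ is not needed for it. The symmetric inclusion $\Ball_{H'}(v,\ell)\subseteq \Ball_{H'}(u,\ell+1)$ follows identically.

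There is no real obstacle here. The only point to check is the arithmetic $2\ell+1\le s$ in part (a). For even $s$ it reads $2\ell+1\le s-1$; for odd $s$ it reads $2\ell+1\le s$. Both hold under $\ell\le\lceil s/2\rceil-1$. I would then state the intended use: by (a) the $\ell$-ball of $u$ and the $(\ell+1)$-ball of $v$ are disjoint in $H$, and by (b) the former becomes absorbed into the latter in $H'$. So $|\Ball_{H'}(v,\ell+1)|\ge |\Ball_H(v,\ell+1)|+|\Ball_H(u,\ell)|$, which is the growth of balls the clustering analysis relies on.
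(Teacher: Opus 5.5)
Your proof is correct and follows essentially the same route as the paper: part (a) by the triangle inequality $\dist_H(u,v)\le 2\ell+1\le s$, and part (b) by $\dist_{H'}(v,z)\le 1+\dist_{H'}(u,z)$ through the new edge. Your case analysis on the shortest path in (b) is superfluous, as you note yourself, and your remark that (b) needs no distance hypothesis agrees with the paper's one-line argument.
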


\begin{proof}
    For (a), suppose $w\in \Ball_{H}(u,\ell)\cap \Ball_{H}(v,\ell+1)$. Then
\(
\dist_{H}(u,w)\le \ell
\)
and
\(
\dist_{H}(w,v)\le \ell+1
\),
so by the triangle inequality,
\(
\dist_{H}(u,v)\le 2\ell+1 \le s,
\)
contradicting that $\dist_{H}(u,v) > s$. For (b), since $(u,v)\in H'$, for every $w\in B_{H'}(u,\ell)$ we have
\[
\dist_{H'}(v,w) \le \dist_{H'}(v,u)+\dist_{H'}(u,w) \le 1+\ell,
\]
and this implies the claim.
\end{proof}

\subsection{Greedy Clustering}
For a fixed integer $s$ (that is different in different contexts) and a subgraph $H$ of $G$ we shall call a vertex which has an $\lceil{\frac{s}{2}}\rceil$-cluster in $H$ \emph{fully clustered} in $H$ and call a pair of vertices $(u, v)$ \emph{distant} if $\dist_H(u,v) > s$. When $s$ is clear from context we simply say a vertex is fully-clustered, and that a pair of vertices is distant, according to the above.

\noindent Consider the following algorithm:
\begin{algorithm}[h]
\caption{\textsc{Greedy Clustering}$(G, s, (e_i)_{i=1}^t)$}\label{alg:boost_proc}

$H \leftarrow (V(G), \emptyset)$\;

\For{$i = 1,..., t$}{
    \If{$e_i=\{u_i,v_i\}$ satisfies: (1) $\dist_H(u_i,v_i)>s$, and (2) $u_i$ or $v_i$ is not $
    \lceil \frac{s}{2} \rceil$ clustered in $H$:}{
        $E(H) \leftarrow E(H)\cup \{e_i\}$
    }
}
\Return{$H$}\;
\end{algorithm}

We call an edge $e = \{u,v\}$ boosting if $(u,v)$ is a distant pair and either $u$ or $v$ is not fully clustered at the time of consideration. \Cref{alg:boost_proc} appears throughout this paper both explicitly as a step of an algorithm (also for weighted graphs, as it appears in \Cref{sec:wtd2to2k} ) and implicitly as a subsequence of actions performed by the greedy algorithm that can be coupled with execution of \Cref{alg:boost_proc}. We now state and prove an upper bound on the number of edges that can be actually added by \Cref{alg:boost_proc}.

\begin{proposition}
\label{lem:boosting_upper_bound}
    The number of edges added by \Cref{alg:boost_proc} is $O(n^{1+\frac1k})$ for any $s$.
\end{proposition}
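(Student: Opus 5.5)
The plan is to charge each added edge to one of its endpoints and show that every vertex is charged $O(n^{1/k})$ times. Write $R=\lceil s/2\rceil$. Since $H$ only gains edges, balls only grow, so the cluster level of every vertex is nondecreasing over time. That monotonicity is what makes a charging argument possible.

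\emph{Step 1 (charging and ball growth).} Suppose edge $e_i=\{u,v\}$ is added. Let $r_u,r_v$ be the cluster levels of $u,v$ in the current $H$ (the largest $\ell$ such that each has an $\ell$-cluster). Assume $r_u\le r_v$. Since one endpoint is not fully clustered, $j:=r_u\le R-1$. Charge the edge to the pair $(u,j)$. Then $|\Ball_H(v,j)|\ge n^{j/k}$ because $r_v\ge j$, and $|\Ball_H(u,j+1)|<n^{(j+1)/k}$ by maximality of $r_u$. Since $\dist_H(u,v)>s$ and $j\le R-1$, \Cref{lem:ne-exchange}(a) gives that $\Ball_H(u,j+1)$ and $\Ball_H(v,j)$ are disjoint. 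Part (b) gives $\Ball_{H'}(v,j)\subseteq\Ball_{H'}(u,j+1)$. Hence adding $e_i$ increases $|\Ball(u,j+1)|$ by at least $n^{j/k}$.

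\emph{Step 2 (per-level counting).} The ball $\Ball(u,j+1)$ never shrinks, and $u$ is charged at level $j$ only while $|\Ball(u,j+1)|<n^{(j+1)/k}$. Each such charge adds at least $n^{j/k}$ to that ball. So $(u,j)$ is charged at most $n^{1/k}+1$ times. Summing over $u$ and over $j<R$ gives $O(Rn^{1+1/k})=O(sn^{1+1/k})$. This is already the round-based bound mentioned in the overview.

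\emph{Step 3 (removing the factor $R$; main obstacle).} To remove the dependence on $s$ I would use a single potential per vertex instead of one budget per level. The key extra observation is a stronger disjointness. For $j\le R-1$ and any $j'\le R$ we have $j+j'\le 2R-1\le s<\dist_H(u,v)$. So $\Ball_H(v,j)$ is disjoint from $\Ball_H(u,j')$ for every $j'\le R$, not only for $j'=j+1$. Consequently a level-$j$ charge at $u$ increases every ball $\Ball(u,j')$ with $j<j'\le R$ by at least $n^{j/k}$.

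I would then combine this with the fact that levels only increase, and that a vertex's level becomes $j$ only after its level-$(j-1)$ budget is used up. The aim is an amortization showing the total number of charges at $u$ is $O(n^{1/k})$. One option is to measure progress by $\log|\Ball(u,R)|$ weighted appropriately. Another is to show that after roughly $n^{1/k}$ charges at some level, $u$ skips levels fast enough to make the total over all levels telescope.

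A first attempt is a potential such as
\[
\Phi(u)=\sum_{j'\le R}\frac{\min\{|\Ball(u,j')|,n^{j'/k}\}}{n^{(j'-1)/k}}\cdot c_{j'}
\]
with geometrically decaying weights $c_{j'}$. The hope is that each charge raises $\Phi(u)$ by $\Omega(1)$ while $\Phi(u)=O(n^{1/k})$. I expect this calibration, where the $R$ separate budgets are made to share one budget, to be the genuinely delicate part. If it fails, I would fall back on exploiting sequentiality. The last edge added at $u$ before $u$ becomes fully clustered already forces large balls at all lower radii, so only $O(n^{1/k})$ charges per vertex can be "new", and the remaining charges should be paid by the other endpoint $v$ through its symmetric ball growth from \Cref{lem:ne-exchange}(b).
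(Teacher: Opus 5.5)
\textbf{Genuine gap: Step~3 does not remove the factor $s$, and its target is false.}

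Your Steps~1--2 are correct. However, they only give $O(s\,n^{1+1/k})$. This is exactly the round-by-round argument the paper uses for the parallel greedy spanner (\Cref{prop:parallel}). The whole content of the proposition is removing that factor, and Step~3 does not do this. It is a plan rather than a proof, and its target, that each vertex is charged only $O(n^{1/k})$ times, is false.

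First, every fact you invoke in Step~3 holds verbatim for the parallel process: monotone levels, disjointness of $B_H(v,j)$ from $B_H(u,j')$ for all $j'\le R$, and growth of each $B(u,j')$, $j<j'\le R$, by $n^{j/k}$. In that process the hypercube example after \Cref{prop:parallel} gives $\Theta(k)$ charges per vertex on average while $n^{1/k}=2$. So no per-vertex potential built only from these facts can be $O(n^{1/k})$ with an absolute constant.

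Second, even sequentially, a single vertex can be charged $\Omega(R\,n^{1/k})$ times. Take $N=n^{1/k}$ an integer $\ge 2$.
\begin{itemize}
\item Let the algorithm build a fresh complete $N$-ary tree of depth $\ell+1$, where $\ell<R$ is the current level of $u$. Each tree edge attaches a new isolated vertex, so it is always added.
\item Then present the edge from $u$ to the root of that tree. It is added, and $u$ is its strictly lower-level endpoint.
\item Let $T_d$ be the number of attached trees of depth $d$. When $u$ first has level $\ge L$ (for $L\le R$), only trees with $d\le L$ are present. Each contributes at most $2N^{\min(d,L-1)}$ vertices to $B(u,L)$, so $N^L\le 1+2\sum_{d\le L}T_dN^{\min(d,L-1)}$.
\item Dividing by $N^{L-1}$ and summing over $L\le R$ gives $\sum_d T_d\ge R(N-1)/6$. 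For example, with $s=k\ge 6$ the whole construction fits in $n$ vertices.
\end{itemize}
So only an \emph{average} bound holds. Your fallback of shifting charges to the other endpoint is not developed into an argument that could capture this.

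\textbf{The missing idea: girth.} The argument has to be global, and this is where sequentiality enters.
\begin{itemize}
\item An edge is added only when its endpoints are at distance $>s$ in the current $H$. Hence the output has girth $>s+1$: the last-added edge of a cycle of length $\le s+1$ would have closed a path of length $\le s$.
\item Take a subgraph $G'$ of the output whose minimum degree is at least half the average degree, and let $e$ be the edge of $G'$ added last.
\item Suppose $G'$ had minimum degree $\ge n^{1/k}+2$. Then $G'-e$ has minimum degree $\ge n^{1/k}+1$ and girth $>2\lceil s/2\rceil$. So its balls of radius $r\le\lceil s/2\rceil$ are trees with at least $n^{r/k}$ vertices.
\item Thus both endpoints of $e$ are fully clustered in $G'-e$. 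That is a subgraph of the graph present when $e$ was considered, which contradicts condition~(2).
\end{itemize}
Therefore the average degree is below $2n^{1/k}+4$. This gives $O(n^{1+1/k})$ edges with no dependence on $s$ or $k$; this is \Cref{lem:key-lem-boost-girth} combined with the girth observation. Your instinct about ``the last edge'' is the right one, but it must be applied to the last edge of a dense core, not vertex by vertex.
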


For the proof we will need the following:

\begin{lemma}\label{lem:key-lem-boost-girth}
     Let $G=(V,E)$ be a graph with $n$ vertices, and $E=\{e_1,\ldots,e_t\}$ with $e_i=\{u_i,v_i\}$. Let $G_i$ be the graph $(V,\{e_1,\ldots,e_{i-1}\})$.
     Assume moreover the following holds:

     \begin{enumerate}
         \item $G$ has girth $>s+1$.

         \item For every $i$, either $u_i$ or $v_i$ does not have $\lceil \frac{s}{2} \rceil $-cluster in $G_i$.
     \end{enumerate}
     Then $t=O(n^{1+\frac1k})$.
 \end{lemma}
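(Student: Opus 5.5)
The plan is to charge every edge $e_i=\{u_i,v_i\}$ to a pair (vertex, level) whose ball grows substantially when $e_i$ is added. Two things make this work. Balls only grow over time, and a ball $\Ball(w,r)$ can grow by $\gtrsim n^{(r-1)/k}$ at most $O(n^{1/k})$ times before it reaches size $n^{r/k}$ and stops being a bottleneck. Write $R=\lceil s/2\rceil$.

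\textbf{Step 1 (distance).} First I show that $\dist_{G_i}(u_i,v_i)>s$. Otherwise a path of length $\le s$ in $G_i$ together with $e_i$ would close a cycle of length $\le s+1$ in $G$, contradicting the girth assumption. Hence the neighborhood exchange lemma (\Cref{lem:ne-exchange}) applies to $e_i$ in $H=G_i$, $H'=G_{i+1}$ for every $\ell\le R-1$. Parts (a) and (b) together give, for $\ell\le R-1$,
\[
|\Ball_{G_{i+1}}(u_i,\ell+1)|\ \ge\ |\Ball_{G_i}(u_i,\ell+1)|+|\Ball_{G_i}(v_i,\ell)|,
\]
and the symmetric inequality with $u_i,v_i$ swapped.

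\textbf{Step 2 (choosing the charged pair).} By hypothesis one endpoint, say $u=u_i$, has no $R$-cluster in $G_i$. Let $r\in[1,R]$ be minimal with $|\Ball_{G_i}(u,r)|<n^{r/k}$, so $u$ has an $(r-1)$-cluster. Let $v=v_i$. There are two cases.
\begin{itemize}
\item If $v$ has an $(r-1)$-cluster in $G_i$, I charge $e_i$ to $(u,r)$. By Step 1 with $\ell=r-1$, the ball $\Ball(u,r)$ grows by at least $n^{(r-1)/k}$.
\item Otherwise let $r'<r$ be minimal with $|\Ball_{G_i}(v,r')|<n^{r'/k}$, and charge $e_i$ to $(v,r')$. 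Since $u$ has an $(r-1)$-cluster and $r'-1<r-1$, Step 1 with $\ell=r'-1$ shows that $\Ball(v,r')$ grows by at least $|\Ball_{G_i}(u,r'-1)|\ge n^{(r'-1)/k}$.
\end{itemize}
In both cases the charged pair $(w,\rho)$ satisfies $|\Ball_{G_i}(w,\rho)|<n^{\rho/k}$ before the step, and the step increases $|\Ball(w,\rho)|$ by at least $n^{(\rho-1)/k}$. Since balls never shrink, each pair $(w,\rho)$ can be charged at most $n^{\rho/k}/n^{(\rho-1)/k}+1=O(n^{1/k})$ times.

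\textbf{Step 3 (summing).} Summing over $n$ vertices and $R$ levels already gives $t=O(R\,n^{1+1/k})$. Getting rid of the factor $R$ to reach the stated $O(n^{1+1/k})$ is the step I expect to be the main obstacle. My first attempt would replace the per-level counting with a single potential per vertex. The candidate is
\[
\Phi(w)=\max_{\rho}\ \frac{\min\bigl(|\Ball(w,\rho)|,\,n^{\rho/k}\bigr)}{n^{(\rho-1)/k}}.
\]
Alternatively I would use the girth ($>s+1\ge 2R$) more strongly. In that regime the radius-$R$ balls around $u$ and $v$ are trees glued along $e_i$. A charge at level $\rho$ then propagates: the ball of the charged vertex grows at every level $\rho'\ge\rho$, by at least $|\Ball(\cdot,\rho'-1)|$ of the other endpoint. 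This should let each edge be charged to the \emph{highest} level still below threshold, so that each vertex accumulates only $O(n^{1/k})$ charges in total. Failing that, I would weight levels geometrically so that the per-vertex sum over $\rho$ telescopes to $O(n^{1/k})$.
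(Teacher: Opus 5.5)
\textbf{There is a genuine gap.} Steps 1 and 2 are correct, but they only prove $t=O(\lceil s/2\rceil\,n^{1+1/k})$. That is the same per-vertex, per-level charging the paper uses for the parallel greedy spanner in \Cref{prop:parallel}. The whole content of the lemma is removing the factor $\lceil s/2\rceil$, and your Step 3 leaves this open.

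Your suggested repairs cannot close the gap if they rest only on the facts you have used. Consider the parallel greedy process on the hypercube $Q_k$ (the example after \Cref{prop:parallel}), with $s=2k-1$. There:
\begin{itemize}
\item every vertex $x$ receives exactly one edge $\{x,x'\}$ per round, and $x'$ lies in a different component of the pre-round graph $H_j$;
\item no vertex is $k$-clustered before any round;
\item ball sizes evolve by the exact rule $|\Ball_{H_{j+1}}(x,\ell+1)|=|\Ball_{H_j}(x,\ell+1)|+|\Ball_{H_j}(x',\ell)|$ for every $\ell$ (Pascal's rule).
\end{itemize}
The last item is precisely the tree-like propagation you hope to exploit. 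Yet $kn/2=\Omega(kn^{1+1/k})$ edges are added. So a potential $\Phi$, a geometric reweighting of levels, or a ``charge to the highest deficient level'' rule would prove a false bound on this example, if it is justified only by the exchange inequalities and the ball histories of the two endpoints. The girth has to be used globally, not merely to certify $\dist_{G_i}(u_i,v_i)>s$.

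The missing idea is the paper's densest-core argument:
\begin{itemize}
\item Peel low-degree vertices of $G$ to get a subgraph $G'$ with minimum degree $D'\ge D/2$, where $D$ is the average degree.
\item Suppose $D'\ge n^{1/k}+2$. Let $e_i$ be the last edge of $G'$ in the given order and delete it. The resulting $G''$ is contained in $G_i$ and has minimum degree at least $n^{1/k}+1$.
\item The girth exceeds $s+1\ge 2\lceil s/2\rceil$, so BFS in $G''$ is a tree up to radius $\lceil s/2\rceil$. Hence $|\Ball_{G''}(w,\ell)|\ge n^{\ell/k}$ for every $w$ and every $\ell\le\lceil s/2\rceil$.
\item So both endpoints of $e_i$ are already $\lceil s/2\rceil$-clustered in $G_i$, contradicting hypothesis~2.
\end{itemize}
Therefore $D<2n^{1/k}+4$ and $t=nD/2=O(n^{1+1/k})$, uniformly in $s$.
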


\begin{proof}
    Let $D$ be the average degree of $G$. Recall that we can always find an induced subgraph $G'=(V',E')$ of $G$ of minimum degree $D'$ with $\frac{D}{2}\leq D'$, by iteratively removing low degree vertices. We claim $D' < n^{\frac{1}{k}}+2$. Assume towards contradiction that $D'\geq  n^{\frac{1}{k}}+2$. Let $e$ be the last edge from $E(G)$ appearing in $G'$, and  $G''=(V',E'-e)$, and observe that clearly we have that the minimum degree of $G''$ is at least $n^{\frac{1}{k}}+1$. Since $G$ has girth $>s+1$ the same holds for the subgraph $G''$. Moreover since $G'$ has girth $>s+1$, the BFS tree around every vertex has no collisions up to radius $\lceil \frac{s}{2} \rceil $. We obtain that for any vertex $v\in V', \ell\leq \lceil \frac{s}{2} \rceil $ we have \[|B_{G''}(v,\ell)| \geq 1+ \sum_{i=0}^{\ell-1} D'' \cdot (D''-1)^{i} \geq 1+n^{\frac1k}\sum_{i=1}^{\ell-1} n^{i/k} \geq n^{\ell/k}.\]
    In particular it means that any vertex in $G''$ is $\lceil \frac{s}{2} \rceil$-clustered including the endpoints of $e$ -- contradiction to assumption 2 of the lemma.
    Hence we conclude that $\frac{D}{2}\leq D' \leq n^{\frac{1}{k}}+2$. As $D=\frac{2|E|}{|V|}$ we obtain $|E|= O(n^{1+\frac{1}{k}})$.

\end{proof}

\begin{proof}[Proof of \Cref{lem:boosting_upper_bound}.]
    Consider the graph on vertex set $V(G)$ induced by all the edges added by \Cref{alg:boost_proc}. We claim that this graph has girth $>s+1$. 

    Assume the contrary: there is a cycle of length $\leq s+1$. Consider the edge $e_i = \{u_i, v_i\}$ that was the last edge in this cycle added by the algorithm. Then all other edges of the cycle were already present in $H$ at the moment when the algorithm considered $e_i$ and they formed a path between $u_i$ and $v_i$, and so $\dist_H(u_i, v_i) \leq s$, which contradicts condition (1) of adding $e_i$.
    
    Now we see that the sequence of edges added by \Cref{alg:boost_proc} satisfies the conditions of \Cref{lem:key-lem-boost-girth}: the high girth condition is just shown and the non-clustered condition is implied by condition (2) of adding an edge. So the desired bound follows. 
\end{proof}

Now we will state and prove a result for the parallel greedy spanner. The parallel greedy spanner is defined algorithmically as follows.

\begin{algorithm}[h]
\caption{\textsc{Greedy parallel $1$ $\to$ $2k-1$ Spanner}$(G,k, \{\mathcal{M}_i\}_{i=1}^t)$}\label{alg:parallel}

$H \leftarrow (V,\emptyset)$\;

\For{$i = 1,..., t$}{
    $S \leftarrow \emptyset$
    
    \For{$e = uv \in \mathcal{M}_i$}{
        \If{$\dist_H(u,v)>2k-1$}{
        $S \leftarrow S\cup \{e\}$
        }
    }
    
    $E(H) \leftarrow E(H)\cup S$
}
\Return{$H$}\;

\end{algorithm}

\begin{definition}[Parallel greedy $1\to 2k-1$ spanner]
    In a graph $G$ on $n$ vertices fix some (ordered) sequence of matchings $\mathcal{M}_1, ..., \mathcal{M}_t$. We call the output $H$ of the Algorithm~\ref{alg:parallel} on input $(G,k, \{\mathcal{M}_i\}_{i=1}^t)$ a parallel greedy $1\to 2k-1$ spanner of $G$.
\end{definition}

We shall prove the following:

\begin{proposition}\label{prop:parallel}
    Any parallel greedy $1\to 2k-1$ spanner of an $n$-vertex graph $G$ has at most $O(kn^{1+\frac{1}{k}})$ edges.
\end{proposition}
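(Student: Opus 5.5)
We will prove the bound by a charging argument: each added edge is charged to one of its endpoints, and we show via \Cref{lem:ne-exchange} that each vertex is charged only $O(n^{1/k})$ times per clustering level. Since there are only $k$ levels, this gives $O(kn^{1+\frac1k})$ edges in total. Round by round clustering is what makes this work; we do not rely on girth, because edges added in the same round may jointly close short cycles.

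\textbf{Setup.} Set $s=2k-1$, so that $\lceil s/2\rceil = k$. Let $H_i$ be the spanner before round $i$, and let $S_i\subseteq \mathcal{M}_i$ be the edges added in round $i$. Define $c_i(x)$ to be the $\ell$ for which $x$ is $\ell$-clustered in $H_i$. Balls in $H_i$ only grow with $i$, so $c_i(x)$ is nondecreasing in $i$.

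\textbf{Levels are below $k$.} Take $e=\{u,v\}\in S_i$, so $\dist_{H_i}(u,v)>2k-1$. By \Cref{cl:no-k-clustered}, neither endpoint is $k$-clustered in $H_i$: a $k$-clustered $u$ would give $\dist_{H_i}(u,v)\le k\le 2k-1$. Hence $\ell:=\min(c_i(u),c_i(v))\le k-1$. Charge $e$ to an endpoint attaining this minimum, say $u$, at level $\ell$.

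\textbf{Each charged edge grows a ball at $u$.} We have $\ell\le k-1=\lceil s/2\rceil -1$, so \Cref{lem:ne-exchange}(a) applies and gives $\Ball_{H_i}(v,\ell)\cap \Ball_{H_i}(u,\ell+1)=\emptyset$. Part (b), used with $H_i\cup\{e\}\subseteq H_{i+1}$, gives $\Ball_{H_{i+1}}(u,\ell+1)\supseteq \Ball_{H_i}(v,\ell)$. Since $c_i(v)\ge \ell$, we have $|\Ball_{H_i}(v,\ell)|\ge n^{\ell/k}$. So the single edge $e$ adds at least $n^{\ell/k}$ new vertices to $\Ball(u,\ell+1)$ between rounds $i$ and $i+1$.

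Other edges added in the same round only enlarge balls further, so they do not hurt. Because $S_i$ is a matching, $u$ receives at most one charged edge per round. Therefore these increments come from distinct rounds and add up.

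\textbf{Counting per level.} Fix $u$ and $\ell$. While $c(u)=\ell$, we have $|\Ball(u,\ell+1)|<n^{(\ell+1)/k}$. Each charged edge at level $\ell$ adds at least $n^{\ell/k}$ fresh vertices to this ball. So $u$ can be charged at level $\ell$ at most $n^{1/k}+1$ times; the final charge may push the ball past the threshold. Because $c(u)$ never decreases, $u$ is charged at most $k(n^{1/k}+1)$ times over all levels $\ell\in\{0,\dots,k-1\}$. Summing over the $n$ vertices gives $|E(H)|=O(kn^{1+\frac1k})$.

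\textbf{Arboricity.} Orient every edge toward the endpoint it is charged to. Every vertex then has out-degree at most $k(n^{1/k}+1)$, so the arboricity is $O(kn^{1/k})$, which gives the second part of \Cref{prop:parallel-greedy-spanner-intro}.

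\textbf{Main obstacle.} The delicate step is guaranteeing that the vertices credited to $u$ are genuinely new and never counted twice. This rests on two facts: the disjointness in \Cref{lem:ne-exchange}(a) must be measured at the start of the round, in $H_i$, and the matching structure allows only one charged edge per vertex per round. It is also essential to charge to the \emph{less} clustered endpoint, so that the other endpoint's $\ell$-ball is large enough to pay.
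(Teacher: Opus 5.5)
Your proof is correct and is essentially the paper's own argument: charge each added edge to its less-clustered endpoint $u$ (at level $\ell\le k-1$ by \Cref{cl:no-k-clustered}), use both parts of \Cref{lem:ne-exchange} with $s=2k-1$ to place $\Ball_{H_i}(v,\ell)$ inside $\Ball_{H_{i+1}}(u,\ell+1)\setminus \Ball_{H_i}(u,\ell+1)$, and use the matching structure to get at most one charge per vertex per round, hence $O(n^{1/k})$ charges per vertex per level. One small slip in your arboricity paragraph, which goes beyond the stated proposition: orienting each edge \emph{toward} its charged endpoint bounds the \emph{in}-degree by $k(n^{1/k}+1)$, not the out-degree, so either reverse the orientation or say in-degree, as the paper's remark does.
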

\begin{proof}
    We first note that by \Cref{cl:no-k-clustered},  \Cref{alg:parallel} never adds an edge that contains a $k$-clustered vertex (we assume $k>1$ otherwise the process is trivial). This shows that the parallel spanner is the same as the result of the following procedure.
    Take \Cref{alg:boost_proc} with $s = 2k-1$ and modify it as follows: instead of a sequence of edges use a sequence of matchings in $G$ and on each step add all the edges from the current matching that are boosting. Since now we add many edges simultaneously, the girth argument from the previous proof doesn't work and we will use the neighborhood exchange lemma instead. 

    Suppose that on the step $i$ we add an edge $uv$, where $u$ is $\ell_u$-clustered and $v$ is $\ell_v$-clustered and $\ell_u \leq \ell_v$, in such case we say that a boost occur at the vertex $u$. By definition of being $\ell_u$-clustered, $\lvert B_{H_i}(u,\ell_u + 1)\rvert < n^{\frac{\ell_u + 1}{k}}$, where $H_i$ is the current $H$ before the $i$-th step. On the other hand, recall that by the \Cref{cl:no-k-clustered}, $\ell_u \leq k-1$, and so by the Lemma~\ref{lem:ne-exchange} with $s = 2k-1$, $B_{H_i}(v,\ell_u) \subseteq B_{H_{i+1}}(u,\ell_u+1) \backslash B_{H_{i}}(u,\ell_u+1)$, and so $\lvert B_{H_{i+1}}(u,\ell_u+1)\rvert - \lvert B_{H_{i}}(u,\ell_u)\rvert \geq \lvert B_{H_i}(v,\ell_u)\rvert \geq n^{\frac{\ell_u}{k}}$, where the last inequality follows from $v$ being $\ell_v \geq \ell_u$-clustered.
    
    Such a thing can clearly occur to each vertex $u$ for fixed $\ell_u < k$ on at most $n^{\frac{1}{k}}$ many iterations of \Cref{alg:parallel}  until $u$ gets $(\ell_u + 1)$-clustered, so there are  $O(kn^{\frac{1}{k}})$ iterations for which $u$ is boosted in total (a vertex can't be $>k$ clustered). Since the edges added on each iteration are vertex disjoint, the number of boosts on each step is at least the number of added edges, and so the number of added edges is bounded by $n\cdot O(kn^{\frac{1}{k}})$ as required.
\end{proof}

\begin{remark}
    We highlight that the proof above also shows that the \emph{arboricity}\footnote{The arboricity of a graph $G$ is the minimum integer $a$, such that $G$ is a union of $a$ forests. It is well known to be equivalent (up to a constant factor), to the minimum $b$ such that the edges of $G$ can be oriented s.t. every vertex has $\leq b$ ingoing edges.} of the output graph obtained by the parallel greedy spanner is $O(kn^{1/k})$. This is evident by orienting the edge $e=\{u,v\}$ from the non-boosted vertex to the boosted one (with ties broken arbitrarily), and the aforementioned bound on the number of steps in which a vertex $v$ can be boosted throughout the process.  
\end{remark}

\paragraph{A Limitation for Removing the $k$ Factor Above:} We note that the upper bound above can't get sharper once $k>\log(n)$, hence one may also assume $k=O(\log(n))$. We note that in general the $k$ factor can't be removed (in contrast to the standard greedy algorithm). Consider $n=2^k$, and the $n$ vertex graph of the $k$-hypercube $Q_k$ and $\mathcal{M}_i$ to be the edges parallel to $e_i$. Then the endpoints of each edge in the new matching are disconnected in the current spanner, and so the algorithm adds all $k\cdot 2^k= \Omega(kn^{1+\frac{1}{k}})$ edges.

\paragraph{Spanner for Path Collections}
For the construction of the FT spanners we construct we actually need to analyze a slightly more general greedy $d\to r$ spanner for \emph{path collections}. This is mainly for a technical reasons related to handling faults elaborated in \Cref{rmk:ft-path-collection}.
Let $\mathcal{P}$ be a collection of paths of length $d$ on the vertex set $V$. For a sub-collection $\pp'$ denote by $G'=(V,\cup_{P' \in \pp'} E(P'))$. A subcollection $\pp'$ is $d\to r$-spanner for $\pp$ if $\forall P \in \pp$:  $\dist_{G'}(x(P),y(P))\leq r$. The size of the spanner is just $|\pp'|$. 

\noindent We now describe the corresponding greedy construction of spanners for path collections.

\begin{algorithm}[h]
\caption{\textsc{Greedy $d$ $\to$ $r$ Spanner for path collections}$(V,\mathcal{P},r)$}\label{alg:greedy-spanner-path-collections}
$H \leftarrow (V,\emptyset)$\;

\For{$P\in \mathcal{P}$ with endpoint $x,y$}{
  \If{$\dist_H(x,y)>r$}{
    Add $E(P)$ to $H$\;
  }
}
\Return{$H$}\;
\end{algorithm}

\begin{observation}\label{obs:path-collections-imply-graph-spanners}
    If an algorithm returns for any path collection $\pp$ on $n$ vertices a $d\to r$ spanner of size $\leq b(n)$, then there is an algorithm that given a graph $G$ outputs a $d\to r$ spanner with $\leq d b(n)$ many edges.
\end{observation}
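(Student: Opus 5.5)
The reduction is direct: from the graph $G$ we build a path collection, apply the assumed algorithm to it, and output the union of the edges of the chosen paths. Let $\pp$ consist of one shortest path $P_{x,y}$ in $G$ for each pair $x,y\in V$ with $\dist_G(x,y)=d$; each such path has exactly $d$ edges. (One could take all $d$-paths between such pairs instead, but one per pair is enough.) This $\pp$ is a path collection of length-$d$ paths on the $n$-vertex set $V$. The assumed algorithm returns a subcollection $\pp'$ with $|\pp'|\le b(n)$ such that the graph $G'=(V,\bigcup_{P'\in\pp'}E(P'))$ satisfies $\dist_{G'}(x(P),y(P))\le r$ for every $P\in\pp$.

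We then output $H=G'$, and two things need checking.

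\textbf{$H$ is a subgraph of $G$.} Every path in $\pp$ is a path in $G$, so $E(H)\subseteq E(G)$.

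\textbf{$H$ is a $d\to r$ spanner.} Take $x,y$ with $\dist_G(x,y)=d$. By construction some $P\in\pp$ has endpoints $x,y$, so $\dist_H(x,y)\le r$.

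For the size, each path in $\pp'$ contributes at most $d$ edges, so $|E(H)|\le d\,|\pp'|\le d\,b(n)$.

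No step here is a real obstacle. The only care needed is that the collection contains a path for every pair at distance exactly $d$, and that $b(n)$ depends only on the number of vertices $n$, not on the size of the collection; both hold by construction and by the hypothesis. Computing $\pp$ takes polynomial time via BFS, so the reduction keeps polynomial running time whenever the assumed algorithm is polynomial.
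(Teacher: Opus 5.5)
Your proof is correct and is the same reduction the paper uses. The only difference is the input collection: the paper feeds in $\pp_d(G)$, all length-$d$ paths of $G$, whereas you take one shortest path per pair at distance exactly $d$. Your choice works equally well because $b(n)$ depends only on $n$, and it has the small bonus of keeping the collection to at most $n^2$ paths.
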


\begin{proof}
    Given a graph $G$ apply the let $\pp_d(G)$ be all length $d$ paths in $G$. Let $\pp'$ be a $d\to r$ spanner of $\pp_d(G)$, then $(V,\cup_{P'\in \pp'} E(P') )$ is a $d \to r$ spanner of $G$ of the required size.
\end{proof}
In light of this, it is enough to obtain bounds for spanner for path collections. The main distinction is that paths in $\pp$ do not have to be shortest in any form in the definition above.

\section{Analysis of the Greedy $2\to 2k$ algorithm}
\label{sec:analysis}

The main goal of this section is to prove the following:

\begin{theorem}\label{thm:main}
The greedy $2 \to 2k$ algorithm for path collections, \Cref{alg:greedy-spanner-path-collections}, Outputs a $2 \to 2k$ spanner with $O\!\left(n^{1+\frac1k}\right)$  $2$-paths, and runs in polynomial time.
\end{theorem}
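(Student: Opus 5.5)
We follow the plan of the technical overview. Let $P_1,\dots,P_t$ be the $2$-paths added by \Cref{alg:greedy-spanner-path-collections} with $r=2k$, in order, and let $H_i$ be the graph just before $P_i$ is added. Write $P_i=(x_i,m_i,y_i)$. Since $\dist_{H_i}(x_i,y_i)>2k$, the triangle inequality implies that one of $\dist_{H_i}(x_i,m_i)$, $\dist_{H_i}(m_i,y_i)$ exceeds $k$. Reversing the path if necessary, we assume $\dist_{H_i}(x_i,m_i)>k$ and call $(x_i,m_i)$ the distant pair. Fix $s=k$ and $R=\lceil k/2\rceil$; a vertex is fully clustered if it has an $R$-cluster. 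Polynomial running time is immediate, since there are $O(n^3)$ candidate $2$-paths and each test is a BFS, so the whole task is the size bound. We split the paths into three classes and charge each class separately.

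\textbf{Class A (boosting).} Here $x_i$ or $m_i$ is not fully clustered in $H_i$. The subsequence of edges $\{x_i,m_i\}$ coming from class A paths can be coupled with a run of \Cref{alg:boost_proc} with $s=k$. Each such edge is present when it is considered, and the other added edges only shrink distances and grow balls, so at that moment the edge is distant and non-clustered with respect to $H_i$. The girth argument needs care: a short cycle formed only by class-A distant edges would contradict distance $>k$ in $H_i\supseteq$ those edges. So the class-A distant edges form a graph of girth $>k+1$ that satisfies condition 2 of \Cref{lem:key-lem-boost-girth}. Condition 2 is monotone: not being clustered in the bigger $H_i$ implies not being clustered in the subgraph of earlier class-A edges. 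Hence there are $O(n^{1+1/k})$ class A paths, provided distinct paths give distinct distant edges, which holds because the distance becomes $1$ afterwards.

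\textbf{Class B (lateral growth).} Here both $x_i$ and $m_i$ are fully clustered, and at least half of $B_{H_i}(m_i,R-1)$ lies outside $B_{H_i}(y_i,R)$. After adding $\{m_i,y_i\}$, all of $B_{H_i}(m_i,R-1)$ lies inside $B(y_i,R)$. So $|B(y_i,R)|$ grows by at least $\tfrac12 n^{(R-1)/k}$. We charge the path to $y_i$, but only while $|B(y_i,R)|<n^{R/k}$, which gives $O(n^{1/k})$ charges per vertex. Once $|B(y_i,R)|\ge n^{R/k}$, balls only grow further, so $y_i$ stays "clustered at radius $R$", and we fall into the path-buying argument of class C with $y_i$ in the role of the clustered endpoint.

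\textbf{Class C (path buying).} This is the main obstacle. Two subcases arise. In the first, $|B_{H_i}(y_i,R)|\ge n^{R/k}$. We consider the pairs in $B_{H_i}(x_i,\lfloor k/2\rfloor)\times B_{H_i}(y_i,R)$ and show that each is at distance $>k$ in $H_i$ but at most $\lfloor k/2\rfloor+2+R$ afterwards; the radii must be tuned, possibly shrinking one radius by one, so that the new distance is $\le k$ while the old distance is forced to exceed $k$ by $\dist_{H_i}(x_i,y_i)>2k$. In the second subcase, most of $B_{H_i}(m_i,R-1)$ is within $R$ of $y_i$. We then use the pairs $B(x_i,\cdot)\times B(m_i,R-1)$: their old distance is large because $\dist(x_i,m_i)$ and $\dist(x_i,y_i)$ are both large, while their new distance is small via the edge $\{x_i,m_i\}$. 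In both subcases at least roughly $n^{\lfloor k/2\rfloor/k}\cdot n^{(R-1)/k}\ge n^{(k-1)/k}$ pairs cross from distance $>k$ to distance $\le k$. Since a pair crosses at most once, this class has $O(n^2/n^{1-1/k})=O(n^{1+1/k})$ paths.

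I expect the hard part to be the exact radius bookkeeping for both parities of $k$. The goal is to ensure the product of ball sizes is $\Omega(n^{1-1/k})$ while keeping the before and after distances on opposite sides of $k$. If a fixed threshold fails, I would use the threshold "$>k$ versus $\le k$" for one subcase and a shifted one for the other, paying a constant factor in the count.
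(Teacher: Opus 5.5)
This is essentially the paper's proof. It uses the same distant-pair normalization and the same coupling of boosting steps with the greedy clustering procedure at $s=k$. It charges lateral boosts to $y_i$, at most $O(n^{1/k})$ times per vertex, and it counts the two remaining subcases by pairs crossing a distance threshold.

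Your first choice of radii in the first path-buying subcase does not work as written. It gives old distance $>k$ but new distance only $\le k+2$, so nothing crosses a threshold. Your fallback of shifting the threshold is exactly how the paper closes it.

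\emph{Subcase $|B_{H_i}(y_i,R)|\ge n^{R/k}$.} Use $B_{H_i}(x_i,\lfloor k/2\rfloor-1)\times B_{H_i}(y_i,R)$ with threshold $k+1$. Old distances exceed $2k-(k-1)=k+1$, and new ones are at most $(k-1)+2=k+1$. The product still has size at least $n^{(k-1)/k}$.

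\emph{Other subcase.} Restrict the $m_i$-side to $C=\{u\in B_{H_i}(m_i,R-1):\dist_{H_i}(y_i,u)\le R\}$, which has size at least $\tfrac12 n^{(R-1)/k}$. Pair it with $B_{H_i}(x_i,\lfloor k/2\rfloor)$ at threshold $k$. The old-distance bound $>2k-\lfloor k/2\rfloor-R=k$ comes from $\dist_{H_i}(x_i,y_i)>2k$ together with each $u\in C$ being within $R$ of $y_i$. The bound $\dist_{H_i}(x_i,m_i)>k$ on its own gives nothing useful here. The new distance is at most $\lfloor k/2\rfloor+1+(R-1)=k$ via the edge $\{x_i,m_i\}$.
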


By \Cref{obs:path-collections-imply-graph-spanners} this would also prove that any graph with $n$ vertices has a $2\to2k$ spanner of size $O(n^{1+\frac{1}{k}})$. 

On a high level, the proof is based on showing that whenever a $2$-path is added, distances in the current $H$ shrink in a structured way. Initially a path addition affects local geometry, but over time each addition produces a larger global effect.

Let $\PiPath=(x_i,m_i,y_i)$ denote the $i$-th $2$-path added by the algorithm, and let
\[
\Hi \;=\; \bigl(V,\;\bigcup_{j<i} P_j\bigr)
\]
be the subgraph consisting of all edges that were added \emph{before} adding the current path $\PiPath$. 
\begin{claim}\label{clm:split-k}
If $\PiPath=(x_i,m_i,y_i)$ is added, then either $\dist_{\Hi}(x_i,m_i)>k$ or $\dist_{\Hi}(m_i,y_i)>k$.
\end{claim}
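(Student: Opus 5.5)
The plan is a proof by contradiction using the triangle inequality in $\Hi$, combined with the rule by which \Cref{alg:greedy-spanner-path-collections} decides to add a path.

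First, recall when $\PiPath=(x_i,m_i,y_i)$ is added. When it is considered, the current graph $H$ is exactly $\Hi$: all previously added paths are present, and no later ones have been added yet. The algorithm adds $\PiPath$ only if $\dist_{\Hi}(x_i,y_i)>2k$.

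Second, suppose toward a contradiction that both $\dist_{\Hi}(x_i,m_i)\le k$ and $\dist_{\Hi}(m_i,y_i)\le k$. Concatenating a shortest $x_i$--$m_i$ walk with a shortest $m_i$--$y_i$ walk in $\Hi$ gives
\[
\dist_{\Hi}(x_i,y_i)\le \dist_{\Hi}(x_i,m_i)+\dist_{\Hi}(m_i,y_i)\le 2k.
\]
This contradicts the addition condition, so at least one of the two distances exceeds $k$.

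The only subtlety is making sure the distances are measured in $\Hi$, the graph \emph{before} $\PiPath$ is inserted, since the algorithm's test is performed there. This holds by the definition of $\Hi$. Note also that if some distance is infinite because its endpoints are disconnected, the inequality still holds trivially. There is no real obstacle here.
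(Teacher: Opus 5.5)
Your proof is correct and is essentially the paper's argument: the addition rule gives $\dist_{H_i}(x_i,y_i)>2k$, and the triangle inequality through $m_i$ then forces one of the two summands to exceed $k$. Phrasing it as a contradiction rather than directly is only a cosmetic difference.
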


\begin{proof}
Since $\PiPath$ was added, $\dist_{\Hi}(x_i,y_i)>2k$. By the triangle inequality,
\[
\dist_{\Hi}(x_i,m_i)+\dist_{\Hi}(m_i,y_i)\;\ge\;\dist_{\Hi}(x_i,y_i)\;>\;2k,
\]
so at least one of the two summands exceeds $k$.
\end{proof}

We call $(x_i,m_i)$ an \emph{$i$-distant pair} if $\dist_{\Hi}(x_i,m_i)>k$. Thus every added path yields at least one distant pair. We assume wlog $(x_i,m_i)$ is the distant pair, otherwise -- reverse the order of the path.

\paragraph{Local Growth Radius.}
Set
\[
R \;=\;
\begin{cases}
\frac{k+1}{2}, & \text{if $k$ is odd},\\[2pt]
\frac{k}{2},   & \text{if $k$ is even}.
\end{cases}
\]
Think of $R$ as the radius up to which we bound the \emph{local} effect of path additions. When $R$-balls become sufficiently large, we switch to a different, global counting argument. Set also $I_{odd}$ to be $1$ if $k$ is odd and $0$ otherwise. This technical quantities are used to unify the notation for two parities. Notice that $2R - I_{\mathrm{odd}} = k$ for all $k$.

\paragraph{Clusters.}
We use the definition of clusters from Section~\ref{sec:greedy-spanner-warm-up}, see Definition~\ref{def:clusters_greedy}, and we need the following addition to it to simplify the terminology.

\begin{definition}[Fully clustered vertices]\label{def:clusters}
If $v$ has an $R$-cluster, we say $v$ is \emph{fully clustered}.
\end{definition}

This coincides with the definition given in Section~\ref{sec:greedy-spanner-warm-up} with $s = k$.

\paragraph{Boosting Steps.}
An $i$-distant pair $(u,v)$ is \emph{boosting} if at least one of $\{u,v\}$ is \emph{not} fully clustered in $\Hi$. We call the step $i$ of the algorithm boosting if at least one $i$-distant pair is boosting. This coincides with the definition of boosting from Section ~\ref{sec:greedy-spanner-warm-up} with $s = k$.

\begin{lemma}\label{cor:total-boosting}
The total number of boosting steps over all vertices is $O\!\left(n^{1+\frac1k}\right)$.
\end{lemma}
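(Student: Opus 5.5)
We reduce the statement to \Cref{lem:key-lem-boost-girth} (equivalently, to \Cref{lem:boosting_upper_bound}) with $s=k$. For every boosting step $i$, fix one boosting $i$-distant pair; after the normalization above this is $(x_i,m_i)$. Let $e_i=\{x_i,m_i\}$ be the corresponding edge. We get a sequence of edges $e_{i_1},e_{i_2},\ldots$ indexed by the boosting steps in order, and let $B$ be the graph they form. Each boosting step contributes exactly one edge, so it suffices to show $|E(B)|=O(n^{1+\frac1k})$ by verifying the two hypotheses of \Cref{lem:key-lem-boost-girth} for $B$ with this ordering.

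\textbf{Girth.} Since $B\subseteq H_j$ for every later index $j$, all edges $e_{i'}$ with $i'<i$ (taken over boosting steps) lie in $H_i$. Suppose $B$ has a cycle of length at most $k+1$, and let $e_i$ be its last edge in the order. Then the rest of the cycle is a path in $H_i$ from $x_i$ to $m_i$ of length at most $k$. This contradicts $\dist_{H_i}(x_i,m_i)>k$. The same argument rules out the edge $e_i$ being repeated (a ``cycle'' of length $2$), so $B$ is simple with girth $>k+1=s+1$.

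\textbf{Non-clustering.} We must show that one endpoint of $e_i$ has no $\lceil k/2\rceil$-cluster in $B_i=(V,\{e_{i'}:i'<i\})$. Since $(x_i,m_i)$ is boosting, one endpoint is not fully clustered in $H_i$, i.e.\ has no $R$-cluster in $H_i$. Because $B_i\subseteq H_i$, balls in $B_i$ are contained in balls in $H_i$, so that endpoint also has no $R$-cluster in $B_i$.

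The main subtlety is a parity mismatch. The lemma uses radius $\lceil s/2\rceil=\lceil k/2\rceil$, while here $R=\lceil k/2\rceil$ as well: $R=(k+1)/2$ for odd $k$ and $R=k/2$ for even $k$. So the two radii agree and the hypothesis holds directly; the paper indeed remarks that the boosting definition coincides with that of Section~\ref{sec:greedy-spanner-warm-up} for $s=k$. Applying \Cref{lem:key-lem-boost-girth} then gives $|E(B)|=O(n^{1+\frac1k})$, which bounds the number of boosting steps.
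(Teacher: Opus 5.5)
Your proof is correct and is essentially the paper's argument. The paper feeds the boosting edges, in order, to \Cref{alg:boost_proc} with $s=k$. It observes, via subgraph monotonicity of distances and clusters, that every one of them is added, and then invokes \Cref{lem:boosting_upper_bound}; you inline that proposition's girth argument and verify the hypotheses of \Cref{lem:key-lem-boost-girth} directly, including the correct check that $R=\lceil k/2\rceil$.
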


\begin{proof}
    It follows from the definitions and the fact that having a cluster in the subgraph implies having a cluster in the ambient graph that if we invoke the \Cref{alg:boost_proc} with $s = k$ on the list of all boosting edges in order of addition it will add all of them. Then \Cref{lem:boosting_upper_bound} gives the desired bound.
\end{proof}

\begin{remark}\label{rem:mono}
Cluster sizes are (weakly) monotone: since $\Hi\subseteq H_{i+1}$, we have $\Ball_{\Hi}(v,r)\subseteq \Ball_{H_{i+1}}(v,r)$ for all $v,i,r$. Thus once $v$ has an $\ell$-cluster, it keeps it thereafter.
\end{remark}

\subsection{Lateral Clustering}
\label{subsec:lateral}

The discussion above accounts for boosting steps, which imply that the distant edge of each remaining path has clustered endpoints. In particular, for the argument to go through we would like to show that even the vertex on $P_i$ which \emph{doesn't participate} in the distant pair will have a big ball around it. To handle this we introduce:

\begin{itemize}[leftmargin=*,itemsep=2pt]
\item \emph{Lateral clusters}: formed around the \emph{not-necessarily-distant} endpoint of the newly added $2$-path.
\item A \emph{global} counting argument once clusters are full which resembles the path-buying technique.
\end{itemize}

\begin{definition}[Lateral clusters]\label{def:lateral}
A vertex $v$ has an $\ell$ \emph{lateral cluster} in $\Hi$ if $\lvert \Ball_{\Hi}(v,\ell)\rvert \ge n^{\ell/k}$.
\end{definition}

From now on, when $\PiPath=(x_i,m_i,y_i)$ is not boosting and $(x_i,m_i)$ is the $i$-distant pair, we examine lateral clusters around $y_i$. 

\begin{definition}[Lateral boosting]\label{def:lateral-boost}
We say a \emph{lateral boost} occurs at step $i$ (for the non-distant endpoint $y_i$) if:
\begin{enumerate}[label=(\roman*),leftmargin=*,itemsep=2pt]
\item $y_i$ is not $R$-laterally clustered in $\Hi$; and
\item $\bigl|\{\,u\in \Ball_{\Hi}(m_i,R-1):\ \dist_{\Hi}(y_i,u)>R\,\}\bigr|\ \ge \ \tfrac12\,n^{(R-1)/k}$.
\end{enumerate}
In this case, we say $y_i$ is \emph{laterally boosted}.
\end{definition}

\begin{claim}\label{clm:lateral-growth}
If $y_i$ is laterally boosted at step $i$, then
\[
\bigl|\Ball_{H_{i+1}}(y_i,R)\setminus \Ball_{\Hi}(y_i,R)\bigr|\ \ge\ \tfrac12\,n^{(R-1)/k}.
\]
\end{claim}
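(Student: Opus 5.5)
The plan is to exhibit an explicit set of at least $\tfrac12 n^{(R-1)/k}$ vertices that lie in $\Ball_{H_{i+1}}(y_i,R)$ but not in $\Ball_{\Hi}(y_i,R)$. The natural candidate is the set
\[
U \;=\; \bigl\{\,u\in \Ball_{\Hi}(m_i,R-1):\ \dist_{\Hi}(y_i,u)>R\,\bigr\},
\]
which by condition (ii) of \Cref{def:lateral-boost} has $|U|\ge \tfrac12 n^{(R-1)/k}$. So it suffices to prove $U\subseteq \Ball_{H_{i+1}}(y_i,R)\setminus \Ball_{\Hi}(y_i,R)$.

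The second half of the inclusion is immediate from the definition of $U$. Every $u\in U$ satisfies $\dist_{\Hi}(y_i,u)>R$, so $u\notin \Ball_{\Hi}(y_i,R)$.

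For the first half, note that $H_{i+1}=\Hi\cup E(\PiPath)$, so the edge $\{m_i,y_i\}$ belongs to $H_{i+1}$. We also have $\Hi\subseteq H_{i+1}$ (\Cref{rem:mono}). Hence for $u\in U$ the triangle inequality gives
\[
\dist_{H_{i+1}}(y_i,u)\le \dist_{H_{i+1}}(y_i,m_i)+\dist_{H_{i+1}}(m_i,u)\le 1+\dist_{\Hi}(m_i,u)\le 1+(R-1)=R,
\]
so $u\in\Ball_{H_{i+1}}(y_i,R)$. This is the same one-edge argument as part (b) of \Cref{lem:ne-exchange}, applied to the edge $\{m_i,y_i\}$, with $\ell=R-1$. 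It does not need the distance hypothesis of that lemma, because here disjointness from the old ball comes directly from the definition of $U$ rather than from part (a).

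Combining the two inclusions gives $\bigl|\Ball_{H_{i+1}}(y_i,R)\setminus \Ball_{\Hi}(y_i,R)\bigr|\ge |U|\ge \tfrac12 n^{(R-1)/k}$. There is no real obstacle here; the one step to state carefully is that the edge $\{m_i,y_i\}$ of $\PiPath$ is present in $H_{i+1}$. Condition (i), that $y_i$ is not $R$-laterally clustered, is not needed for this claim and only matters later, when these gains are counted.
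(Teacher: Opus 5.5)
Your proof is correct and is essentially the paper's argument. The set from condition (ii) lies outside $\Ball_{\Hi}(y_i,R)$ by definition, and it lies inside $\Ball_{H_{i+1}}(y_i,R)$ by concatenating the new edge $\{y_i,m_i\}$ with a path of length at most $R-1$ in $\Hi$; your remark that condition (i) plays no role here is also accurate.
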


\begin{proof}
All vertices in $\{u\in \Ball_{\Hi}(m_i,R-1):\ \dist_{\Hi}(y_i,u)>R\}$ lie outside $\Ball_{\Hi}(y_i,R)$ but lie inside $\Ball_{H_{i+1}}(y_i,R)$ after adding $P_i$, by concatenating $y_i\!\to\!m_i$ and a shortest path in $H_i$ from $m_i$ to $u\in \Ball_{\Hi}(m_i,R-1)$.
\end{proof}

Therefore we have:

\begin{claim}\label{clm:lateral-bound-per-vertex}
Every vertex $y$ can be laterally boosted at most $2n^{1/k}$ times.
\end{claim}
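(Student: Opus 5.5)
The plan is a potential argument on the size of the $R$-ball around $y$. Fix a vertex $y$ and consider the steps $i$ at which $y=y_i$ is laterally boosted. Let $\Phi_i=|\Ball_{\Hi}(y,R)|$. By \Cref{rem:mono}, $\Phi_i$ is weakly non-decreasing in $i$, since balls only grow as edges are added. By \Cref{clm:lateral-growth}, every lateral boost at $y$ increases $\Phi$ by at least $\tfrac12 n^{(R-1)/k}$. That is, $\Phi_{i+1}-\Phi_i\ge \tfrac12 n^{(R-1)/k}$ whenever $y$ is laterally boosted at step $i$.

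Next, condition (i) of \Cref{def:lateral-boost} requires that $y$ is not $R$-laterally clustered in $\Hi$, i.e.\ $\Phi_i<n^{R/k}$ at the moment of each lateral boost. Let $i_1<i_2<\dots<i_m$ be the lateral boosting steps at $y$. By monotonicity and the per-step increase, $\Phi_{i_m}\ge \Phi_{i_1}+(m-1)\cdot\tfrac12 n^{(R-1)/k}\ge (m-1)\tfrac12 n^{(R-1)/k}$. Since $\Phi_{i_m}<n^{R/k}$, we get
\[
m-1 < \frac{n^{R/k}}{\tfrac12 n^{(R-1)/k}} = 2n^{1/k},
\]
so $m\le 2n^{1/k}$ (taking integrality into account, $m-1<2n^{1/k}$ gives $m\le \lceil 2n^{1/k}\rceil$).

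The only delicate point is the off-by-one at the final count. To get exactly $2n^{1/k}$ rather than $2n^{1/k}+1$, I would instead use the post-step value. After the $m$-th boost, $\Phi_{i_m+1}\ge m\cdot\tfrac12 n^{(R-1)/k}$, counting from $\Phi\ge 0$. Also, since $\Ball_{H_{i_m+1}}(y,R)\subseteq V$, we have $\Phi_{i_m+1}\le n$.

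If this slack is not tight enough, I would fall back on the bound $m\le 2n^{1/k}+1=O(n^{1/k})$. That suffices for all later uses, which only need an $O(n^{1+1/k})$ total after summing over vertices.
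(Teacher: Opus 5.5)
Your argument is correct and is the same potential argument the paper uses: each lateral boost adds at least $\tfrac12 n^{(R-1)/k}$ new vertices to the $R$-ball around $y$ (Claim~\ref{clm:lateral-growth}), and condition (i) rules out any further boost once that ball has reached $n^{R/k}$. The paper's one-line proof, read with integrality in mind, gives the same $\lceil 2n^{1/k}\rceil$ you obtained, which is all Corollary~\ref{cor:lateral-count} needs; you should drop the ``post-step'' detour via $\Phi\le n$, since it only yields the weaker bound $m\le 2n^{1-(R-1)/k}$.
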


\begin{proof}
    If $y_i$ is laterally boosted then
\[
\big|\Ball_{H_{i+1}}(y_i,R)\setminus \Ball_{H_i}(y_i,R)\big|
\;\ge\; \tfrac12\,n^{(R-1)/k},
\]
hence any vertex has an $R$-ball of size $n^{R/k}$ after $2n^{1/k}$ lateral boosts.
\end{proof}

\begin{corollary}[Non-boosting steps that are laterally boosted]\label{cor:lateral-count}
The number of indices $i$ for which the $i$-distant pair is not boosting \emph{and} a lateral boost occurs is $O\!\left(n^{1+1/k}\right)$.
\end{corollary}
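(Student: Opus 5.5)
The plan is a per-vertex charging argument using \Cref{clm:lateral-bound-per-vertex}. Each step $i$ that is counted in the corollary has a distinguished non-distant endpoint $y_i$, and $y_i$ is laterally boosted at step $i$. We map every such step $i$ to the vertex $y_i$. The total count then equals $\sum_{y\in V}$ (number of steps at which $y$ is laterally boosted). By \Cref{clm:lateral-bound-per-vertex} each summand is at most $2n^{1/k}$, so the sum is at most $2n\cdot n^{1/k}=O(n^{1+1/k})$.

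Two points have to be checked so that the charging is legitimate.

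First, the notion of being ``laterally boosted'' must be well defined for each step. Since we fixed the orientation so that $(x_i,m_i)$ is the $i$-distant pair, $y_i$ is determined by $i$. If both $(x_i,m_i)$ and $(m_i,y_i)$ are distant, we simply use the chosen orientation. The map $i\mapsto y_i$ is therefore a function, and the preimage of any vertex $y$ consists exactly of the steps at which $y$ is laterally boosted.

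Second, the bound in \Cref{clm:lateral-bound-per-vertex} must apply across all these steps in sequence. This needs a small argument, which is the only mild obstacle. By \Cref{clm:lateral-growth}, each lateral boost of $y$ adds at least $\tfrac12 n^{(R-1)/k}$ new vertices to $\Ball(y,R)$. By \Cref{rem:mono}, balls never shrink between the boosts, including across intermediate steps in which $y$ plays other roles. Condition (i) of \Cref{def:lateral-boost} requires $|\Ball_{\Hi}(y,R)|<n^{R/k}$ at every boost. Hence after $j$ boosts the ball has size at least $j\cdot\tfrac12 n^{(R-1)/k}$, and this forces $j\le 2n^{1/k}+1$. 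That is still $O(n^{1/k})$, so summing over the $n$ vertices gives the claimed $O(n^{1+1/k})$.
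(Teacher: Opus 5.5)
Your proposal is correct and matches the paper's argument. The paper gives no separate proof of the corollary: it follows by charging each such step to its laterally boosted endpoint $y_i$ and summing the per-vertex bound of \Cref{clm:lateral-bound-per-vertex} over the $n$ vertices. Your extra check — monotonicity across intermediate steps via \Cref{rem:mono}, giving the slightly looser $2n^{1/k}+1$ count — just makes explicit what the paper leaves implicit.
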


\bigskip

We now bound all \emph{remaining} non-boosting steps (steps that do not have a boosting distant pair, and do not trigger a lateral boost).
There are two different reasons a lateral boost may fail; we handle each via a global counting argument that tracks many pairs whose distances cross a fixed threshold in one step.

\medskip
Define for an integer $t\ge 0$ the distance-threshold set
\[
S_t \;=\; \{(u,v,i)\in V\times V\times \mathbb{N} \;:\; \dist_{H_i}(u,v)>t\ \text{ and }\ \dist_{H_{i+1}}(u,v)\le t\}.
\]
Monotonicity of distances implies each ordered pair $(u,v)$ can contribute to $S_t$ for at most one index $i$, hence $|S_t|=O(n^2)$.

\begin{lemma}\label{lem:no-lateral-because-1}
Let $P_i=(x_i,m_i,y_i)$ be a step whose $i$-distant pair is not boosting, and assume \emph{no} lateral boost occurs because \emph{condition (i)} fails (i.e., $y_i$ already has an $R$-lateral cluster in $H_i$). Then the number of such steps is $O\!\left(n^{1+1/k}\right)$.
\end{lemma}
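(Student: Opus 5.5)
We will charge each such step $i$ to a large set of pairs in the distance-threshold set $S_k$, and then use $|S_k|=O(n^2)$. The step $i$ is non-boosting, so both $x_i$ and $m_i$ are fully clustered in $H_i$. Condition (i) of \Cref{def:lateral-boost} fails, so $y_i$ has an $R$-lateral cluster, i.e. $|\Ball_{H_i}(y_i,R)|\ge n^{R/k}$. We pair a ball around $x_i$ of radius $R-1-I_{\mathrm{odd}}$ with the ball $\Ball_{H_i}(y_i,R)$. Define
\[
A_i=\Ball_{H_i}(x_i,R-1-I_{\mathrm{odd}}),\qquad C_i=\Ball_{H_i}(y_i,R).
\]
Since $x_i$ is fully clustered and $R-1-I_{\mathrm{odd}}\le R$, we have $|A_i|\ge n^{(R-1-I_{\mathrm{odd}})/k}$. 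Hence
\[
|A_i|\cdot|C_i|\ge n^{(2R-1-I_{\mathrm{odd}})/k}=n^{(k-1)/k},
\]
using $2R-I_{\mathrm{odd}}=k$.

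\textbf{Step 1: after adding $P_i$, every pair in $A_i\times C_i$ is at distance $\le 2k$.} For $a\in A_i$ and $c\in C_i$, the walk $a\to x_i\to m_i\to y_i\to c$ in $H_{i+1}$ has length at most
\[
(R-1-I_{\mathrm{odd}})+2+R=k+1\le 2k .
\]

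\textbf{Step 2: before adding $P_i$, every pair in $A_i\times C_i$ is at distance $>2k$.} The path was added, so $\dist_{H_i}(x_i,y_i)>2k$. By the triangle inequality,
\[
\dist_{H_i}(a,c)>2k-(R-1-I_{\mathrm{odd}})-R=2k-k+1=k+1 .
\]
This bound is only $k+1$, not $2k$, so the threshold must be chosen to match. We use $t=k+1$ in $S_t$: each pair $(a,c)\in A_i\times C_i$ has $\dist_{H_i}(a,c)\ge k+2>t$ and $\dist_{H_{i+1}}(a,c)\le k+1=t$. Therefore $(a,c,i)\in S_{k+1}$.

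\textbf{Step 3: count.} Each such step contributes at least $n^{(k-1)/k}$ distinct triples to $S_{k+1}$. Each ordered pair $(a,c)$ contributes to $S_{k+1}$ for at most one index. Summing over all such steps gives
\[
\#\text{steps}\cdot n^{1-1/k}\le |S_{k+1}|\le n^2,
\]
so the number of such steps is at most $n^{1+1/k}$.

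The main obstacle is choosing the radii so that the Step 1 length bound and the Step 2 separation bound meet at the same threshold $t=k+1$, while the product of the ball sizes is still $n^{1-1/k}$. The available slack is $2k$ versus $k+1$, so this is comfortable.

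The only subtlety is that $x_i$'s ball uses its full cluster property, which holds because the distant pair $(x_i,m_i)$ is non-boosting. Monotonicity (\Cref{rem:mono}) is not needed beyond the observation that distances only decrease, which is what gives the one-index-per-pair property of $S_t$. The edge case $R-1-I_{\mathrm{odd}}<0$ cannot occur for $k\ge 2$, and $k=1$ is trivial; when the radius is $0$ we take $A_i=\{x_i\}$.
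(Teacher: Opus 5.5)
Your proof is correct and is essentially the paper's argument. You use the same product set $\Ball_{H_i}(x_i,R-1-I_{\mathrm{odd}})\times\Ball_{H_i}(y_i,R)$, of size at least $n^{(k-1)/k}$ by the full cluster of $x_i$ and the lateral cluster of $y_i$, and charge it to $S_{k+1}$ via the length-$(k+1)$ walk through $P_i$ and the triangle-inequality separation coming from $\dist_{H_i}(x_i,y_i)>2k$; the ``$\le 2k$'' heading of your Step 1 is superfluous, since what you actually use is the sharper bound $\le k+1$.
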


\begin{figure}[h!]
    \centering
    \includegraphics[width=0.5\linewidth]{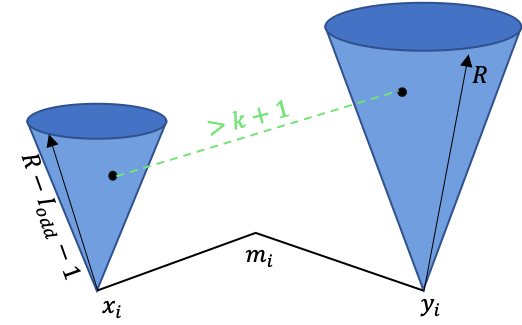}
    \caption{A path with an $R$-lateral cluster.}
    \label{fig:gr_global}
\end{figure}

\begin{proof}

As hinted in the above discussion, the general idea of the proof is to estimate the number of pairs of vertices in the lateral cluster of $y_i$ and some usual cluster of $x_i$ the distance between which shrinks by adding the path. See Figure~\ref{fig:gr_global}.

Fix such an index $i$ and consider the set
\[
\mathcal{Q}_i \;=\; \Ball_{H_i}(x_i,\,R-1-I_{\mathrm{odd}})\ \times\ \Ball_{H_i}(y_i,\,R).
\]

Because the step is non-boosting, $x_i$ is fully clustered, hence
\[
\big|\Ball_{H_i}(x_i,\,R-1-I_{\mathrm{odd}})\big| \;\ge\; n^{(R-1-I_{\mathrm{odd}})/k},
\qquad
\big|\Ball_{H_i}(y_i,\,R)\big| \;\ge\; n^{R/k}.
\]
Now, since $P_i$ was added, $\dist_{H_i}(x_i,y_i)>2k$. For any $v_x\in \Ball_{H_i}(x_i,R-1-I_{\mathrm{odd}})$ and $v_y\in \Ball_{H_i}(y_i,R)$, the triangle inequality gives
\[
\dist_{H_i}(v_x,v_y)
\;\ge\; \dist_{H_i}(x_i,y_i) \;-\; \dist_{H_i}(x_i,v_x) \;-\; \dist_{H_i}(v_y,y_i)
\;>\; 2k - \big((R-1-I_{\mathrm{odd}}) + R\big)
\;=\; k+1,
\]
using $2R-1-I_{\mathrm{odd}}=k-1$.
After adding $P_i$, there is a walk $v_x \to x_i \to y_i \to v_y$ of length at most $(R-1-I_{\mathrm{odd}})+2+R=k+1$ in $H_{i+1}$, so $\dist_{H_{i+1}}(v_x,v_y)\le k+1$.
Hence $\mathcal{Q}_i\times\{i\}\subseteq S_{k+1}$, and by the clustering assumptions in the statement
\[
|\mathcal{Q}_i|
\;\ge\; n^{(R-1-I_{\mathrm{odd}})/k}\cdot n^{R/k}
\;=\; n^{(k-1)/k}.
\]
Summing over all such $i$ and using $|S_{k+1}|=O(n^2)$ yields $O(n^{1+1/k})$ steps. 
\end{proof}

\begin{lemma}\label{lem:no-lateral-because-2}
Let $P_i=(x_i,m_i,y_i)$ be a step whose $i$-distant pair is not boosting, and assume \emph{no} lateral boost occurs because \emph{condition (ii)} fails, i.e.,
\[
\big|\{u\in \Ball_{H_i}(m_i,R-1): \dist_{H_i}(y_i,u)>R\}\big| \;\le\; \tfrac12\,n^{(R-1)/k}.
\]
Then the number of such steps is $O\!\left(n^{1+1/k}\right)$.
\end{lemma}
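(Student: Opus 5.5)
The plan is a global counting argument like the one in \Cref{lem:no-lateral-because-1}, with the threshold set $S_k$. This time the large set on the $y_i$ side is not a lateral cluster of $y_i$. Instead it is the part of the ball around $m_i$ that is already close to $y_i$. The failure of condition (ii) says exactly that this part is large.

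\textbf{Step 1: the set $A_i$.} Fix such an index $i$. Since the step is non-boosting, both endpoints of the distant pair $(x_i,m_i)$ are fully clustered in $H_i$. In particular $|\Ball_{H_i}(m_i,R-1)|\ge n^{(R-1)/k}$. Define
\[
A_i=\{u\in \Ball_{H_i}(m_i,R-1):\ \dist_{H_i}(y_i,u)\le R\}.
\]
Because condition (ii) fails, at most $\tfrac12 n^{(R-1)/k}$ vertices of $\Ball_{H_i}(m_i,R-1)$ are excluded from $A_i$. Hence $|A_i|\ge \tfrac12 n^{(R-1)/k}$.

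\textbf{Step 2: the pair set.} Consider
\[
\mathcal{Q}_i=\Ball_{H_i}(x_i,R-I_{\mathrm{odd}})\times A_i .
\]
Since $x_i$ is fully clustered and $R-I_{\mathrm{odd}}\le R$, we have $|\Ball_{H_i}(x_i,R-I_{\mathrm{odd}})|\ge n^{(R-I_{\mathrm{odd}})/k}$. Therefore
\[
|\mathcal{Q}_i|\ge \tfrac12\, n^{(2R-1-I_{\mathrm{odd}})/k}=\tfrac12\, n^{(k-1)/k}.
\]

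\textbf{Step 3: every pair in $\mathcal{Q}_i$ crosses threshold $k$ at step $i$.} Take $v\in \Ball_{H_i}(x_i,R-I_{\mathrm{odd}})$ and $u\in A_i$.
\begin{itemize}
\item Before the step: $\dist_{H_i}(x_i,y_i)>2k$, $\dist_{H_i}(x_i,v)\le R-I_{\mathrm{odd}}$ and $\dist_{H_i}(y_i,u)\le R$. The triangle inequality gives
\[
\dist_{H_i}(v,u)>2k-(2R-I_{\mathrm{odd}})=k.
\]
\item After the step: $H_{i+1}$ contains the walk $v\to x_i\to m_i\to u$. Its length is at most $(R-I_{\mathrm{odd}})+1+(R-1)=k$, so $\dist_{H_{i+1}}(v,u)\le k$.
\end{itemize}
Thus $\mathcal{Q}_i\times\{i\}\subseteq S_k$.

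\textbf{Step 4: summing.} Each ordered pair appears in $S_k$ for at most one index, so $|S_k|=O(n^2)$. Summing $|\mathcal{Q}_i|\ge \tfrac12 n^{(k-1)/k}$ over all such $i$ bounds the number of these steps by $O(n^2/n^{(k-1)/k})=O(n^{1+1/k})$.

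The only delicate point is choosing the radii so that both inequalities in Step 3 hit the same threshold for both parities of $k$. The $x_i$-side radius $a$ must satisfy two conditions:
\begin{itemize}
\item the "before" bound $2k-a-R\ge t$;
\item the "after" bound $a+1+(R-1)\le t$, i.e.\ $a+R\le t$.
\end{itemize}
Together these force $a+R\le k$, and the choice $a=R-I_{\mathrm{odd}}$, $t=k$ meets this with equality because $2R-I_{\mathrm{odd}}=k$. This choice still leaves $a+(R-1)=k-1$ in the exponent, which is what yields the bound $O(n^{1+1/k})$.
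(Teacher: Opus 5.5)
Your proof is correct and is essentially the paper's argument: the paper likewise takes the set of vertices of $\Ball_{H_i}(m_i,R-1)$ within distance $R$ of $y_i$ and pairs it with $\Ball_{H_i}(x_i,R-I_{\mathrm{odd}})$. It shows every such pair crosses threshold $k$ via the walk $v\to x_i\to m_i\to u$, and charges the step to $S_k$. Your closing remark explaining why the radius $R-I_{\mathrm{odd}}$ and threshold $t=k$ are forced is a correct and useful addition, though the proof does not need it.
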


\begin{figure}[h!]
    \centering
    \includegraphics[width=0.5\linewidth]{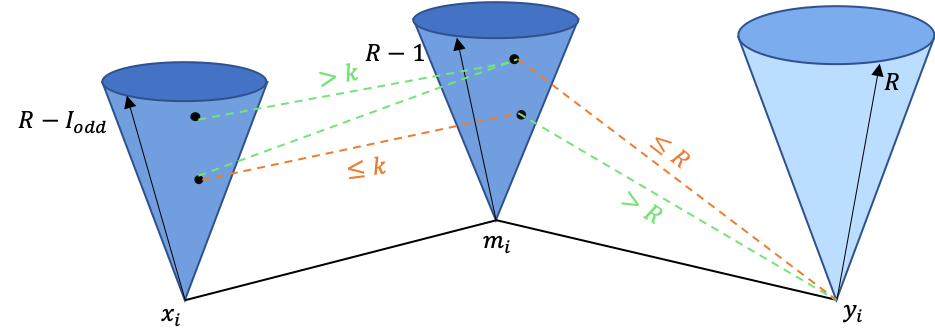}
    \caption{A path with a cluster of $m_i$ almost contained in the lateral cluster of $y_i$. Observe that for any added path, two orange edges as depicted in the figure cannot share a vertex.}
    \label{fig:gr_lateral}
\end{figure}

\begin{proof}

The idea is again to consider many pairs of vertices with shrinking distances, this time in the clusters of $x_i$ and $m_i$, see Figure~\ref{fig:gr_lateral}.

Let $C := \{u\in \Ball_{H_i}(m_i,R-1): \dist_{H_i}(y_i,u)\le R\}$.
Since $m_i$ is fully clustered and the “far” set in the statement of the claim has size at most $\tfrac12 n^{(R-1)/k}$, we have
$|C|\ge \tfrac12\,n^{(R-1)/k}$.
Consider
\[
\mathcal{R}_i \;=\; \Ball_{H_i}(x_i,\,R-I_{\mathrm{odd}})\ \times\ C.
\]
Arguing as before, from $\dist_{H_i}(x_i,y_i)>2k$ and the triangle inequality, every $(v_x,v_m)\in \mathcal{R}_i$ satisfies
\[
\dist_{H_i}(v_x,v_m)
\;>\; 2k - \big((R-I_{\mathrm{odd}}) + R\big)
\;=\; k.
\]
In $H_{i+1}$ there is a walk $v_x\to x_i\to m_i\to v_m$ of length at most $(R-I_{\mathrm{odd}})+1+(R-1)=k$, hence $\dist_{H_{i+1}}(v_x,v_m)\le k$.
Thus $\mathcal{R}_i\times\{i\}\subseteq S_k$, and
\[
|\mathcal{R}_i| \;\ge\; n^{(R-I_{\mathrm{odd}})/k}\cdot \tfrac12\,n^{(R-1)/k}
\;=\; \tfrac12\,n^{(2R-1-I_{\mathrm{odd}})/k}
\;=\; \tfrac12\,n^{(k-1)/k}.
\]
Summing over $i$ and using $|S_k|=O(n^2)$ again yields $O(n^{1+1/k})$ steps.
\end{proof}


\begin{proof}[Proof of \Cref{thm:main}]
By the clustering analysis, the number of boosting steps is $O\!\left(n^{1+1/k}\right)$.
Among non-boosting steps, the laterally boosted ones contribute $O\!\left(n^{1+1/k}\right)$ by Corollary~\ref{cor:lateral-count}, while the remaining non-boosting, non-lateral steps contribute $O\!\left(n^{1+1/k}\right)$ by Lemmas~\ref{lem:no-lateral-because-1} and \ref{lem:no-lateral-because-2}.
Summing these bounds proves the claim.
The running time is polynomial since all 2-paths can be enumerated in $O(n^3)$ time and each distance check is linear in the current graph size.
\end{proof}

\begin{theorem}[Equivalent combinatorial formulation]\label{thm:combinatorial}
Let $G$ be an unweighted graph and let $P_1,\dots,P_t$ be 2-paths with $P_i=(x_i,m_i,y_i)$ such that
$\dist_{\bigcup_{j<i} P_j}(x_i,y_i)>2k$ for all $i$.
Then $t=O\!\left(n^{1+1/k}\right)$.
\end{theorem}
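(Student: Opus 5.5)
The plan is to reduce \Cref{thm:combinatorial} to \Cref{thm:main}, or more precisely to rerun the proof of \Cref{thm:main}. That proof never used that the $P_i$ came from a fixed collection in a specific order, nor that the paths were shortest. It used only three things about each path $P_i$: it is a $2$-path, $H_i=\bigl(V,\bigcup_{j<i}P_j\bigr)$ is the union of the earlier paths, and $\dist_{H_i}(x_i,y_i)>2k$.

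Given a sequence $P_1,\dots,P_t$ as in the statement, set $\pp=(P_1,\dots,P_t)$ in this order and run \Cref{alg:greedy-spanner-path-collections} with $r=2k$. We show by induction that every $P_i$ is added. If $P_1,\dots,P_{i-1}$ were all added, then the current $H$ is exactly $\bigcup_{j<i}P_j$. The hypothesis gives $\dist_H(x_i,y_i)>2k$, so $P_i$ is added as well. Hence the output consists of exactly the $t$ paths.

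\Cref{thm:main} bounds the number of added $2$-paths by $O(n^{1+1/k})$, which gives $t=O(n^{1+1/k})$. One should check that the bound in \Cref{thm:main} holds for every ordering and every path collection, and not only for collections of shortest paths. This follows because its proof uses only the three properties above:
\begin{itemize}
\item \Cref{clm:split-k} is a triangle-inequality step applied to $H_i$.
\item \Cref{cor:total-boosting} couples the argument with \Cref{alg:boost_proc}, and only needs each boosting edge to be distant in $H_i$. Note that edges of $P_i$ that already lie in $H_i$ are never distant, so this is consistent.
\item The lateral boosting and $S_t$-counting arguments use only monotonicity of the graphs $H_i$ and the inequality $\dist_{H_i}(x_i,y_i)>2k$.
\end{itemize}

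The one point needing care is degenerate or repeated paths, for example $x_i=y_i$ or a path reusing earlier edges. These are excluded or harmless: $\dist_{H_i}(x_i,y_i)>2k$ forces $x_i\neq y_i$ and forces $P_i$ to differ from every earlier path. Counting $2$-paths rather than edges is exactly the quantity \Cref{thm:main} bounds.

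Conversely, the output of the algorithm on any collection is such a sequence, so the two formulations are equivalent.
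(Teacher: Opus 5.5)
Your proposal is correct and follows the paper's proof: you run \Cref{alg:greedy-spanner-path-collections} on $P_1,\dots,P_t$ in the given order, argue by induction that every $P_i$ is added, and invoke \Cref{thm:main}. Your extra check, that the argument for \Cref{thm:main} does not rely on the paths being shortest or on any particular ordering, is a reasonable sanity check. That generality is already part of how \Cref{thm:main} is stated for path collections.
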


\begin{proof}
Processing the paths $P_1,\dots,P_t$ in order, each $P_i$ is added by the greedy rule by assumption, so the number of additions equals $t$.
Apply Theorem~\ref{thm:main}.
\end{proof}

\section{Improved Stretch for Weighted Graphs}\label{sec:wtd2to2k}


We refer the reader to \Cref{subsec:notations} to recall notations used for weighted graphs. We remind the reader the objective of this section.

\medskip
\noindent\textbf{Goal.}
Construct, in polynomial time, a weighted subgraph $H\subseteq G$ of size $O\!\left(n^{1+1/k}\right)$ such that for every $2$-edge path $P=(x,m,y)$ in $G$ there is an $x\!-\!y$ path in $H$ of weight at most
\[
  w(P) + (2k-2)\,w_{\max}(P).
\]
 
We note that the stretch bound above is optimal assuming the Erd\H{o}s girth conjecture as we now elaborate.

\begin{claim}\label{clm:lower-bound-wtd}
    Assuming the Erd\H{o}s girth conjecture holds, for each $n$ there is a weighted graph $G$ on $n>1$  vertices with $\Omega(n^{1+\frac{1}{k-1}})$ edges such that for all proper subgraphs $H$ there is a $2$-path $P = (x,m,y)$ in $G$ such that $\dist_H(x,y) \geq w(P) + (2k-2)w_{\max}(P)$. 
\end{claim}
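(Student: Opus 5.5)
Take a girth-conjecture graph $G_0$ on $n$ vertices with $\Omega(n^{1+\frac{1}{k-1}})$ edges and girth $>2k$, i.e.\ no cycle of length $\le 2k$. Such a graph exists by the Erd\H{o}s girth conjecture with parameter $k-1$, which gives girth $>2(k-1)+1=2k-1$; passing to a bipartite subgraph that keeps half the edges removes the odd cycle of length $2k-1$, so the girth becomes $\ge 2k$. I will use exactly that $G_0$ has no cycles of length at most $2k-1$ and is bipartite, hence has no cycles of odd length at all. Set $G=G_0$ with all edge weights equal to $1$. For a $2$-path $P=(x,m,y)$ the target bound is $w(P)+(2k-2)w_{\max}(P)=2+2k-2=2k$. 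It therefore suffices to show that for every proper subgraph $H$ there is a $2$-path with $\dist_H(x,y)\ge 2k$.

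Let $H$ be a proper subgraph and pick an edge $e=\{x,m\}\in E(G)-E(H)$. We may assume $m$ has another neighbor $y\ne x$ in $G$; this holds after a standard cleanup of $G_0$ to minimum degree at least $2$ (in fact large), done by deleting low-degree vertices, which keeps $\Omega(n^{1+\frac1{k-1}})$ edges. If $\{m,y\}\notin H$ we still use the same path. Consider the $2$-path $P=(x,m,y)$, so $\dist_G(x,y)=2$ since the girth exceeds $3$. Now suppose $\dist_H(x,y)<2k$. Let $Q$ be a shortest $x$-$y$ path in $H$. Because $G$ is bipartite and $x,y$ lie on the same side, $|Q|$ is even, so $|Q|\le 2k-2$. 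The walk $Q$ followed by $y\to m\to x$ is a closed walk of length at most $2k$ in $G$, and it contains the edge $\{x,m\}$ exactly once, since $Q\subseteq H$ avoids it. A closed walk that uses some edge exactly once contains a cycle through that edge, of length at most $2k$, and by bipartiteness that cycle has length at most $2k$ and even. This contradicts girth $>2k$, provided I arrange girth $\ge 2k+2$, or at least no even cycle of length $\le 2k$.

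This exposes the main obstacle, which is matching the parity and the exact girth to the desired threshold. The conjecture for parameter $k-1$ gives girth $\ge 2k$, and that forbids cycles of length $\le 2k-1$ only. The closed walk above has length $|Q|+2\le 2k$, so I need the stronger conclusion $|Q|\le 2k-3$ to get a contradiction. Here bipartiteness is the key: $|Q|$ is even and $|Q|<2k$, so $|Q|\le 2k-2$, and the resulting cycle through $\{x,m\}$ has length $\le 2k$. I would then handle the boundary case of length exactly $2k$ by taking the girth-conjecture graph with girth $>2k$ in the even sense, i.e.\ bipartite with no $C_{2k}$. This is exactly what the conjecture for $k-1$ provides after bipartization when stated as ``girth $>2(k-1)+2$''; I would double-check that normalization. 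If this fails, I would fall back on weights: giving $\{m,y\}$ a small weight $\varepsilon$ relative to $\{x,m\}$ does not help, since then $w_{\max}$ is unchanged and only $w(P)$ shrinks. So the cleanest fix is the even-girth normalization. With it in place, every proper $H$ misses some edge $\{x,m\}$, the $2$-path $(x,m,y)$ has $\dist_H(x,y)\ge 2k=w(P)+(2k-2)w_{\max}(P)$, and the claim follows.
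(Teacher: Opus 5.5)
There is a genuine gap, and you located it yourself without closing it. With unit weights the target for a $2$-path is $2k$. A missing edge $e=\{x,m\}$ in a graph of girth $\ge 2k$ only gives $\mathrm{dist}_{G-e}(x,m)\ge 2k-1$. For a neighbor $y$ of $m$ this yields only $\mathrm{dist}_H(x,y)\ge 2k-2$, because the replacement $x$-to-$y$ path need not pass through $m$.

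Your proposed repair is impossible, not a matter of normalization. A bipartite graph with no cycle of length $\le 2k$ has girth $\ge 2k+2$. By the Moore bound it therefore has $O(n^{1+1/k})$ edges, which is $o(n^{1+\frac{1}{k-1}})$.

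Moreover, the unit-weight construction is false, not merely unproven. Take $k=3$ and let $G$ be the point--line incidence graph of a projective plane: it is bipartite, has girth $6$, and has $\Theta(n^{3/2})$ edges. Every $2$-path of this graph lies on a $6$-cycle. Hence deleting any single edge $e$ gives a proper subgraph in which every $2$-path through $e$ still has endpoints at distance $4<6$. So no $2$-path violates the bound.

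The missing idea is the paper's pendant-leaf gadget. Take the girth graph $G'$ (girth $>2k-1$) on $n/2$ vertices with unit weights, and attach to every vertex $x$ a new leaf $x'$ via an edge of weight $\varepsilon<1$.

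A proper subgraph $H$ must contain every pendant edge, since otherwise $(x',x,y)$ has infinite $H$-distance. So $H$ misses some $e=\{x,y\}\in E(G')$. For $P=(x',x,y)$, every $x'$-to-$y$ path in $H$ must start with the edge $x'x$ and continue as an $x$-to-$y$ path in $G'-e$. Therefore
\[
\mathrm{dist}_H(x',y)\ \ge\ \varepsilon+(2k-1)\ =\ w(P)+(2k-2)\,w_{\max}(P).
\]
The leaf forces the replacement path through the hub vertex $x$. This turns the girth bound on the missing edge directly into the required bound, with no parity issue.

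Your dismissal of weights was also backwards. Lowering $w(P)$ lowers the target, which is exactly what a lower-bound construction wants. The real danger of a light edge is that it could serve as a shortcut, and making it a pendant edge removes that danger. For the non-strict inequality as stated, unit-weight pendant edges would in fact also suffice, since $1+(2k-1)=2k$.
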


\begin{proof}
    Take a graph $G'$ on $\frac{n}{2}$ vertices with $c'(\frac{n}{2})^{1+\frac{1}{k-1}} > \frac{c'}{4}n^{1+\frac{1}{k-1}}$ edges and of girth $>2(k-1) + 1$, promised by the Erd\H{o}s girth conjecture. We can assume that $G'$ is connected (otherwise connect components by a tree). Let all weights of the edges of $G'$ be $1$. Then for each vertex $x \in G'$ add a new vertex $x'$ and an edge $xx'$ of weight $0 < \varepsilon < 1$. Let $G$ be the resulting graph. Let $H$ be a proper subgraph of $G$. Suppose that for some $x\in G'$ we have $xx' \notin E(H)$. Then take a neighbor $y$ of $x$ in $G'$ (such exists because $G'$ is connected.) Then $\dist_{G}(x',y) = 2$, but $\dist_{H}(x', y) = \infty$, since $x'$ is a leaf in $G$. Thus we can assume that all leaf edges $xx'$ are in $E(H)$, and since $H$ is proper, there is some $e = xy \in  E(G')\backslash E(H)$. Take $P = (x',x,y)$, then $\dist_H(x',y) \ge \varepsilon + \dist_{G'-e}(x,y) \geq \varepsilon + (2k-1) = w(P) + (2k-2)w_{\max}(P)$, where the second inequality follows from the high girth assumption.
\end{proof}

\begin{remark}
    The example above shows that the stretch above is tight in the following sense: let $s$ any stretch function for $2$ paths $s:\mathbb{R}_{>0}\times \mathbb{R}_{>0}$ (each entry correspond to the edge weight of the first/ second edge along a path). If one can achieve size bound $o(n^{1+\frac{1}{k-1}-\varepsilon})$ for a spanner with stretch guarantee $s$, and assuming the girth conjecture, it must hold that $s(x,y) \geq x+y +(2k-2) \max(x,y)$, hence our goal is to achieve the optimal stretch for paths of hop-distance $2$.
\end{remark}

We keep the same notation for $R,I_{odd}$ as appears in \Cref{sec:analysis}. Also for the unweighted graphs that appear in this section we use the same terminology related to distances, balls and being (laterally) clustered as in the previous sections.

\subsection*{Algorithm Overview}

We describe a five-phase construction of $H$ which is inspired by the \emph{analysis} of the $2\to2k$ spanner appearing in \Cref{sec:analysis}. The algorithm has three main differences from the proof of \Cref{thm:main}: (1) We initially add a sparse $(2k-1)$-spanner to the graph $H$ we aim to construct. This step is routine, and is done in order to simplify the analysis. We remark that actually this step can be removed. (2) We add a new phase (Phase 4) between the clustering and lateral clustering parts of the analysis of \Cref{thm:main}, in which edges $\{u,v\}$ with distant \emph{and} fully-clustered endpoints may still be added in order to makes their clusters mutually close. This ensures that whenever $\{u,v\}$ is not added in this phase, replacement paths can traverse from the cluster of $u$ to the cluster of $v$ without explicitly going to $u,v$ reducing their length. (3) We sort edges/paths by three different carefully chosen keys in three different phases. One key is used for clustering, another is used for lateral clustering, and the last key is used in the last phase (Phase 5) where we greedily add paths that do not have the stretch we want.

We now proceed with describing the algorithm in detail, and analyze along each phase the number of added edges, as well as certain invariants that are maintained for edges/paths that were not added by the phase. 

\paragraph{Phase 1: Initialization.}
Initialize $H$ to contain any $(2k\!-\!1)$-spanner of $G$ with $O(n^{1+1/k})$ edges (e.g., by the standard greedy construction).
This guarantees that any $2$-path having at least one edge already in $H$ attains the target stretch.

\begin{observation}\label{obs:edge-in-H}
For any $2$-path $P=(x,m,y)$, if either $\{x,m\}\in H$ or $\{m,y\}\in H$, then $\dist_H(x,y) \leq w(P)+(2k-2)w_{\max}(P)$.
\end{observation}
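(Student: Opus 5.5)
By symmetry of the statement, assume without loss of generality that the edge $\{x,m\}$ lies in $H$; the case $\{m,y\}\in H$ is identical with the roles of $x$ and $y$ swapped. The plan is to build an $x$--$y$ walk in $H$ from two pieces. The first piece is the edge $\{x,m\}$ itself. The second piece is a replacement path for the other edge $\{m,y\}$, which we obtain from the Phase~1 guarantee that $H$ contains a multiplicative $(2k-1)$-spanner of $G$. We then bound the total weight of this walk.

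\textbf{Step 1.} Since $H$ contains a $(2k-1)$-spanner of $G$ and $\{m,y\}\in E(G)$, we have
\[
\dist_H(m,y)\le (2k-1)\,\dist_G(m,y)\le (2k-1)\,w(m,y).
\]

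\textbf{Step 2.} Concatenating the edge $\{x,m\}$ with a shortest $m$--$y$ path in $H$ gives
\[
\dist_H(x,y)\le w(x,m)+(2k-1)\,w(m,y)=w(P)+(2k-2)\,w(m,y).
\]

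\textbf{Step 3.} Since $w(m,y)\le w_{\max}(P)$ and $2k-2\ge 0$, the right-hand side is at most $w(P)+(2k-2)\,w_{\max}(P)$, which is the claimed bound.

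If instead $\{m,y\}\in H$, the same three steps apply with a spanner path replacing $\{x,m\}$. The edge that is present in $H$ contributes only its own weight, and the missing edge costs at most $2k-1$ times its weight.

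None of these steps is a real obstacle. The only points to check are that $H$, and hence every distance bound above, refers to the graph at the moment Phase~1 ends, and that later phases only add edges to $H$, so distances can only decrease. Consequently the bound also holds for the final spanner.
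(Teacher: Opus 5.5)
Your argument is correct and is essentially the paper's own: the paper justifies this observation in one line by the Phase~1 $(2k-1)$-spanner, where the edge in $H$ contributes its own weight and the other edge is replaced by a spanner path of weight at most $(2k-1)$ times its weight. One nit: the edge $\{x,m\}$ may be added in a later phase, so the bound should be stated for the current $H$, meaning any $H$ containing both the Phase~1 spanner and that edge, rather than for $H$ at the end of Phase~1.
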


\paragraph{Phase 2: Clustering.}
Process edges $e=\{u,v\}\in E$ in nondecreasing order of weight $\omega=w(e)$.
Let $H_\le^\circ := \HleqCirc{\omega}$ be the underlying unweighted graph of the current $H$ restricted to edges of weight at most $\omega$.
If
\begin{enumerate}[label=(\roman*),nosep]
\item $\dist_{H_\le^\circ}(u,v) > k$, and
\item at least one of $u,v$ is \emph{not} fully clustered in $H_\le^\circ$,
\end{enumerate}
add $e$ to $H$.
An edge that \emph{met} (i) but was \emph{not} added (because both endpoints were already fully clustered) is called \emph{saturated}.
We call edges added in this phase  \emph{clustering edges}.

\begin{lemma}[Phase-2 size]\label{lem:phase2}
The number of clustering edges is $O\!\left(n^{1+1/k}\right)$.
\end{lemma}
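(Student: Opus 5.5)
The plan is to reduce Phase 2 to the unweighted setting of \Cref{lem:key-lem-boost-girth}, applied to the sequence of clustering edges in the order they are added. Let $e_1,\dots,e_t$ be the clustering edges in processing order, so weights are nondecreasing, and let $C=(V,\{e_1,\dots,e_t\})$ be the unweighted graph they span. Phase 1 edges will be ignored in the counting, since they already number $O(n^{1+1/k})$. I will verify the two hypotheses of \Cref{lem:key-lem-boost-girth} for $C$ with $s=k$; the lemma then gives $t=O(n^{1+1/k})$ directly.

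\emph{Girth.} Suppose $C$ contains a cycle of length at most $k+1$, and let $e_j=\{u,v\}$ be its last edge in processing order. The other edges of the cycle were added earlier and have weight at most $w(e_j)$, since processing is in nondecreasing weight. Hence they all belong to $\HleqCirc{w(e_j)}$ when $e_j$ is considered, and they form a $u$--$v$ path of at most $k$ hops. This contradicts condition (i), so $C$ has girth $>k+1$.

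\emph{Non-clustering.} When $e_j$ is added, one endpoint, say $u$, is not fully clustered in $\HleqCirc{w(e_j)}$, which contains all of $C_j=(V,\{e_1,\dots,e_{j-1}\})$ because these edges have smaller or equal weight. The underlying unweighted balls in the larger graph $\HleqCirc{w(e_j)}$ contain the balls in $C_j$, so if $u$ has no $R$-cluster in $\HleqCirc{w(e_j)}$, it also has none in $C_j$. This monotonicity, as in \Cref{rem:mono}, goes in the right direction, and it is exactly hypothesis 2 of \Cref{lem:key-lem-boost-girth} with $\lceil k/2\rceil=R$.

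The only subtle point, and the step I expect to need the most care, is the restriction to $H_{\le\omega}$. Both hypotheses compare $C_j$ to $\HleqCirc{w(e_j)}$, and the comparison relies entirely on the nondecreasing weight order, which guarantees that every earlier clustering edge lies in the weight-restricted graph. Phase 1 edges of higher weight are excluded from $\HleqCirc{w(e_j)}$, but this does not matter, because the arguments only require the inclusion $C_j\subseteq \HleqCirc{w(e_j)}$. Ties in weight are harmless, since edges of equal weight are included in $H_{\le\omega}$.
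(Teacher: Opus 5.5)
Your proof is correct and follows essentially the paper's route. The paper notes that \Cref{alg:boost_proc} with $s=k$, run on the clustering edges in their order of addition, would add every one of them; this holds because of exactly your inclusion $C_j\subseteq \HleqCirc{w(e_j)}$, guaranteed by the nondecreasing weight order. It then cites \Cref{lem:boosting_upper_bound}, whose proof is the girth check plus the application of \Cref{lem:key-lem-boost-girth} that you carry out inline.
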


\begin{proof}
    It follows from the definitions that the procedure \Cref{alg:boost_proc} with $s = k$ applied on the edges added by Phase 1 in order of addition puts all of them in the resulting graph. Then the desired bound follows from \Cref{lem:boosting_upper_bound}.
\end{proof}

\begin{claim}[Saturated edges]\label{clm:saturated}
If an edge $e=\{u,v\}$ is saturated at threshold $\omega=w(e)$, then both $u$ and $v$ are fully clustered already in the (unweighted) graph $\HleqCirc{\omega}$ formed by edges added before $e$ reached weight~$\omega$.
Consequently, if $w_u$ (resp.\ $w_v$) denotes the minimum threshold at which $u$ (resp.\ $v$) becomes fully clustered, then $w_u\le \omega$ and $w_v\le \omega$.
\end{claim}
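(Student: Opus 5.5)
The claim is essentially an unfolding of the definitions of Phase 2, so the plan is a direct argument rather than a reduction to earlier lemmas.

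\emph{Step 1: the edge is considered at the right moment.} Phase 2 processes edges in nondecreasing order of weight. When $e=\{u,v\}$ is processed, the current graph is $H$: the Phase-1 spanner together with all clustering edges added so far. The relevant graph is $\HleqCirc{\omega}$ with $\omega=w(e)$, i.e.\ the unweighted graph of those edges of weight at most $\omega$.

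\emph{Step 2: read off the definition of saturated.} Saturated means condition (i) held, so $\dist_{\HleqCirc{\omega}}(u,v)>k$, yet $e$ was not added. The only way an edge meeting (i) is rejected is that (ii) fails. Hence both $u$ and $v$ have an $R$-cluster in $\HleqCirc{\omega}$ at that moment. This is exactly the first assertion.

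\emph{Step 3: monotonicity, to make $w_u$ and $w_v$ meaningful.} The graph $\HleqCirc{\theta}$, viewed as a function of the threshold $\theta$ and of the processing time, only grows. Raising $\theta$ includes more edges, and processing more edges only adds edges. Balls therefore only grow, as in \Cref{rem:mono}, and once a vertex is fully clustered at some threshold it stays fully clustered at all larger thresholds and later times. So $w_u$, defined as the infimum (in fact a minimum, over the finitely many edge weights) of thresholds at which $u$ becomes fully clustered, is well defined and upward closed.

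\emph{Step 4: conclude.} By Step 2, $u$ is fully clustered at threshold $\omega$, so by minimality $w_u\le\omega$; symmetrically $w_v\le\omega$.

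The only subtlety, and the point I expect to need care, is fixing the convention for ``becomes fully clustered at threshold $\theta$''. It should mean fully clustered in $\HleqCirc{\theta}$ evaluated at the time edges of weight $\theta$ are processed. Ties among equal-weight edges make ``before $e$'' versus ``after all weight-$\omega$ edges'' differ, but monotonicity shows either convention gives $w_u\le\omega$.
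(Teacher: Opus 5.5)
Your proposal is correct and matches the paper. The paper gives no separate proof of this claim and treats it, as you do, as an immediate unfolding of the Phase~2 rule: saturation means condition (i) held while condition (ii) failed, so both endpoints are fully clustered in $\HleqCirc{\omega}$. The bound $w_u,w_v\le\omega$ then follows from the fact that $\HleqCirc{\theta}$ only grows, both in the threshold and over time; your remark on ties among equal-weight edges makes explicit a convention the paper leaves implicit.
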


\paragraph{Phase 3: Lateral Clustering.}
For each $v\in V$, process its neighbors $u\in N_G(v)$ with $u$ fully clustered, in increasing order of the key
\[
  \phi(v,u) \;=\; (R-1)\,w_u + w(v,u),
\]
where $w_u$ is the threshold at which $u$ first became fully clustered.
If $\card{\Ball_H\!\bigl(v,\phi(v,u)\bigr)} \ge n^{R/k}$, skip $u$. In this case we say the candidate $u$ is \emph{saturated} when considered by $v$;
Otherwise let
\begin{equation*}
     T \;:=\; \Ball_H\!\bigl(u,(R-1)w_u\bigr)\setminus \Ball_H\!\bigl(v,\phi(v,u)\bigr). 
\end{equation*}
\noindent and \emph{test} if $\card{T}>0.1\,n^{(R-1)/k}$; if so, add the edge $\{v,u\}$ to $H$.
Such an added edge is called a \emph{lateral clustering edge}.
If the test fails because $\card{T}\le 0.1\,n^{(R-1)/k}$, we say the cluster of $u$ is \emph{roughly contained} in the lateral cluster of $v$. See Figure~\ref{fig:phase3}.

\begin{figure}[h!]
    \centering
    \includegraphics[width=0.5\linewidth]{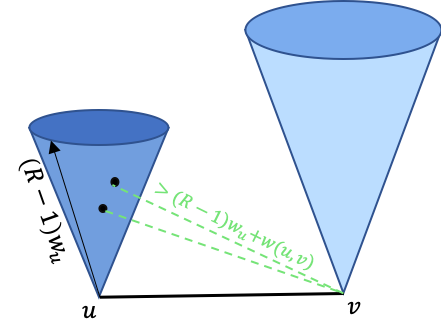}
    \caption{Lateral clustering  for weighted graphs. The edge $\{u,v\}$ was added in the ``$T$-test'' .}
    \label{fig:phase3}
\end{figure}

\begin{claim}[Phase-3 size]\label{clm:phase3}
Each vertex adds $O(n^{1/k})$ lateral clustering edges; hence Phase~3 contributes $O(n^{1+1/k})$ edges overall.
\end{claim}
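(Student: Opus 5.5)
We fix a vertex $v$ and track the size of its weighted ball $\Ball_H(v,\phi(v,u))$ as $v$ processes its fully clustered neighbors $u$ in increasing order of $\phi(v,u)$. The plan is a potential argument, the weighted analogue of \Cref{clm:lateral-growth} and \Cref{clm:lateral-bound-per-vertex}. Each lateral clustering edge $\{v,u\}$ added by $v$ will be shown to increase the potential by at least $0.1\,n^{(R-1)/k}$. The potential never needs to exceed $n^{R/k}$ while edges are still being added. Together these give at most $10\,n^{1/k}$ edges per vertex, and summing over $v$ gives $O(n^{1+1/k})$.

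\textbf{Growth step.} Suppose $v$ adds $\{v,u\}$ because $|T|>0.1\,n^{(R-1)/k}$. Every $z\in T$ satisfies $\dist_H(u,z)\le (R-1)w_u$. After the edge $\{v,u\}$ (of weight $w(v,u)$) is added, $\dist_H(v,z)\le w(v,u)+(R-1)w_u=\phi(v,u)$. So every vertex of $T$, which lay outside $\Ball_H(v,\phi(v,u))$ before the addition, lies inside it afterwards. Hence $\Ball_H(v,\phi(v,u))$ grows by more than $0.1\,n^{(R-1)/k}$.

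\textbf{Monotonicity.} To chain these increments across different $u$'s, use that keys are processed in nondecreasing order, $\phi_1\le\phi_2\le\cdots$. Balls of larger radius contain those of smaller radius, and $H$ only gains edges, including edges added by other vertices' phases and interleaved phase-3 steps. So $|\Ball_H(v,\phi_j)|$ at the time $v$ considers its $j$-th neighbor is nondecreasing in $j$. Now let the edges added by $v$ occur at steps $j_1<j_2<\cdots<j_m$. Just after step $j_t$, the ball $\Ball_H(v,\phi_{j_t})$ has gained more than $0.1\,n^{(R-1)/k}$ over its size just before step $j_t$. That earlier size is at least the size just after step $j_{t-1}$, by monotonicity in both radius and time. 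By induction, after $t$ additions $|\Ball_H(v,\phi_{j_t})|> 0.1\,t\,n^{(R-1)/k}$.

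\textbf{Termination.} Whenever $v$ considers a candidate with $|\Ball_H(v,\phi)|\ge n^{R/k}$, it skips that candidate. By monotonicity, the same then holds for every later candidate, so no further edges are added by $v$. Therefore every addition happens while the current ball has size $<n^{R/k}$. This forces $0.1\,(m-1)\,n^{(R-1)/k}<n^{R/k}$, i.e.\ $m\le 10\,n^{1/k}+1$.

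The main obstacle is making the monotonicity rigorous for weighted radii. In particular, one has to check that edges added concurrently, by other vertices in phase 3 or earlier in the ordering, only enlarge balls, so they cannot invalidate the chain of inequalities. A second subtlety is how the phases are interleaved. If each vertex runs its scan independently on a fixed $H$ (the $H$ after phase 2 plus its own additions), the argument applies verbatim to that scan. If the scans are interleaved, the argument still works because $H$ is monotone. In either case it is only the sorted order of the keys $\phi$ that makes the radii nondecreasing, and that is exactly what is needed.
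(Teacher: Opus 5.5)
Your proposal is correct and follows essentially the same argument as the paper. The potential is the weighted ball $\Ball_H(v,\phi(v,u))$; the triangle inequality together with the nondecreasing order of the keys $\phi$ gives $T\subseteq B_{i+1}\setminus B_i$ and $B_i\subseteq B_{i+1}$; and the skip rule caps the ball at $n^{R/k}$, which yields at most $10\,n^{1/k}+1$ additions per vertex. Your explicit remark that interleaved additions only enlarge $H$, and hence the balls, makes rigorous what the paper leaves implicit when it defines $H_i$ as the graph just before $e_i$ is added.
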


\begin{proof}
Fix $v$ and order additions as $e_i=\{v,u_i\}$ with keys $\phi(v,u_1)\le \phi(v,u_2)\le\cdots$.
Let $H_i$ be the subgraph before adding $e_i$ and abbreviate
\(
  B_i := \Ball_{H_i}\!\bigl(v,\phi(v,u_i)\bigr).
\)
By triangle inequality and the processing order,
\[
  \Ball_{H_i}\!\bigl(u_i,(R-1)w_{u_i}\bigr) \subseteq \Ball_{H_{i+1}}\!\bigl(v,\phi(v,u_i)\bigr) \subseteq \Ball_{H_{i+1}}\!\bigl(v,\phi(v,u_{i+1})\bigr),
\]
hence $T \subseteq B_{i+1}\setminus B_i$ and $B_i\subseteq B_{i+1}$.
Whenever $e_i$ is added we have $\card{B_i}<n^{R/k}$ and $\card{B_{i+1}\setminus B_i}>0.1\,n^{(R-1)/k}$, so each $v$ contributes  at most $O(n^{1/k})$ edge additions in this phase. 
\end{proof}

\paragraph{Phase 4: Global Distance Reduction by Edges.}
Process edges $e=(u,v)$ of $G$ in nondecreasing order of weight $\omega=w(e)$.
If $\dist_{\HleqCirc{\omega}}(u,v)>k$, define
\[
  P \;=\; \Bigl\{(a,b)\in \Ball_{\HleqCirc{\omega}}\!\bigl(v,R-I_{\mathrm{odd}}\bigr)\times \Ball_{\HleqCirc{\omega}}\!\bigl(u,R-1\bigr)\;:\;
  \dist_{\HleqCirc{\omega}}(a,b) > k \Bigr\}.
\]
If $\card{P}>0.1\,n^{(k-1)/k}$, add $e$ to $H$ (and test symmetrically with $u,v$ swapped).
Edges added here are \emph{distance-reduction edges}.
An edge that met $\dist_{\HleqCirc{\omega}}(u,v)>k$ but was \emph{not} added has \emph{roughly close clusters}. See Figure~\ref{fig:phase4}.

\begin{figure}[h!]
    \centering
    \includegraphics[width=0.5\linewidth]{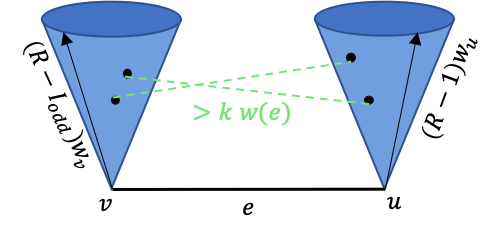}
    \caption{Global path reduction by edges.}
    \label{fig:phase4}
\end{figure}

\begin{claim}[Phase-4 size]\label{clm:phase4}
Phase~4 adds $O(n^{1+1/k})$ edges.
\end{claim}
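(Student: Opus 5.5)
We bound Phase~4 by a charging argument in the style of Lemmas~\ref{lem:no-lateral-because-1} and~\ref{lem:no-lateral-because-2}, over the threshold set $S_k$ of pairs whose distance drops below $k$. The subtlety is that distances are measured in the threshold-dependent graphs $\HleqCirc{\omega}$, not a single growing graph. We therefore define the charged objects as ordered pairs $(a,b)$ together with the weight at which their unweighted distance in the relevant threshold graph first becomes $\le k$.

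Fix a distance-reduction edge $e=\{u,v\}$ added at threshold $\omega=w(e)$ (for the orientation that passed the test). Let $H^{-}$ be $\HleqCirc{\omega}$ just before $e$ is added, and $H^{+}=H^{-}\cup\{e\}$. Every $(a,b)$ in the set $P$ of the test satisfies $\dist_{H^-}(a,b)>k$. After adding $e$ there is a walk $a\to v\to u\to b$ of length at most $(R-I_{\mathrm{odd}})+1+(R-1)=k$, so $\dist_{H^+}(a,b)\le k$. We charge $e$ to the more than $0.1\,n^{(k-1)/k}$ pairs in $P$.

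It remains to show that each ordered pair $(a,b)$ is charged at most once over the whole phase. Phase~4 processes edges in nondecreasing order of weight, and the graph $H$ only grows over time. Hence the sequence of graphs $\HleqCirc{\omega}$ seen during Phase~4 is monotone: each one contains all earlier ones, because both the weight threshold and the edge set are nondecreasing. Once $\dist(a,b)\le k$ holds in some such graph, it holds in every later one, so the pair is never again in a test set $P$. Edges added concurrently by Phases~1–3 were added before Phase~4 began, so they are already present and cause no issue. Ties in weight need a fixed tie-breaking order within the sequence, which keeps the graphs monotone.

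Hence the number of distance-reduction edges is at most
\[
\frac{n^2}{0.1\,n^{(k-1)/k}} = O\!\left(n^{1+1/k}\right).
\]
The symmetric test with $u,v$ swapped at most doubles the count.

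The main point to verify is this monotonicity across thresholds, that is, ensuring the charged pairs are witnessed in a nested chain of graphs. I expect this to be straightforward given the processing order. Note that no clustering lower bound is needed here, since the test itself certifies that $|P|$ is large.
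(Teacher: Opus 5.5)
Your proposal is correct and follows the paper's argument. You charge each distance-reduction edge to the more than $0.1\,n^{(k-1)/k}$ pairs of its witness set, each of which drops from distance above $k$ to at most $k$ via the walk $a\to v\to u\to b$ of length $2R-I_{\mathrm{odd}}=k$. You then use the nondecreasing weight order, which makes the threshold graphs $\HleqCirc{\omega}$ a nested chain, to show each ordered pair is charged at most once. You also make explicit the walk-length computation and the nesting argument, which the paper's short proof leaves implicit.
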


\begin{proof}
For each addition index $i$ with edge $e_i=(u_i,v_i)$ at threshold $\omega_i=w(e_i)$, let $H_i$ be the pre-addition graph and consider the \emph{distance-threshold set}
\[
  S_k \;=\; \bigl\{(x,y,i):\; \dist_{\HleqCirc{\omega_i}}(x,y) > k \text{ and } \dist_{\HleqCirc{\omega_{i+1}}}(x,y) \le k \bigr\}.
\]
Since $\omega_i$ is non decreasing, each ordered pair $(x,y)$ contributes to $S_k$ for at most one $i$, hence $\card{S_k}=O(n^2)$.
If $P_i$ is the witness set above for $e_i$, then $P_i\times\{i\}\subseteq S_k$ and $\card{P_i}>0.1\,n^{(k-1)/k}$, so the number of additions is $O(n^{1+1/k})$. 
\end{proof}

\paragraph{A Key Combinatorial Lemma.}
We now capture a type of $2$-paths that achieve the desired stretch of the spanner due to the application of phases 3 and 4. Such a $2$-path, say $(x,m,y)$ has one edge $(x,m)$ with fully clustered ends which wasn't added in phase 4, and one edge $(y,m)$ which wasn't added in phase $3$ as $m$'s cluster was already roughly contained in $y$'s lateral cluster. We will show that in such a case the path $P$ has the desired stretch by finding  a vertex $u$ as depicted in Figure~\ref{fig:no-need-paths}.

\begin{figure}
    \centering
    \includegraphics[width=0.5\linewidth]{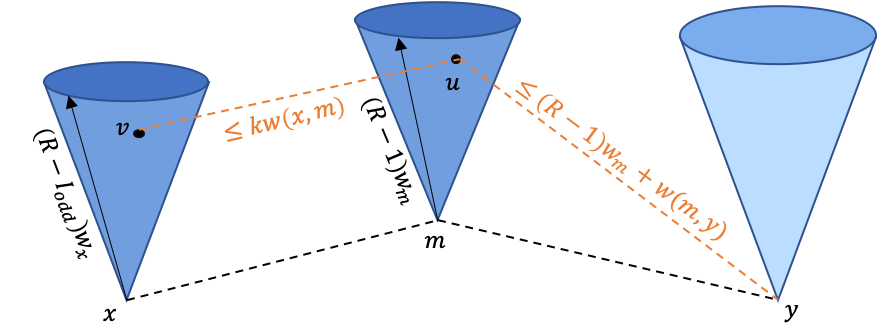}
    \caption{Paths that do not need to be added, and the corresponding replacement path.}
    \label{fig:no-need-paths}
\end{figure}

\begin{lemma}\label{lem:key-lemma}
Let $P=(x,m,y)$ be a $2$-path in $G$.
Assume that:
\begin{enumerate}[label=(\alph*),nosep]
\item $\{x,m\}$ has roughly close clusters (so both $x$ and $m$ are fully clustered and Phase~4 did not add $\{x,m\}$), and
\item the cluster of $m$ is roughly contained in the lateral cluster of $y$ (so Phase~3, when considered from $y$, did not add $\{y,m\}$ for failing the $T$-test).
\end{enumerate}
Let $H'$ be the subgraph consisting of all edges added prior to the (latest) consideration of either $\{x,m\}$ in Phase~4 or $\{y,m\}$ in Phase~3.
Then
\[
  \dist_{H'}(x,y) \;\le\; w(P) + (2k-2)\,w_{\max}(P).
\]
\end{lemma}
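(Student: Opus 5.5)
The plan is to build an explicit replacement walk $x \to a \to b \to m \to y$, as in Figure~\ref{fig:no-need-paths}, by a counting (pigeonhole) argument. Write $\omega_1=w(x,m)$ and $\omega_2=w(m,y)$. By Claim~\ref{clm:saturated} applied to the saturated edge $\{x,m\}$, both $x$ and $m$ are fully clustered in $\HleqCirc{\omega_1}$, so $w_m\le \omega_1$ and $w_x \le \omega_1$. We use two families of vertices.

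The first family is the set $C$ of vertices $b\in \Ball_{H}(m,(R-1)w_m)$ that also lie in $\Ball_H(y,\phi(y,m))$. Here $\phi(y,m)=(R-1)w_m+\omega_2$. Since $m$ is fully clustered at threshold $w_m$, the unweighted ball of radius $R-1$ in $\HleqCirc{w_m}$ has at least $n^{(R-1)/k}$ vertices. Every such vertex is within weighted distance $(R-1)w_m$ of $m$, so $|\Ball_H(m,(R-1)w_m)|\ge n^{(R-1)/k}$. Hypothesis (b) says the $T$-set has size at most $0.1\,n^{(R-1)/k}$, hence $|C|\ge 0.9\,n^{(R-1)/k}$. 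Moreover $C\subseteq \Ball_{\HleqCirc{\omega_1}}(m,R-1)$, because $w_m\le\omega_1$ and the unweighted ball at threshold $w_m$ is what realizes these weighted distances. I will need to be careful here: I should define $C$ through the unweighted ball $\Ball_{\HleqCirc{w_m}}(m,R-1)$ directly, so that containment in $\HleqCirc{\omega_1}$ is immediate by monotonicity.

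The second family comes from hypothesis (a). Take $A=\Ball_{\HleqCirc{\omega_1}}(x,R-I_{\mathrm{odd}})$, which has size at least $n^{(R-I_{\mathrm{odd}})/k}$. Consider the orientation of Phase~4 in which $v=x$ and $u=m$. Since $\{x,m\}$ was not added, the number of pairs in $A\times \Ball_{\HleqCirc{\omega_1}}(m,R-1)$ at unweighted distance $>k$ is at most $0.1\,n^{(k-1)/k}$. On the other hand,
\[
|A\times C|\ge 0.9\,n^{(R-I_{\mathrm{odd}}+R-1)/k}=0.9\,n^{(k-1)/k}.
\]
So some pair $(a,b)\in A\times C$ has $\dist_{\HleqCirc{\omega_1}}(a,b)\le k$. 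Each edge in these unweighted balls and paths has weight at most $\omega_1$. This gives $\dist_{H'}(x,a)\le (R-I_{\mathrm{odd}})\omega_1$ and $\dist_{H'}(a,b)\le k\omega_1$. Also $b\in C$ gives $\dist_{H'}(b,y)\le (R-1)w_m+\omega_2\le (R-1)\omega_1+\omega_2$.

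Summing these bounds (the walk goes $x\to a\to b\to y$ and skips $m$) gives
\[
\dist_{H'}(x,y)\le (R-I_{\mathrm{odd}}+k+R-1)\,\omega_1+\omega_2=(2k-1)\omega_1+\omega_2,
\]
using $2R-I_{\mathrm{odd}}=k$. The right-hand side equals $w(P)+(2k-2)\omega_1\le w(P)+(2k-2)w_{\max}(P)$, as required.

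The main obstacle is the bookkeeping of \emph{which} graph $H'$ each ball refers to. The Phase~3 test for $\{y,m\}$ and the Phase~4 test for $\{x,m\}$ happen at different times, and with different thresholds. I also need $\phi(y,m)$ to be controlled by $\omega_1$, via $w_m\le\omega_1$ from Claim~\ref{clm:saturated}. I would handle this by taking $H'$ to be the graph at the later of the two considerations and using monotonicity (Remark~\ref{rem:mono}). Under monotonicity, balls only grow, so the lower bound on $|C|$ and the upper bound on far pairs must be re-derived in the later graph. For $C$ I would use the monotonicity of $\Ball_H(y,\cdot)$. For far pairs, the count can only decrease over time, provided the Phase~4 set is evaluated in $\HleqCirc{\omega_1}$ at the later time. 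If the phases' timing makes this fail, I would fall back to taking $H'$ to be the final $H$ and accept the distance bound there, which is all the subsequent Phase~5 analysis needs.
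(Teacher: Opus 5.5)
Your proposal is correct and follows essentially the paper's argument. Like you, the paper takes $B_m=\Ball_{\HleqCirc{w_m}}(m,R-1)$ and removes the at most $0.1\,n^{(R-1)/k}$ vertices $F_y$ that are not within $(R-1)w_m+w(y,m)$ of $y$; your corrected $C$ is exactly $B_m\setminus F_y$. It then uses the failed Phase-4 test to find a pair at unweighted distance $\le k$ in $\HleqCirc{w(x,m)}$. The only cosmetic difference is that the paper bounds the set $F_x\subseteq B_m$ of vertices far from all of $x$'s $(R-I_{\mathrm{odd}})$-ball, whereas you count far pairs in $A\times C$ directly.

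Your worry about timing is unnecessary. Phase~3 entirely precedes Phase~4, so $H'$ is the graph at the Phase-4 consideration of $\{x,m\}$ and contains the Phase-3 graph, exactly as your argument requires.
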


\begin{proof}
By (a), $x$ and $m$ are fully clustered.
Let $w_m$ be the threshold at which $m$ became fully clustered and set
\(
  B_m := \{u : \dist_{\HleqCirc{w_m}}(m,u) \le R-1\}.
\)
Then $\card{B_m}\ge n^{(R-1)/k}$.
Let $H''$ be the subgraph when $\{y,m\}$ was considered in Phase~3.
Define
\[
  F_y := \bigl\{u\in B_m \;:\; \dist_{H''}(y,u) > (R-1)\,w_m + w(y,m)\bigr\}.
\]
Since the $T$-test failed from $y$, $\card{F_y} \le 0.1\,n^{(R-1)/k}$.
Let $w_x$ be the clustering threshold of $x$, let $H'$ be as in the statement, and define
\[
  F_x := \Bigl\{u\in B_m \;:\; \forall v\in \Ball_{\HleqCirc{w_x}}(x,R-I_{\mathrm{odd}}),\;
  \dist_{\HleqCirc{w(x,m)}}(u,v) > k \Bigr\}.
\]
If $\card{F_x}$ were $>0.1\,n^{(R-1)/k}$ then, because $\card{\Ball_{\HleqCirc{w_x}}(x,R-I_{\mathrm{odd}})}\ge n^{(R-I_{\mathrm{odd}})/k}$ (full clustering of $x$),
the Cartesian product would give $>0.1\,n^{(k-1)/k}$ pairs at distance $>k$, forcing Phase~4 to add $\{x,m\}$—contradiction.
Hence $\card{F_x}\le 0.1\,n^{(R-1)/k}$.

Therefore, $\card{B_m\setminus(F_x\cup F_y)} \ge 0.8\,n^{(R-1)/k} \ge 1$, so we can pick $u\in B_m\setminus(F_x\cup F_y)$
and some $v\in \Ball_{\HleqCirc{w_x}}(x,R-I_{\mathrm{odd}})$ with $\dist_{\HleqCirc{w(x,m)}}(u,v)\le k$.
By construction, 
\[
  \dist_{H'}(y,u) \le (R-1)\,w_m + w(y,m),\qquad \dist_{H'}(x,v) \le (R-I_{\mathrm{odd}})\,w_x,\qquad \dist_{H'}(v,u) \le k\,w(x,m).
\]
Concatenating, see again \Cref{fig:phase4}, we get
\begin{align*}
 \dist_{H'}(x,y)
 &\le \dist_{H'}(x,v)+\dist_{H'}(v,u)+\dist_{H'}(u,y)\\
 &\le (R-I_{\mathrm{odd}})\,w_x + k\,w(x,m) + (R-1)\,w_m + w(y,m)\\
 &\le (2k-1)\,w(x,m) + w(y,m) \;\le\; w(P) + (2k-2)\,w_{\max}(P),
\end{align*}
where the third inequality uses $R-I_{\mathrm{odd}}+R-1 = k-1$ and $w_x\le w(x,m)$, $w_m\le w(x,m)$.
\end{proof}

\paragraph{Phase 5: Greedy Path Additions (Final Repairs).}
Let $\mathcal{P}$ be the set of $2$-paths $P=(x,m,y)$ that still lack a replacement path in $H$ of weight at most $w(P)+(2k-2)w_{\max}(P)$.

\begin{lemma}\label{lem:exists-sat}
Every $P\in\mathcal{P}$ contains a saturated edge.
\end{lemma}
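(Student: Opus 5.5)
We argue by contrapositive: we take a $2$-path $P=(x,m,y)$ in which neither edge is saturated, and show that $P$ already has a replacement path of weight at most $w(P)+(2k-2)w_{\max}(P)$ in $H$ at the end of Phase~4, so that $P\notin\mathcal{P}$. Since $H$ only grows through the phases and distances only shrink, it is enough to exhibit such a path in any intermediate subgraph of $H$.

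The first step is to classify each edge $e=\{u,v\}$ according to what happened to it in Phase~2. When Phase~2 processed $e$ at threshold $\omega=w(e)$, exactly one of three things held. Either (1) $e$ was added to $H$; or (2) condition (i) failed, i.e.\ $\dist_{\HleqCirc{\omega}}(u,v)\le k$; or (3) condition (i) held but (ii) failed, in which case $e$ is saturated by definition. So a non-saturated edge is either in $H$ or satisfies case (2).

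In case (2), the unweighted graph $\HleqCirc{\omega}$ contains a $u$--$v$ path of at most $k$ hops, each of weight at most $\omega$. Hence $\dist_H(u,v)\le k\,w(e)$.

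Now suppose neither $\{x,m\}$ nor $\{m,y\}$ is saturated. If either edge lies in $H$, \Cref{obs:edge-in-H} already gives the target stretch. Otherwise both edges fall into case (2), and concatenating the two replacement paths gives
\[
\dist_H(x,y)\le k\,w(x,m)+k\,w(m,y)\le w(P)+(k-1)\bigl(w(x,m)+w(m,y)\bigr)\le w(P)+(2k-2)w_{\max}(P).
\]
This contradicts $P\in\mathcal{P}$, so every path in $\mathcal{P}$ contains a saturated edge.

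There is no real obstacle here. The only point needing care is the bookkeeping: the replacement paths must be taken in the subgraph $H$ as it stood at the relevant time, and we use monotonicity of $H$ to transfer them to the final graph. We should also note that the inequality $k(a+b)\le (a+b)+(2k-2)\max(a,b)$ is immediate.
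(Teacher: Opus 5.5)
Your proposal is correct and follows essentially the same route as the paper. Both arguments rule out edges already in $H$ via Observation~\ref{obs:edge-in-H}, note that a non-saturated edge outside $H$ must have failed condition (i) in Phase~2 and so has a replacement path of weight at most $k\,w(e)$, and concatenate the two replacement paths. Your explicit check of $k(a+b)\le (a+b)+(2k-2)\max(a,b)$ and your monotonicity remark spell out details the paper leaves implicit.
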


\begin{proof}
If $E(P)\cap E(H)\ne\emptyset$, Observation~\ref{obs:edge-in-H} yields the required stretch, so for $P\in\mathcal{P}$ neither edge lies in $H$.
When each edge of $P$ was inspected in Phase~2, either it was saturated or $\dist_{\HleqCirc{w(e)}}(\cdot,\cdot)\le k$ held.
If the latter held for \emph{both} edges, then concatenating the two $k$-bounded subpaths gives
\(
  \dist_H(x,y)\le k\,w(x,m)+k\,w(m,y)\le w(P)+(2k-2)\,w_{\max}(P),
\)
 contradiction.
Thus at least one edge is saturated.
\end{proof}

By the above lemma, for each $P\in \mathcal{P}$ we can choose:
\begin{itemize}[nosep]
\item one saturated edge $e_{P,\mathrm{sat}}\in E(P)$, and
\item the other edge $e_{P,\mathrm{lat}}$ (``lat'' for \emph{lateral}).
\end{itemize}
Process the paths of $\mathcal{P}$ in increasing order of the key: 
\[2\,w(e_{P,\mathrm{lat}})+(k-1)\,w(e_{P,\mathrm{sat}})\]
When a path $P$ is encountered, if the desired replacement path is still absent, insert \emph{both} edges of $P$ into $H$.

\begin{claim}[Phase-5 size]\label{clm:phase5}
Phase~5 adds $O(n^{1+1/k})$ \emph{paths} (hence $O(n^{1+1/k})$ edges).
\end{claim}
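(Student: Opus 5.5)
The plan is a path-buying count in the style of \Cref{lem:no-lateral-because-1}, with the ordering of Phase~5 chosen so that each vertex pair is charged at most once. Fix a path $P=(x,m,y)\in\mathcal{P}$ that Phase~5 actually adds. Write $e_{P,\mathrm{sat}}=\{x,m\}$; if the saturated edge is $\{m,y\}$, rename. Set $a:=w(x,m)$, $b:=w(m,y)$, and let $\kappa(P):=2b+(k-1)a$ be the key.

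First, case analysis. Since $P$ is still missing its replacement path after Phases~1--4, \Cref{obs:edge-in-H} shows that neither edge is in $H$. In Phase~4 the edge $\{x,m\}$ met the condition $\dist_{\HleqCirc{a}}(x,m)>k$ (it was saturated), and it was not added, so it has roughly close clusters. By \Cref{lem:key-lemma}, condition (b) of that lemma must fail. Because $m$ is fully clustered with $w_m\le a$, $m$ was a candidate when $y$ processed its neighbours in Phase~3. Since $\{y,m\}$ was not added and the $T$-test did not fail, $m$ must have been \emph{saturated when considered by $y$}. This means $\card{\Ball_{H''}(y,\phi(y,m))}\ge n^{R/k}$, where $\phi(y,m)=(R-1)w_m+b\le (R-1)a+b$. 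So $y$ has a large ``weighted lateral cluster''
\[
B_y:=\Ball_{H}\bigl(y,(R-1)a+b\bigr),\qquad \card{B_y}\ge n^{R/k}.
\]
On the other side, $x$ is fully clustered at threshold $w_x\le a$, so $B_x:=\Ball_{\HleqCirc{a}}(x,R-1-I_{\mathrm{odd}})$ has size at least $n^{(R-1-I_{\mathrm{odd}})/k}$. Every vertex of $B_x$ is within weighted distance $(R-1-I_{\mathrm{odd}})a$ of $x$. The product $B_x\times B_y$ therefore has at least $n^{(k-1)/k}$ pairs.

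Second, the threshold set. For a pair $(v_x,v_y)\in B_x\times B_y$, adding $P$ creates the walk $v_x\to x\to m\to y\to v_y$, of weight at most
\[
(R-1-I_{\mathrm{odd}})a + a + b + (R-1)a + b = (k-1)a+2b=\kappa(P),
\]
using $2R-1-I_{\mathrm{odd}}=k-1$. Before the addition, the distance must have been larger than $\kappa(P)$. Otherwise concatenating the subpaths $x\to v_x$, $v_x\to v_y$ and $v_y\to y$ gives $\dist_H(x,y)\le 2\kappa(P)$-type bounds, and these need to be at most $a+b+(2k-2)\max(a,b)$.

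This is the step I expect to be the main obstacle. The naive concatenation gives about $2(k-1)a+3b$, which is too large. I would try two fixes. One is to shrink the balls by one level, as the paper does with $R-1-I_{\mathrm{odd}}$ versus $R$. The other is to charge to a threshold set of the form
\[
S=\{(u,v,P):\dist_{H_P}(u,v)>\tau(P),\ \dist_{H_{P^+}}(u,v)\le \tau(P)\},
\]
with a threshold $\tau(P)$ proportional to $\kappa(P)$. The idea is to define ``$v_x$ is far from $v_y$'' as $\dist_H(v_x,v_y)>\kappa(P)$ and to show that, for all but half of the pairs, this holds before the addition. If at least half the pairs were already within $\kappa(P)$, I would argue that some such pair yields a replacement path of the required weight. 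That would contradict $P\in\mathcal{P}$, possibly after using \Cref{lem:key-lemma}-style averaging over $B_m$.

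Third, uniqueness of charging. Phase~5 processes paths in nondecreasing order of $\kappa$ and $H$ only grows. So once a pair $(u,v)$ satisfies $\dist_H(u,v)\le \kappa(P)$, it stays within every later, larger threshold. Hence each ordered pair crosses ``its current threshold'' at most once; this is the analogue of $\card{S_t}=O(n^2)$ with a monotone threshold. Summing gives $\card{\text{added}}\cdot \Omega(n^{(k-1)/k})\le n^2$, that is, $O(n^{1+1/k})$ paths. It remains to check that balls taken in $H$ at the time of Phase~3 or Phase~2 only grow, so the size bounds persist until Phase~5.
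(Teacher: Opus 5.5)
Your set-up is the paper's. You use the product $B_x\times B_y$ with $\card{B_x}\ge n^{(R-1-I_{\mathrm{odd}})/k}$ and $\card{B_y}\ge n^{R/k}$, the threshold $\kappa(P)=(k-1)a+2b$ (which is the Phase~5 key), and the monotonicity argument that charges each ordered pair at most once.

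\textbf{The gap.} It is the step you flag yourself: you never show that every pair of $B_x\times B_y$ was at distance $>\kappa(P)$ before $P$ was added. Your reason for thinking the direct argument fails is an off-by-one error. The concatenation $x\to v_x\to v_y\to y$ has weight at most
\[
(R-1-I_{\mathrm{odd}})\,a+\kappa(P)+(R-1)\,a+b=(2k-3)\,a+3b,
\]
since $(R-1-I_{\mathrm{odd}})+(R-1)=k-2$. The extra $a$ that brought your walk bound to $(k-1)a$ came from the edge $\{x,m\}$, which this concatenation does not use.

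For $k\ge 2$ we have $(2k-3)a+3b\le a+b+(2k-2)\max(a,b)$. The slack is $2(a-b)$ if $a\ge b$, and $(2k-4)(b-a)$ if $b>a$. So if even one pair of $B_x\times B_y$ were within $\kappa(P)$ just before $P$ is processed, $P$ would already have its replacement path and Phase~5 would not add it.

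Hence every pair is charged. No halving, no averaging over $B_m$, and no shrinking of balls is needed. The bound (number of added paths)$\,\cdot\, n^{(k-1)/k}\le n^2$ then follows exactly as in the paper. As written, though, you replace this decisive inequality with alternative strategies you do not carry out, so the proposal does not yet prove the claim.

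\textbf{A minor point, shared with the paper.} Saturation of $\{x,m\}$ in Phase~2 does not by itself give $\dist_{\HleqCirc{a}}(x,m)>k$ at Phase~4, because Phase~3 edges are added in between. So hypothesis (a) of \Cref{lem:key-lemma} can fail, contrary to what you assert. This case is harmless. Since $\{y,m\}\notin H$, either $m$ was saturated when considered by $y$, and your counting applies, or the $T$-test failed. In the latter case take any $u\in B_m\setminus F_y$ as in the proof of \Cref{lem:key-lemma}. Using $2R\le k+1$, this gives
\[
\dist_H(x,y)\le ka+(R-1)a+(R-1)a+b\le(2k-1)a+b,
\]
which is within the target stretch.
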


\begin{figure}
    \centering
    \includegraphics[width=0.5\linewidth]{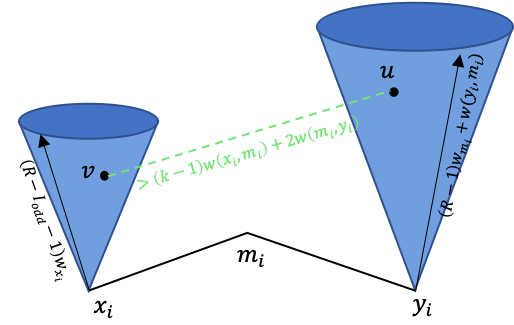}
    \caption{Greedy path additions}
    \label{fig:phase5}
\end{figure}

\begin{proof}
Let $P_i=(x_i,m_i,y_i)$ be the $i$-th added path and $H_i$ the current graph before adding $P_i$. Assume w.l.o.g that the saturated edge in this path is $\{x_i,m_i\}$.
By Lemma~\ref{lem:key-lemma}, for $P_i$ to be still ``bad'', when $\{y_i,m_i\}$ was considered from $y_i$ in Phase~3 it must have been \emph{saturated} (meaning that the ball of radius $w(m_i,y_i)+(R-1)w_{m_i}$ has many vertices).
Also, since $\{x_i,m_i\}$ is saturated, each of its endpoints has an $(R-1-I_{\mathrm{odd}})$-cluster.
Consequently,
\[
  \card{\Ball_{H_i}\!\bigl(y_i,(R-1)w_{m_i}+w(y_i,m_i)\bigr)} \;\ge\; n^{R/k}
  \quad\text{and}\quad
  \card{\Ball_{H_i}\!\bigl(x_i,(R-1-I_{\mathrm{odd}})w_{x_i}\bigr)} \;\ge\; n^{(R-1-I_{\mathrm{odd}})/k}.
\]
Consider the set 
\[
  S'_i \;=\;\Ball_{H_i}\!\bigl(x_i,(R-1-I_{\mathrm{odd}})w_{x_i}\bigr) \times  \Ball_{H_i}\!\bigl(y_i,(R-1)w_{m_i}+w(y_i,m_i)\bigr) .
\]
Define the distance threshold family
\[
S' \;=\; \left\{(u,v,i):\;
   \begin{aligned}
     &\dist_{H_i}(u,v) > (k-1)\, w(x_i,m_i)+2\,w(y_i,m_i), \\
     &\dist_{H_{i+1}}(u,v) \le (k-1)\,w(x_i,m_i)+2\,w(y_i,m_i)
   \end{aligned}
\right\}.
\]

It is hence enough to prove the following claim, that says that the naming $S_i'$ is justified -- that is that $S_i'$ is contained in the $i^{th}$ projection of $S'$. Together with the processing order of the keys, this implies that different indices $i$ yield disjoint $(u,v)$-projections, whence $\card{S'}=O(n^2)$, concluding the proof.

Observe, that for every $(u,v)\in S'_i$ the addition of $P_i$ creates a walk of length at most \[(R-1)w_{m_i}+w(y_i,m_i) + w(P_i) + (R-1-I_{\mathrm{odd}})w_{x_i} \le (k-1)\,w(x_i,m_i)+2\,w(y_i,m_i).\] On the other hand, if $\dist_{H_i}(u,v) \leq (k-1)\,w(x_i,m_i)+2\,w(y_i,m_i)$ then, by the triangle 
inequality:
\begin{align}
\dist_{H_i}(x_i, y_i) 
   &\leq \dist_{H_i}(x_i, u) + \dist_{H_i}(u, v) + \dist_{H_i}(v, y_i) \notag\\
   &\leq (R-1-I_{\mathrm{odd}})w_{x_i} 
        + (k-1)\,w(x_i,m_i) + 2\,w(y_i,m_i) 
        + (R-1)w_{m_i} + w(y_i,m_i) \notag\\
   &\leq (2k-3)w(x_i, m_i) + 3w(m_i,y_i) \notag\\
   &\leq w(P) + (2k-2)w_{\max}(P)
\end{align}
and so $P$ would not be added by the algorithm. This shows that $\dist_{H_i}(u,v) > (k-1)\,w(x_i,m_i)+2\,w(y_i,m_i)$ for all $(u,v) \in S_i'$.

So $S'_i\times\{i\}\subseteq S'$.
Since $\card{S'_i}\ge n^{R/k}\cdot n^{(R-1-I_{\mathrm{odd}})/k}=n^{(k-1)/k}$, the number of added paths is $O(n^{1+1/k})$. See Figure~\ref{fig:phase5}.
\end{proof}

\subsection*{Correctness and Size}

Recall that for a path $P$, $w^{(1/2)}(P)$ denotes the sum of the $\lceil \frac{\ell}{2}\rceil$ highest weights along $P$.

\begin{theorem}\label{thm:weighted-main}
Given a graph $G$ the five-phase construction outputs a subgraph $H$ such that:
\begin{enumerate}
    \item For any two vertices $x,y \in V(G)$ and a path $P$ between them we have $\dist_H(x,y) \leq w(P) + (2k-2) w^{(1/2)}(P)$;
    \item $|E(H)| = O\!\left(n^{1+1/k}\right)$. 
\end{enumerate}

\end{theorem}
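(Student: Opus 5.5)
The size bound (item 2) is immediate by summing the contributions of the five phases. Phase 1 contributes $O(n^{1+1/k})$ edges by the choice of the initial $(2k-1)$-spanner. Phase 2 is covered by \Cref{lem:phase2}, Phase 3 by \Cref{clm:phase3}, Phase 4 by \Cref{clm:phase4}, and Phase 5 by \Cref{clm:phase5}, which gives $O(n^{1+1/k})$ paths and hence at most twice as many edges.

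For item 1, I would first establish the $2$-path guarantee: every $2$-path $Q=(a,m,b)$ in $G$ satisfies $\dist_H(a,b)\le w(Q)+(2k-2)w_{\max}(Q)$. This is exactly how Phase 5 is defined. Any $2$-path not in $\mathcal{P}$ already has such a replacement path in $H$ after Phase 4, and $H$ only grows. Any path in $\mathcal{P}$ is processed in Phase 5, where it either already has a replacement path or both of its edges are inserted, giving $\dist_H(a,b)\le w(Q)$. I also record the single-edge guarantee from Phase 1: $\dist_H(u,v)\le (2k-1)w(u,v)=w(e)+(2k-2)w(e)$ for every edge $e=\{u,v\}$.

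Next I would decompose a general path $P=(v_0,v_1,\dots,v_\ell)$ into consecutive blocks. If $\ell$ is even, I split it into $\ell/2$ $2$-paths $Q_j=(v_{2j-2},v_{2j-1},v_{2j})$. If $\ell$ is odd, I take $(\ell-1)/2$ such $2$-paths plus one single edge. Applying the triangle inequality in $H$ along the block endpoints and the guarantees above to each block gives
\[
\dist_H(x,y)\le \sum_j \bigl(w(Q_j)+(2k-2)\,w_{\max}(Q_j)\bigr) = w(P)+(2k-2)\sum_j w_{\max}(Q_j).
\]
The number of blocks is exactly $\lceil \ell/2\rceil$, and they are edge-disjoint. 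The sum $\sum_j w_{\max}(Q_j)$ selects one distinct edge from each block, so it is a sum of $\lceil \ell/2\rceil$ distinct edge weights of $P$. It is therefore at most the sum of the $\lceil\ell/2\rceil$ largest weights, which is $w^{(1/2)}(P)$. This completes item 1.

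The only possible subtlety is whether the Phase 5 guarantee truly holds for every $2$-path. The key point is that the greedy check in Phase 5 is made against the current $H$ at the time each path is processed. Since all later additions only shrink distances, the guarantee persists to the final $H$. Every path outside $\mathcal{P}$ already satisfied the bound before Phase 5, so nothing further is needed; correctness does not depend on \Cref{lem:key-lemma}, which only matters for the size count. I expect no real obstacle in this proof; the substantive work is contained in the phase-size claims.
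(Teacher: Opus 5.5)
Your proposal is correct and follows the paper's proof. Both arguments sum the per-phase size bounds, establish the $2$-path guarantee, take the single-edge guarantee from the Phase~1 $(2k-1)$-spanner, and split a general path into $\lceil \ell/2\rceil$ blocks of length one or two. There are two small improvements over the paper's write-up: you derive the $2$-path guarantee directly from the Phase~5 greedy repair instead of routing it through Observation~\ref{obs:edge-in-H}, Lemma~\ref{lem:exists-sat} and Lemma~\ref{lem:key-lemma}, which are really only needed for the size count, and you justify explicitly that the block maxima are $\lceil \ell/2\rceil$ distinct edge weights and hence at most $w^{(1/2)}(P)$, a step the paper leaves implicit.
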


\begin{proof}
\emph{Correctness.}
If a $2$-path $P$ has an edge in $H$, Observation~\ref{obs:edge-in-H} applies.
Otherwise, by Lemma~\ref{lem:exists-sat} $P$ contains a saturated edge; if the other edge together with Phase~3/4 conditions satisfies Lemma~\ref{lem:key-lemma}, the desired bound holds already.
Any remaining bad $2$-path is fixed in Phase~5 by inserting its two edges.

The result for the paths of length one follows from the fact that $H$ contains a usual $(2k-1)$-spanner and for paths of length greater than $2$ it can be obtained by dividing the path into segments of lengths one and two, with at most one piece having length one.
\smallskip

\noindent\emph{Size and time.}
Lemmas/claims~\ref{lem:phase2}, \ref{clm:phase3}, \ref{clm:phase4}, and~\ref{clm:phase5} give
\[
  \card{E(H)} \;=\; O\!\left(n^{1+1/k}\right).
\]
All phases scan edges/paths and maintain weighted balls in subgraphs can be maintained and computed; shortest paths can be computed by Dijkstra's algorithm, hence the algorithm may be implemented in polynomial time. 
\end{proof}

\section{Fault Tolerant Spanners}\label{sec:ft-spanners}

\begin{definition}[EFT $(d\to r)$ spanner]\label{def:eft-vft}

Let $\G=(\V,\E)$ be a (multi)graph and $f\in\mathbb{N}$. A subgraph $H\subseteq \G$ is an \emph{$f$-edge-fault tolerant} ($f$-EFT) \emph{$(d\to r)$ spanner} if for every $F\subseteq \E$ with $|F|\le f$ and all $x,y\in \V$,
    \[
        \dist_{G-F}(x,y)=d \;\Longrightarrow\; \distX{H-F}(x,y)\le r.
    \]
    
\end{definition}

We next describe the greedy construction used throughout this section. 

\begin{algorithm}[h]
\caption{\textsc{EFT-Greedy $d\to r$ Spanner}$(\G,d,r,f)$}\label{alg:ft-greedy}
\DontPrintSemicolon

$H\gets(\V,\emptyset)$\;

\ForEach{ $x,y\in V$}{
  \ForEach{Path $P$ of length $d$ from $x$ to $y$}{}
  \If{$\exists\,F\subseteq \E$ with $|F|\le f$ and $F\cap E(P)=\emptyset$ s.t.\ $\dist_{H-F}\big(x,y\big)>r$}{
    $E(H)\gets E(H)\cup E(P)$\;
  }
}
\Return $H$\;
\end{algorithm}

By construction, the output of \Cref{alg:ft-greedy} is an $f$-EFT $(d\to r)$ spanner. To bound the size, we formalize the obstruction to adding a path via \emph{blocking sets}.

\begin{definition}[$(f,d\to r)$ blocking set]\label{def:blocking}
Let $P_1,\ldots,P_t$ be paths of length $d$ on a common vertex set $\V$. We say they admit an \emph{$(f,d\to r)$ edge-blocking set} if for every $i$ there exists a set $F_i\subseteq \E$ with $|F_i|\le f$ and $F_i\cap E(P_i)=\emptyset$ such that
\[
  \distX{\left(\bigcup_{j<i} P_j\right)-F_i}\!\big(x(P_i),y(P_i)\big) \;>\; r.
\]

Every ordered pair $(P_i,a)$ with $a\in F_i$ is called a \emph{block}.
\end{definition}

\begin{claim}
    Let $P_1, ..., P_t$ be the $d$-paths added by \Cref{alg:ft-greedy}. Then they admit an $(f, d\to r)$ edge-blocking set. 
\end{claim}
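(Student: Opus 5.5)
The claim follows directly from the greedy condition in \Cref{alg:ft-greedy}. The only real task is to match the state of $H$ at each addition with the union $\bigcup_{j<i} P_j$ that appears in \Cref{def:blocking}.

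Index the paths in the order the algorithm adds them: let $P_1,\ldots,P_t$ be the $d$-paths whose edges were inserted into $H$, in order of insertion. Let $H_i$ denote the state of $H$ just before $P_i$ is added. The key observation is that $H$ changes only when a path is added, and each such change inserts exactly $E(P_j)$. Hence
\[
E(H_i)=\bigcup_{j<i}E(P_j),
\]
so $H_i$ coincides with $\bigcup_{j<i}P_j$ as a graph on $\V$. This requires a short induction on the steps of the algorithm: paths that are scanned but not added leave $H$ unchanged.

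Now fix $i$. Since $P_i$ was added, the test in the \textbf{if} clause succeeded at that moment. This provides a set $F_i\subseteq\E$ with $|F_i|\le f$ and $F_i\cap E(P_i)=\emptyset$ such that $\dist_{H_i-F_i}(x(P_i),y(P_i))>r$. Substituting the identity above gives
\[
\distX{\left(\bigcup_{j<i}P_j\right)-F_i}\bigl(x(P_i),y(P_i)\bigr)>r.
\]
This is exactly the condition of \Cref{def:blocking}, so the sets $F_1,\ldots,F_t$ form an $(f,d\to r)$ edge-blocking set.

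There is no real obstacle here. The only subtlety concerns multigraphs. There, paths are sequences of specific parallel edges, and $F_i$ must be disjoint from the specific edges of $P_i$. Removing $F_i$ from $\bigcup_{j<i}P_j$ should be read as deleting those specific edge copies. With that reading, the identification $H_i=\bigcup_{j<i}P_j$ holds at the level of edge multisets, and the argument goes through unchanged.
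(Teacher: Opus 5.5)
Your proposal is correct and matches the paper's proof: both take $F_i$ to be the fault set witnessing the \textbf{if}-condition when $P_i$ is added, and use that the current spanner $H_i$ equals $\bigcup_{j<i}P_j$. Your short induction for this identity and your remark about deleting specific parallel edges in multigraphs only spell out what the paper leaves implicit.
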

\begin{proof}
    For each $i$ let $F_i$ be the set $F$ required by \Cref{alg:ft-greedy} to add $P_i$ to the current spanner $H_i = \bigcup_{j < i} P_i$. It means that $F_i \cap E(P_i) = \emptyset$, $|F_i| \leq f$ and $\dist_{H_i - F_i}(x(P_i),y(P_i)) > r$. This shows that the sets $F_i$ satisfy \Cref{def:blocking}.
\end{proof}

In light of the above, it's enough to obtain combinatorial bounds on the size of path collections with an $(f,d\to r)$ blocking set, to conclude size bounds on the output of the FT greedy algorithm.
We first observe that one can reduce the size of FT greedy spanners and the corresponding standard greedy spanners by sub-sampling, as appears in  \cite{BodwinP19}. Let $g(n,d\to r)$ denote the maximum number of length-$d$ paths that can be added by the \emph{non-FT} greedy $(d\to r)$ process for \emph{path collections} \Cref{alg:greedy-spanner-path-collections} on an $n$-vertex input (i.e., the standard greedy that adds a path whenever the current distance between its endpoints exceeds $r$).

\begin{theorem}\label{thm:edge-blocking}
Suppose $P_1,\ldots,P_t$ admit an $(f,d\to r)$ edge-blocking set. Then
\[
  t \;=\; O\!\big(d\, f^{\,d}\, g(n,d\to r)\big).
\]
\end{theorem}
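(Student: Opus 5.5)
We will prove the bound by random sub-sampling, in the style of \cite{BodwinP19}. The idea is to reduce the fault-tolerant setting to the non-FT greedy process for path collections. Fix the paths $P_1,\ldots,P_t$ together with their blocking sets $F_i$. We sample a random set of edges $E'\subseteq \E$ by keeping each edge independently with probability $p=\frac{1}{2f}$; more precisely, every edge is \emph{kept} with probability $p$ and otherwise treated as \emph{deleted}.

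We call a path $P_i$ \emph{good} if two events hold. First, every edge of $P_i$ is kept, which happens with probability $p^d$. Second, every edge of $F_i$ is deleted, which happens with probability $(1-p)^f\ge \frac12$. Since $F_i\cap E(P_i)=\emptyset$, these two events are independent. Hence
\[
\Pr[P_i \text{ good}]\ge \tfrac12 (2f)^{-d},
\]
and the expected number of good paths is at least $t\,(2f)^{-d}/2$.

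The key step is to show that the good paths, taken in their original order, form a valid run of \Cref{alg:greedy-spanner-path-collections}. That is, for each good $P_i$,
\[
\dist_{\bigcup_{j<i,\,P_j \text{ good}} P_j}\big(x(P_i),y(P_i)\big)>r.
\]
To see this, note that every good $P_j$ has all of its edges kept. Every edge of $F_i$ is deleted, because $P_i$ is good. So no earlier good path uses an edge of $F_i$, which gives
\[
\bigcup_{j<i,\,P_j \text{ good}} P_j\subseteq \Big(\bigcup_{j<i}P_j\Big)-F_i .
\]
Distances in a subgraph can only be larger, so the blocking property transfers to the good paths. Therefore the good paths form a collection on which the non-FT greedy process adds every path, and their number is at most $g(n,d\to r)$. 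Comparing with the expectation gives $t=O\big((2f)^d\, g(n,d\to r)\big)$.

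The main obstacle is the multigraph and multiplicity bookkeeping needed to reach exactly the stated $O(d f^d g)$ form rather than $O(2^d f^d g)$. If the constant $2^d$ must be avoided, we would instead choose $p=\frac{d}{d+f}$. Then
\[
p^d(1-p)^f=\Big(\frac{d}{d+f}\Big)^d\Big(\frac{f}{d+f}\Big)^f,
\]
which is at least roughly $\big(\frac{d}{e f}\big)^d$ times a polynomial factor. This should yield the $d\,f^d$ form, possibly after an additional averaging over a partition of paths by their first edge, which is where the extra factor of $d$ could enter. A second subtlety is that the same edge may appear in several paths, or repeated paths may arise in a multigraph. The greedy bound $g$ counts paths as added objects in a collection, and the subgraph comparison above does not require distinct paths, so this should not be an issue.
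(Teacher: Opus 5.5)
Your argument follows the paper's proof. You sample edges with probability $\Theta(1/f)$ and keep the paths whose edges all survive and whose blocking set $F_i$ is entirely deleted; this is exactly the paper's ``surviving path that lies in no surviving block''. You then use $\bigcup_{j<i,\,P_j\text{ good}}P_j\subseteq\bigl(\bigcup_{j<i}P_j\bigr)-F_i$ to see that these paths form a valid run of the non-FT path-collection greedy, so there are at most $g(n,d\to r)$ of them. Computing $\Pr[P_i\text{ good}]=p^d(1-p)^{|F_i|}$ directly via independence, as you do, is even slightly sharper than the paper's $\mathbb{E}[X]-\mathbb{E}[Y]$ estimate.

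The one gap is the dependence on $d$. With $p=\frac{1}{2f}$ you obtain $t\le 2(2f)^d\,g(n,d\to r)$, which is not $O(d f^d g)$ once $d$ may grow. The repair you sketch is not justified as written. The heuristic ``at least roughly $(d/(ef))^d$ times a polynomial factor'' is wrong for small $f$: for $f=1$ the true value of $p^d(1-p)^f$ is $\Theta(1/d)$, whereas $(d/e)^d$ is enormous. Also, no averaging over first edges is needed anywhere.

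The fix is a one-line change of parameter, namely the paper's choice $c=\frac{d}{d+1}$. Take $p=\frac{d}{(d+1)f}$. By Bernoulli's inequality $(1-p)^{|F_i|}\ge 1-fp=\frac{1}{d+1}$, and $p^d=\bigl(1-\frac{1}{d+1}\bigr)^d f^{-d}\ge \frac{1}{e f^d}$. Hence $\Pr[P_i\text{ good}]\ge\frac{1}{e(d+1)f^d}$, and therefore $t\le e(d+1)f^d\,g(n,d\to r)$.

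Your choice $p=\frac{d}{d+f}$ can also be made to work, by bounding $(1+\frac{f}{d})^d(1+\frac{d}{f})^f=O(df^d)$ separately for $f\ge d$ and $f<d$. That requires an actual computation rather than the heuristic you gave.
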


\begin{proof}
Sample each edge of $\bigcup_{i=1}^t E(P_i)$ independently with probability $p=c/f$ for a constant $c\in(0,1)$. Let $G_p$ be the sampled subgraph. A fixed path $P_i$ of length $j$ survives with probability $p^{j}$; a fixed block $(P_i,e)$ survives with probability $p^{j+1}$.

Let $X$ be the number of surviving paths and $Y$ the number of surviving blocks. Then
\[
\mathbb{E}[X]=t\,p^{d},\qquad \mathbb{E}[Y]\le t\,f\,p^{d+1}.
\]
Hence there exists a choice of $G_p$ in which the number $Z$ of surviving paths that do not participate in any surviving block satisfies
\[
  Z \;\ge\; X-Y \;\ge\; t\,p^{d}\bigl(1-fp\bigr) \;=\; t\,\Bigl(\tfrac{c}{f}\Bigr)^{\!d}\,(1-c).
\]
Choosing $c=\frac{d}{d+1}$ yields a subfamily of $s=\Omega\!\big(t/(d\,f^{d})\big)$ \emph{unblocked} surviving paths. Order these as $P_{b_1},\ldots,P_{b_s}$ by increasing original indices.

For each $\ell$, since $F_{b_\ell}$ blocks $P_{b_\ell}$ and contains no edge of any earlier $P_{b_j}$ (otherwise $(P_{b_\ell},e)$ would be a surviving block), we have
\begin{equation}\label{eq:addable}
  r \;<\; \distX{\left(\bigcup_{j<b_\ell} P_j\right)\!-\!F_{b_\ell}}\!\big(x(P_{b_\ell}),y(P_{b_\ell})\big)
  \;\le\; \distX{\bigcup_{j<\ell} P_{b_j}}\!\big(x(P_{b_\ell}),y(P_{b_\ell})\big).
\end{equation}

Thus $P_{b_\ell}$ is addable by the standard (non-FT) greedy $(d\to r)$ process for paths on the union graph $G'=(\V,\bigcup_{j=1}^{s}E(P_{b_j}))$. Consequently $s\le g(n,d\to r)$, and rearranging gives $t=O\!\big(d\, f^{d}\, g(n,d\to r)\big)$.
\end{proof}

\begin{remark}\label{rmk:ft-path-collection}
We used the fact that $g(n,d\to r)$ is a bound on the path collection greedy spanner, in the last paragraph in the previous proofs. In particular, it implies we do not need the selected paths $P_{b_i}$ to be consistent with the exact distance $d$ path structure of a single graph $G$.
\end{remark}

If one combines the last claim with \Cref{thm:main} one gets an $O(f^2n^{1+\frac1k})$ edges. Crucially, we now show one can improve on this vanilla sampling approach to obtain better size bounds for edge fault tolerant spanners.

\begin{theorem}\label{thm:comb-eft-2}
    Let $P_1, ..., P_t$ be the collection of paths that has a $(f,2 \to 2k)$ edge blocking set. Then $t = O(fn^{1+\frac{1}{k}})$. 
\end{theorem}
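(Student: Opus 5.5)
Following the technical overview, the argument has two steps. First we reduce to the case where the paths added are edge-disjoint in a suitable sense. Then we treat each path as a ``long edge'' and run a Bodwin--Patel style sampling argument that loses only a factor $f$, rather than the $f^2$ produced by \Cref{thm:edge-blocking}.

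\textbf{Step 1 (reduction to edge-disjoint paths).} Fix the blocking sets $F_i$. Each path $P_i=(x_i,m_i,y_i)$ has two edges. We split $\{1,\ldots,t\}$ according to whether $P_i$ shares an edge with some earlier $P_j$.

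Suppose first that $P_i$ reuses an earlier edge, say $\{m_i,y_i\}\in\bigcup_{j<i}P_j$. That edge is not in $F_i$, so it is present in $(\bigcup_{j<i}P_j)-F_i$, where $\dist(x_i,y_i)>2k$. Hence the new edge $\{x_i,m_i\}$ joins two vertices at distance $>2k-1$ in $(\bigcup_{j<i}P_j)-F_i$. In a multigraph the new edge is a fresh edge copy, so these edges form a sequence with an $f$-edge-blocking set for the $1\to 2k-1$ greedy process. By the known bound for $f$-EFT $(2k-1)$-spanners in multigraphs \cite{BodwinDR22,BodwinP19}, there are $O(fn^{1+1/k})$ such edges. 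If both edges of $P_i$ are old, then $\dist(x_i,y_i)\le 2$ in the blocked graph, a contradiction. So the non-disjoint paths contribute $O(fn^{1+1/k})$, and we may assume all $P_i$ are pairwise edge-disjoint.

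\textbf{Step 2 (paths as long edges, sampling only once per path).} Since the paths are now edge-disjoint, each blocking set $F_i$ hits at most $f$ paths. We sample whole \emph{paths} instead of edges: keep each $P_j$ independently with probability $p=1/(2f)$. Call $P_i$ good if it survives and no earlier surviving $P_j$ contains an edge of $F_i$. Then
\[
\Pr[P_i\text{ good}]\ \ge\ p(1-fp)=\tfrac{1}{4f}.
\]
For each good path, every edge of $F_i$ lies outside the surviving earlier paths. Exactly as in \eqref{eq:addable}, the good paths therefore form a sequence addable by the non-FT greedy $2\to 2k$ process for path collections. By \Cref{thm:combinatorial} (\Cref{thm:main}), there are $O(n^{1+1/k})$ good paths. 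Taking expectations gives $t/(4f)\le O(n^{1+1/k})$, so $t=O(fn^{1+1/k})$.

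\textbf{Main obstacle.} The delicate part is Step 1, specifically charging the reused-edge paths to an EFT $1\to 2k-1$ process whose size bound is only $O(fn^{1+1/k})$.

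Two issues must be handled carefully. First, in a multigraph, ``sharing an edge'' must mean sharing the same edge copy. Second, the edge $\{x_i,m_i\}$ itself might lie in some later $F_j$; this needs care with how blocks are defined for the derived edge sequence.

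If the appeal to the multiplicative EFT bound does not go through cleanly, the fallback is to reprove it directly by the same path-level sampling argument. In that case one would sample edges and use the girth bound behind \Cref{lem:key-lem-boost-girth} with $s=2k-1$.

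A related point in Step 2: a shared-vertex but edge-disjoint configuration is harmless there, because sampling is done per path and a path's survival does not depend on its vertices.
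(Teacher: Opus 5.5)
Your proposal is correct and follows the paper's proof essentially step for step. The paper likewise charges paths that reuse an earlier edge to the multigraph $f$-EFT $1\to 2k-1$ greedy bound through the new edge, whose endpoints are at distance $>2k-1$ in the blocked graph, and then samples the remaining edge-disjoint paths as whole units with $p=\frac{1}{2f}$, applying \Cref{thm:combinatorial} to the surviving unblocked paths; the worry you raise about a new edge later lying in some $F_j$ is not an issue, since blocks from later fault sets are exactly what the blocking-set bound for the $1\to 2k-1$ process already accounts for.
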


\begin{proof}
    Let $x_i, y_i$ be the endpoints of $P_i$ and $m_i$ its middle point. There are two possibilities: either $P_i \cap \bigcup_{j<i}P_j = \emptyset$ or this intersection is an edge. 
    
    We will first prove that the second option occurs at most $O(fn^{1+\frac{1}{k}})$ times. Let $P_{i_1}, ..., P_{i_s}$ be the paths ($i_1 < i_2 < ... < i_s$) for which the second option is in place. Let $e_{i_\ell} = P_{i_\ell} - \bigcup_{j < i_\ell}P_j$.
    Assume without loss of generality that $e_{i_\ell} = \{m_{i_\ell}, y_{i_\ell}\}$. By the definition of having a blocking set, there exists the set $F_{i_\ell} \subset E(\bigcup_{j < i_\ell}P_j)$, such that $|F_{i_\ell}| \leq f$, $F\cap E(P_{i_\ell}) = \emptyset$ and $\dist_{\bigcup_{j<i_\ell}P_j - F_{i_\ell}}(x_i, y_i) > 2k$. Since by definition of a blocking set $\{x_{i_\ell}, m_{i_\ell}\} \notin F_{i_{\ell}}$, $\dist_{\bigcup_{j<i_\ell}P_j - F_{i_\ell}}(x_i, y_i) \leq \dist_{\bigcup_{j<i_\ell}P_j - F_{i_\ell}}(m_i, y_i) + 1$. This implies that 
    $$\dist_{\bigcup_{j<\ell}e_{i_j} - F_{i_\ell}}(m_i, y_i) \geq \dist_{\bigcup_{j<i_\ell}P_j - F_{i_\ell}}(m_i, y_i) > 2k - 1.$$

    This implies that $e_{i_1}, ..., e_{i_s}$ can be sequentially added by the greedy algorithm for the standard EFT spanner for multigraphs with the stretch factor $2k-1$. It is known that the standard EFT spanner for multigraphs with the stretch factor $2k-1$ has size $O(fn^{1+\frac{1}{k}})$ which follows by the method of \cite{BodwinP19}, with a full proof appearing in \cite{PetruschkaST24_ITCS} as the first bullet of Theorem 3.1.

    Now we can assume that all $P_i$ are edge disjoint. Consider the random subgraph $H_p$ of $\bigcup P_i$, where for each $i$ \emph{both} edges of $P_i$ are in $H_p$ with probability $p$ and \emph{both}
    not in $H_p$ with probability $1-p$ independently for distinct paths. We note that this coupling of the edges of the same path is what distinguishes this proof from the approach used before and allows to achieve a better bound. Let $X$ be the set of surviving paths (i.e. those edges are in $H_p$), $Y$ be the set of surviving blocks and $Z$ be the set of surviving paths that do not participate in any surviving block. Then $\mathbb{E}[|X|] = pt$, $\mathbb{E}[Y] \leq ftp^2$ and thus, since $|Z| \geq |X| - |Y|$, $\mathbb{E}[Z] \geq pt - p^2 ft$, and so for at least one choice of the subgraph $H_p$ the set $Z$ has at least such size. Let us fix this choice. Let $Z = \{P_{i_1}, ..., P_{i_{|Z|}}\}$. Then for each $i_\ell$
    $$2k < \dist_{\bigcup_{j<i_\ell}P_j-F_{i_\ell}}(x_{i_\ell}, y_{i_\ell}) \leq \dist_{\bigcup_{a<\ell}P_{i_a}-F_{i_\ell}}(x_{i_\ell}, y_{i_\ell}) = \dist_{\bigcup_{a<\ell}P_{i_a}}(x_{i_\ell}, y_{i_\ell}),$$
    where the last equality is in place because $\bigcup_{a<\ell}P_{i_a}\cap F_{i_\ell} = \emptyset$, otherwise $P_{i_\ell}$ participates in surviving block. 

    This implies, by Theorem \ref{thm:combinatorial}, that $|Z| = O( n^{1+\frac{1}{k}})$. Consequently, $t = O(\frac{ n^{1+\frac{1}{k}}}{p - fp^2})$. Choosing $p = \frac{1}{2f}$, we get the desired bound. 
\end{proof}

\begin{observation}
    The union of an $f$-EFT $2\to 2k$ spanner and an $f$-EFT $1 \to 2k-1$ spanner is a $(k, k-1)$ $f$-EFT spanner. In particular, if we take the $2\to 2k$ spanner constructed above and an optimal $1\to 2k-1$ spanner, we get a $(k, k-1)$ $f$-EFT spanner of size $O(kfn^{1+\frac{1}{k}})$.
\end{observation}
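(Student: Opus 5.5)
The plan has two parts. First, show the union has the $(k,k-1)$ stretch under any fault set, by induction along a shortest path. Second, add up the sizes of the two spanners using results stated earlier.

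\textbf{Stretch.} Let $H = H_2\cup H_1$, where $H_2$ is an $f$-EFT $2\to 2k$ spanner and $H_1$ is an $f$-EFT $1\to 2k-1$ spanner. Fix a fault set $F$ with $|F|\le f$ and two vertices $x,y$ with $D=\dist_{G-F}(x,y)<\infty$. Take a shortest path $x=v_0,v_1,\dots,v_D=y$ in $G-F$ and cut it into $\lfloor D/2\rfloor$ consecutive segments of two hops, plus one single edge when $D$ is odd.

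For a two-hop segment $(v_{2j},v_{2j+2})$, the path is shortest, so $\dist_{G-F}(v_{2j},v_{2j+2})=2$ exactly. Since $H_2\subseteq H$, \Cref{def:eft-vft} gives $\dist_{H-F}(v_{2j},v_{2j+2})\le 2k$.

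For the leftover edge $(v_{D-1},v_D)$, we have $\dist_{G-F}=1$, so $H_1$ gives $\dist_{H-F}(v_{D-1},v_D)\le 2k-1$.

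Summing over the segments by the triangle inequality:
\begin{itemize}
\item if $D$ is even, $\dist_{H-F}(x,y)\le kD$;
\item if $D$ is odd, $\dist_{H-F}(x,y)\le 2k\cdot\frac{D-1}{2}+2k-1 = kD + k-1$.
\end{itemize}
In both cases $\dist_{H-F}(x,y)\le kD+k-1$, which is the $(k,k-1)$ guarantee in $H-F$ relative to $G-F$. Disconnected pairs impose no condition.

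\textbf{Size.} The $2\to 2k$ part comes from running \Cref{alg:ft-greedy} with $d=2$, $r=2k$. The added paths admit an $(f,2\to2k)$ blocking set by the claim preceding \Cref{thm:edge-blocking}. Then \Cref{thm:comb-eft-2} bounds the number of paths by $O(fn^{1+1/k})$, and each path contributes at most two edges, so this part has $O(fn^{1+1/k})$ edges.

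The $1\to 2k-1$ part is the standard $f$-EFT $(2k-1)$-spanner for multigraphs, which has size $O(fn^{1+1/k})$ as cited in \Cref{thm:comb-eft-2}.

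The union therefore has $O(fn^{1+1/k})$ edges. This is within the stated $O(kfn^{1+1/k})$, and the extra factor of $k$ leaves room for the polynomial-time variant in \Cref{sec:poly-time}.

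The only delicate step is the stretch argument. The point is that exact distance $2$ must be measured in $G-F$ and not in $G$; this is exactly why we use the shortest path in $G-F$, whose two-hop subpaths realize exact distance $2$ there. Everything else is bookkeeping.
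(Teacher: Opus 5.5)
Your proof is correct and follows the paper's argument: take a shortest path in $G-F$, charge each two-hop segment to the $2\to 2k$ spanner and the possible leftover edge to the $1\to 2k-1$ spanner, and sum to get $kD$ or $kD+k-1$. Your size accounting uses the blocking-set claim, Theorem~\ref{thm:comb-eft-2}, and the multigraph EFT $(2k-1)$-spanner bound, and it yields $O(fn^{1+1/k})$, which implies the stated $O(kfn^{1+1/k})$ (as you note, the extra factor of $k$ matches the polynomial-time variant).
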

\begin{proof}
    Let $H^1$ be the $f$-EFT $1 \to 2k-1$ spanner, $H^2$ be the $f$-EFT $2 \to 2k$ spanner and $H := H^1 \cup H^2$. Fix a set of faults $F$. Suppose that for some vertices $x$ and $y$ we have $\dist_{G-F}(x,y) = d$. Take a shortest path $(x=v_0, v_1, ..., v_d=y)$ between them. Then for each $0 \leq i \leq d-2$ we have $\dist_{G-F}(v_i, v_{i+2}) =2$ and, consequently, $\dist_{H-F}(v_i, v_{i+2}) \leq 2k$. Similarly, for each $0 \leq i \leq d-1$ we have $\dist_{G-F}(v_i, v_{i+1}) = 1$ and, consequently, $\dist_{H-F}(v_i, v_{i+1}) \leq \dist_{H^1-F}(v_i, v_{i+1}) \leq 2k-1$. Then, in case $d$ is even we get $\dist_{H-F}(x,y) \leq \sum_{i = 0}^{\frac{d}{2}-1}\dist_{H-F}(v_{2i},v_{2i+2}) \leq \frac{d}{2}\cdot 2k = kd$. In case $d$ is odd we get $\dist_{H-F}(x,y) \leq \sum_{i = 0}^{\frac{d-3}{2}}\dist_{H-F}(v_{2i},v_{2i+2}) + \dist_{H-F}(v_{d-1}, v_{d}) \leq \frac{d-1}{2}\cdot 2k + (2k - 1) = kd + (k-1)$.
\end{proof}

\paragraph{A Matching Lower Bound} The size of the $f$-EFT $2\to 2k$ spanner we obtained here is the best possible assuming the girth conjecture of Erd\H{o}s. Take an Erd\H{o}s graph on $\frac{n}{2}$ vertices with $\Omega(n^{1+\frac{1}{k}})$ edges and girth strictly greater than $2k+1$, replace each edge by $f$ parallel edges and then attach a leaf to each vertex. Such a graph has no proper $f$-EFT $2\to 2k$ spanner. For $f$-EFT $(k,k-1)$-spanner the same construction without adding leaves has no proper $f$-EFT $(k,k-1)$-spanner.

\section{Conclusions}
In this work we suggested a simple greedy procedure to obtain a $d\to r$ spanner, and gave tight constructions of $f$-EFT $2 \to 2k$ spanner for multigraphs, as well as a construction that takes a graph $G$ as input, and outputs a subgraph $H$ approximating weighted paths of hop-length $2$ in $G$, with optimal size/stretch tradeoff.

In follow-up work together with Parter, we give an $O_k(fn^{1+\frac{1}{k}})$ bound for $2\to 2k$ spanners supporting vertex faults, nearly matching the lower bound of \cite{BodwinDPW18}. We also prove that the greedy $d\to r$ spanner has similar size guarantees to the $(k^{\varepsilon},O_{\varepsilon}(k))$ spanners of Ben-Levy and Parter \cite{Ben-LevyP20}, and analyze the corresponding FT-greedy spanner to obtain \emph{truly} polynomial (in $f$) constructions achieving the stretch of the constructions of \cite{Ben-LevyP20}, for every fixed $\varepsilon>0$.

\noindent We believe the following directions are intriguing for future study:

\begin{enumerate}

    \item We conjecture that EFT $(k,k-1)$-spanner for simple graphs have size at most $f^{\frac{1}{k}}$ bigger than the best possible bounds and in particular the bounds of \cite{BodwinDR22}. At the moment, it seems possible that the upper bounds of \cite{BodwinDR22} for $2k-1$ spanners apply to $(k,k-1)$-spanners, without requiring any overhead.
    
    \item Is it possible to produce light spanners achieving the weighted approximation in \Cref{thm:weighted-main}? The lightness of a spanner is the ratio between the total weight of its edges and the weight of a minimum spanning tree in $G$.

    \item Does \Cref{thm:FT-spanner-bounds-intro} extends to weighted FT spanners? We believe this is not merely a technical problem but a conceptual problem. The techniques presented in this paper are local (mostly handle paths of size $2,\sqrt{k}$) while weighted $(\alpha,\beta)$-spanners have a \emph{global} additive error. In particular, we would not be surprised if \Cref{thm:FT-spanner-bounds-intro} doesn't extend to $f$-FT $(k,k-1)$-spanners for weighted graphs, and find this possibility interesting.

    \item Can the stretch guarantee of \Cref{thm:weighted-main} be improved? What we mean by this question, in light of \Cref{clm:lower-bound-wtd}, is to give some bounds on the structure of paths in $G$ that are hard to approximate in the spanner. E.g. one possibility is it the case that for any $G=(V,E)$ there is a wtd $2\to2k$ spanner $H$ of size $\tilde{O}(n^{1+\frac{1}{k}})$, such that any $2$-path $P$ in $G$ with $E(P) \cap E(H) =\emptyset$ has stretch exactly $k$ in $H$?

\end{enumerate}

\bibliographystyle{alpha}
\bibliography{refs}

\appendix

\section{Polynomial Time, $f$-EFT $2\to 2k$ Spanner of Near Optimal Size}\label{sec:poly-time}

We note here that one may use the method of \cite{DinitzR20} to get an EFT spanner algorithm with polynomial running time and essentially same size guarantee (up to a factor $k$).

\begin{theorem}
    There is a polynomial (in $n,f$) time algorithm that constructs an $f$-EFT spanner of a given $n$-vertex graph $G$ with $O(kfn^{1+\frac{1}{k}})$ many edges. 
\end{theorem}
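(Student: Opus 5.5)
The plan is to follow the Dinitz--Robelle approach \cite{DinitzR20}. We replace the exact existential test in \Cref{alg:ft-greedy}, which asks whether some fault set $F$ with $|F|\le f$ and $F\cap E(P)=\emptyset$ gives $\dist_{H-F}(x,y)>2k$, by a polynomial-time approximate test. The test is built from length-bounded cuts.

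\textbf{The relaxed test.} For a $2$-path $P=(x,m,y)$, run the following greedy loop on $H$ with the edges of $P$ protected (never deleted). While $\dist_{H-F}(x,y)\le 2k$, find a shortest $x$--$y$ path in $H-F$ that avoids the protected edges, and add all of its (at most $2k$) edges to $F$. Stop when either $|F|>2kf$ or $x,y$ are at distance $>2k$ in $H-F$. Add $P$ to $H$ exactly when the loop ends with $|F|\le 2kf$.

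Two properties of this test are needed.
\begin{itemize}
\item \textbf{Correctness.} If some valid $F^*$ with $|F^*|\le f$ exists, then every path we delete meets $F^*$. Hence at most $f$ iterations occur, the loop ends with $|F|\le 2kf$, and $P$ is added. So the output is an $f$-EFT $2\to 2k$ spanner.
\item \textbf{Blocking structure.} Whenever $P$ is added, the final set $F$ is disjoint from $E(P)$ and has size at most $2kf$. The added paths therefore admit a $(2kf,\,2\to 2k)$ edge-blocking set in the sense of \Cref{def:blocking}.
\end{itemize}

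\textbf{Size bound.} Apply \Cref{thm:comb-eft-2} with fault parameter $2kf$. This gives $O(kfn^{1+1/k})$ added $2$-paths, hence $O(kfn^{1+1/k})$ edges. One subtlety needs checking here. The proof of \Cref{thm:comb-eft-2} first handles paths sharing an edge with earlier paths by invoking the multigraph EFT $(2k-1)$-spanner bound with parameter $2kf$. That bound is linear in the fault parameter, so this case also stays $O(kfn^{1+1/k})$. The disjoint-paths sampling step is stated for arbitrary $f$, so it carries over directly.

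\textbf{Running time.} The outer loop ranges over $O(n^3)$ triples, or over the $2$-paths of the multigraph, which are polynomially many in $n$ and the multiplicities. Multiplicities above $f+1$ can be trimmed first without affecting correctness. Each test performs at most $2kf+1$ shortest-path computations, so the algorithm is polynomial in $n$ and $f$.

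Finally, take the union with the polynomial-time $f$-EFT $1\to 2k-1$ spanner of \cite{DinitzR20}, which has $O(fn^{1+1/k})$ edges. By the observation at the end of \Cref{sec:ft-spanners}, this union is a $(k,k-1)$ $f$-EFT spanner of size $O(kfn^{1+1/k})$.

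The main obstacle is the correctness argument with the protected edges of $P$. The loop deletes only paths avoiding $E(P)$, and we must make sure such a path always exists while the distance is at most $2k$. If a short $x$--$y$ path in $H-F$ exists only through edges of $P$, then $P$'s own edges are already present in $H$. Since the process is over multigraphs, this means a copy of each edge of $P$ lies in $H$. To handle this cleanly, I would restrict the blocking sets to parallel copies distinct from $E(P)$, consistent with \Cref{def:blocking}. Such a path then contains an edge of $F^*$ only if it avoids $E(P)$, so I would restrict the search to paths in $H-F-E(P)$ and argue that each of them must hit $F^*$.
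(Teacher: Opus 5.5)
Your overall plan is the paper's: a Dinitz--Robelle style peeling test, a blocking set of size at most $2kf$, \Cref{thm:comb-eft-2} with fault parameter $2kf$, and a union with a $1\to 2k-1$ EFT spanner. The gap is in how you resolve the protected-edge issue you flag at the end, and it breaks the blocking-set claim.

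If the search runs in $H-F-E(P)$, the final $F$ only certifies $\dist_{H-E(P)-F}(x,y)>2k$. \Cref{def:blocking} requires $\dist_{H-F}(x,y)>2k$ with $F\cap E(P)=\emptyset$. The two differ exactly when an edge of $P$, say $\{x,m\}$, is already in $H$. That is precisely the case handled in the first part of the proof of \Cref{thm:comb-eft-2}, which uses $\dist_{H_i-F_i}(x_i,y_i)\le \dist_{H_i-F_i}(m_i,y_i)+1$.

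The failure is real, not just a gap in the write-up. Take $K_{n/2}$ and attach a pendant leaf $\ell_v$ to each vertex $v$. For every $2$-path $P=(\ell_v,v,u)$, the leaf $\ell_v$ is isolated in $H-E(P)$. So your loop stops at once with $F=\emptyset$ and $P$ is added, whatever the processing order. Every edge of $K_{n/2}$ therefore enters $H$, giving $\Theta(n^2)$ edges. This already contradicts $O(kfn^{1+1/k})$ for $f=1$, $k=2$. The paper's appendix argument has the same defect: it also searches in $H\setminus E(P)$ and only argues about short paths disjoint from $P_i$.

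The repair is to protect $E(P)$ from deletion without removing it from the search graph.
\begin{itemize}
\item If $E(P)\subseteq E(H)$, skip $P$.
\item Otherwise, repeatedly take a shortest $x$--$y$ path $Q$ in $H-F$, where edges of $E(P)\cap E(H)$ are allowed. As long as its length is at most $2k$, add only $E(Q)\setminus E(P)$ to $F$.
\end{itemize}
At most one edge of $P$ lies in $H$, and it does not join $x$ to $y$. Hence each set $E(Q)\setminus E(P)$ is nonempty, and these sets are pairwise disjoint.

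Now let $F^*$ be any set with $|F^*|\le f$, $F^*\cap E(P)=\emptyset$ and $\dist_{H-F^*}(x,y)>2k$. It must meet every one of these disjoint sets, so the loop ends within $f$ rounds. Conversely, after $f+1$ rounds every such $F^*$ misses some $Q$ entirely, since $Q$'s protected edges are not in $F^*$ either. So skipping $P$ is safe, and this also covers your stopping rule $|F|>2kf$.

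When $P$ is added, the final $F$ satisfies $|F|\le 2kf$, $F\cap E(P)=\emptyset$ and $\dist_{H-F}(x,y)>2k$. That is a genuine $(2kf,2\to 2k)$ blocking set. With this change the rest of your argument goes through: the linear-in-$f$ remark for the shared-edge case, trimming multiplicities to $f+1$, and the final union. For the union, the $1\to 2k-1$ part only needs $O(kfn^{1+1/k})$ edges, which the analogously corrected peeling test for single edges provides.
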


\begin{proof}

\begin{algorithm}[h]
\caption{$\mathsf{EFTModifiedGreedySpanner}(G, f)$}\label{alg:FT-greedy-modified-spanner}

    $H \gets (V, \emptyset)$
    
    \ForEach{$\text{2-path } P = (x,m,y) \in G$}{
        $j\gets 0$
        
        $H' \gets H\backslash E(P)$
        
        \ForEach{$i = 1, ..., f+1$}{
            $j \gets j+1$
            
            Find a path $P'_j(P)$ of length $\leq 2k$ with endpoints $x$ and $y$ in $H'$, if there is no such, end the for-loop
            
            $H' \gets H'\backslash E(P'_j(P))$
        }
        \If{$j < f+1$}{
        $E(H) \gets E(H) \cup E(P)$.
        }
    }
    \Return $H$.

\end{algorithm}

We claim that the algorithm \ref{alg:FT-greedy-modified-spanner} satisfies the conditions stated in the theorem. It is indeed polynomial time: One can go over $O(n^3)$ 2-paths in $G$, and for each such path run BFS $O(f)$ times. Hence the running time is $O(f \cdot poly(n))$.

Let $P=(x,m,y)$ be a path in $G$ which is not in $H$. Then by the algorithm, there exist $f+1$ edge-disjoint paths $P_1',\ldots,P_{f+1}'$ in $H$ from $x$ to $y$. By their disjointness $\forall F \subseteq E$, $|F| \leq f$, there exist $i$ with $F_i \cap P'_i = \emptyset$ hence for every such $F$, $\dist_{H-F}(x,y)\leq 2k$.

By Theorem \ref{thm:comb-eft-2}, it remains to show that the collection of paths added by the algorithm has a $(2kf, 2\to 2k)$ blocking set. This is indeed the case: for each path $P_i$ take $F_i = \bigcup_j P'_j(P_i)$. Any other path from $x_i$ to $y_i$ of length at most $2k$ disjoint from $P_i$ should contain an edge from $F_i$, as otherwise it would be added to the collection $\{P'_j(P_i)\}_j$ by the algorithm, and by the condition of adding $P_i$ the collection $\{P'_j(P_i)\}_j$ has at most $f$ paths and so $|F_i| \leq 2kf$. 

\end{proof}

\section{A $\sqrt{k} \to O(k)$ Greedy Spanner}\label{sec:greedy-sqrt-spanner}

\begin{theorem}
    For an $n$-vertex graph $G$ the greedy algorithm for the $\ceil{\sqrt{k}} \to C k$ spanner for some constant $C$ outputs a spanner with $O(kn^{1+\frac{1}{k}})$ edges.
\end{theorem}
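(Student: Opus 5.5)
Write $d=\lceil\sqrt{k}\rceil$ and $r=Ck$. The plan mirrors the proof of \Cref{thm:main}: classify each added $d$-path $P_i$ according to how it changes balls in $H_i$, charge each class to a quantity that can grow only boundedly often, and sum the bounds.

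\textbf{Step 1: a distant edge on every added path.} Let $P_i=(v_0,\dots,v_d)$ be the $i$-th added path. Since $\dist_{H_i}(v_0,v_d)>Ck$, the triangle inequality along the path forces some consecutive pair $(v_j,v_{j+1})$ to satisfy $\dist_{H_i}(v_j,v_{j+1})>Ck/d\ge C'\sqrt{k}$. More usefully, for any split of the path into two pieces, at least one piece has endpoints far apart in $H_i$. I would use this to find a consecutive edge that is distant with respect to a threshold $s$ of order $k$ in a suitable sense.

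\textbf{Step 2: the boosting case.} Suppose the distant edge has an endpoint that is not fully clustered with respect to $s$, where "fully clustered" means having a $\lceil s/2\rceil$-cluster. Then the coupling behind \Cref{cor:total-boosting} applies: the distant edges of such steps are exactly the edges that \Cref{alg:boost_proc} would add. \Cref{lem:boosting_upper_bound} therefore bounds the number of these steps by $O(n^{1+1/k})$. Each such step adds at most $d$ edges, so this case costs $O(\sqrt{k}\,n^{1+1/k})$ edges.

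\textbf{Step 3: the non-boosting case.} Here the distant edge has both endpoints fully clustered. I would run the lateral-clustering argument of \Cref{subsec:lateral} for the path's remaining vertices: each other vertex $v_j$ either gains about $n^{(R-1)/k}$ new vertices in its radius-$R$ ball (a lateral boost), or its ball is already large, or the cluster of its neighbour on the path is roughly contained in its ball. Lateral boosts are counted per vertex as in \Cref{clm:lateral-bound-per-vertex}. Counting over all $d$ positions and the $O(1)$ radius levels gives $O(k\,n^{1+1/k})$ edges.

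In the remaining sub-case every vertex along the path has a cluster of size about $n^{R/k}$, possibly inherited from a neighbour. The two endpoints then carry clusters $A$ and $B$ with $|A||B|\ge n^{(k-1)/k}$. The pairs in $A\times B$ are at distance more than about $Ck-2R$ in $H_i$, and adding $P_i$ brings them within about $d+2R$. Choosing $C$ large makes these two thresholds consistent, as in \Cref{lem:no-lateral-because-1}. The distance-threshold set $S_t$ has size $O(n^2)$, so this sub-case contributes $O(n^{1+1/k})$ paths.

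\textbf{Main obstacle.} The hardest step is propagating cluster information along a path of length $\sqrt{k}$ rather than $2$. If lateral containment is chained across $d$ hops, each hop adds radius about $R$, so the total radius grows to about $\sqrt{k}\cdot R\approx k^{3/2}$, which exceeds the $Ck$ budget.

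My first attempt would use smaller radii $\rho\approx\sqrt{k}$ at intermediate vertices. I would require growth $n^{\rho/k}$ per $\rho$-ball and show that lateral boosting at radius $\rho$ can happen only $n^{1/k}$ times per vertex and per level. The $\sqrt{k}$ intermediate hops would then each spend $O(\sqrt{k})$ radius, keeping the total at $O(k)$. The expected loss of a factor $k$ in the size bound, relative to $O(n^{1+1/k})$, comes from summing this per-vertex, per-level count over the $d$ positions and the radius levels.
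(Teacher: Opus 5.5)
There is a genuine gap, located at the obstacle you flag yourself, and your proposed fix does not close it.

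Step~1 only guarantees an edge of the path that is $(Ck/d)$-distant in $H_i$. So the threshold $s$ in Steps~1--2, and with it the radius $\lceil s/2\rceil$ of the clusters the boosting argument gives you, is $O(\sqrt{k})$, not of order $k$. Such balls have size only about $n^{1/\sqrt{k}}$.

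The endgame of Step~3 is a path-buying count over $A\times B$ against $|S_t|=O(n^2)$. That needs $|A|\,|B|\ge n^{(k-1)/k}$, i.e.\ clusters of radius about $k/2$ at \emph{both endpoints} of $P_i$, and nothing in your argument produces them. The claim that in the remaining sub-case every vertex of the path carries a cluster of size $n^{R/k}$, ``possibly inherited from a neighbour'', is not justified. Lateral containment at $v_j$ presupposes that its path-neighbour is already fully clustered, which you know only for the two ends of the distant edge, and each inheritance costs radius.

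Your first attempt, with radii $\rho\approx\sqrt{k}$ at intermediate vertices, keeps the total radius at $O(k)$. But it leaves $|A|\,|B|\approx n^{2/\sqrt{k}}$, so the $S_t$ count only bounds the number of paths by roughly $n^{2-2/\sqrt{k}}$. At this scale pair counting is the wrong tool.

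The paper's proof uses two ideas your proposal lacks.

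\emph{First idea: quantify over all edges.} With $R=\lceil\sqrt{k}\rceil$, call an added path clustering if \emph{some} edge of it is $(2R-1)$-distant in $H_i$ and has an endpoint without an $R$-cluster. Your Step~2 handles these paths correctly once $s=2R-1$. Via \Cref{lem:boosting_upper_bound} you even get $O(n^{1+1/k})$ of them, slightly better than the paper's per-level count. The payoff of quantifying over all edges is that in a non-clustering path, every edge with a non-fully-clustered endpoint is already $(2R-1)$-spanned in $H_i$. This forces at least two fully clustered vertices on the path. It also shows that the segments before the leftmost one, $\ell$, and after the rightmost one, $r$, cost at most $(2R-1)d=O(k)$. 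So no cluster at the path's endpoints is ever needed.

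\emph{Second idea: a virtual spanner on centers instead of pair counting.} Sample a set $C$ of $\tilde O(n^{1-R/k})$ centers hitting every $R$-ball of a fully clustered vertex. Charge each non-clustering path to a virtual edge $\{c_\ell,c_r\}$ of a graph $\hat H$ on $C$. Each existing virtual edge is realized by an $H$-path of length at most $2R+d$. So if $\dist_{\hat H}(c_\ell,c_r)\le 2R-1$, then $\dist_{H_i}(x,y)\le 2(2R-1)d+4R^2=O(k)$, contradicting that $P_i$ was added. Hence $\hat H$ is built by the greedy $(2R-1)$-spanner on $C$. By the girth argument it has $O(|C|^{1+1/R})=\tilde O(n^{1-1/k})$ edges, using $(1-R/k)(1+1/R)\le 1-1/k$, which is exactly $R^2\ge k$.

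This is why $\sqrt{k}$ is the right scale: it makes $Rd=O(k)$ and $R^2\ge k$ hold simultaneously.
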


This theorem is an easy corollary of the following result considering an $(O(\sqrt{k}), O(k))$-spanners, that we are going to obtain combining our techniques with the ideas in \cite{Ben-LevyP20}.

\begin{theorem}
    For any $k \geq 1$, $d\geq 1$ the greedy algorithm constructs a subgraph $H$ of a (unweighted) graph $G$ on $n$ vertices such that for any path $P$ of length $d$ with endpoints $u$ and $v$ we have $\dist_{H}(x,y) \leq 4\ceil{\sqrt{k}}^2 + 2(2\ceil{\sqrt{k}} - 1)d$ and $|E(H)| = O(d\sqrt{k}\,n^{1+\frac{1}{k}})$.
\end{theorem}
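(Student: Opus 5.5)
Set $q=\lceil\sqrt{k}\rceil$ and $r=4q^2+2(2q-1)d$. I read ``the greedy algorithm'' as \Cref{alg:greedy-spanner-path-collections} run on all $d$-paths of $G$ with threshold $r$. Correctness is then immediate: a $d$-path is skipped only when its endpoints are already within distance $r$ in $H$. All the work is in the size bound. I would follow the template of \Cref{sec:analysis}: classify each added path $P_i$ by the clustering state of its vertices in $H_i$, and charge each class either to a per-vertex growth budget or to a distance-threshold set $S_t$ of size $O(n^2)$.

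The clustering notion will be coarser than \Cref{def:clusters_greedy}. It is measured at radii that are multiples of a unit $q$, with growth $n^{q/k}$ per level (roughly $n^{1/q}$), and only $q$ levels are used. The goal is that each vertex passes through $O(q)$ levels and is charged $O(n^{q/k})$-ish times per level, following the Ben-Levy--Parter scheme. Fix an added path $P_i=(v_0,\dots,v_d)$. Since $\dist_{H_i}(v_0,v_d)>r$, the triangle inequality along $P_i$ gives something better than the distant pair of \Cref{clm:split-k}. Split $P_i$ into $2q-1$ consecutive segments of about $d/(2q-1)$ edges. Then there is a segment whose endpoints $a,b$ satisfy $\dist_{H_i}(a,b)>(r-\text{slack})/(2q-1)\ge 2q+d/(2q-1)$; this is where the constant $4q^2$ comes from. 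Adding $P_i$ joins $a$ and $b$ by a path of at most $d$ edges.

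I would then prove a ``long-edge'' version of the neighborhood exchange lemma (\Cref{lem:ne-exchange}). If $\dist_{H_i}(a,b)$ exceeds $2\rho+|ab\text{-segment}|$, then $\Ball_{H_i}(a,\rho)$ and $\Ball_{H_i}(b,\rho)$ are disjoint. After the addition, each of these balls lies within distance $\rho+$(segment length) of the other endpoint. With this lemma the proof splits into two cases.

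\textbf{Boosting case.} One of $a,b$ is not clustered at level $q$. Choosing the less clustered endpoint as in \Cref{prop:parallel}, its ball at the next level grows by at least a $n^{q/k}$-fraction of its target size. A vertex can be boosted at most $O(q\, n^{1/k})$ times per level structure, and there are $d$ choices of segment endpoints per path, so summing gives $O(d\sqrt{k}\,n^{1+1/k})$ edges.

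\textbf{Global case.} Both $a$ and $b$ are fully clustered to radius about $q^2$, with balls of size $\ge n^{1/2}$ each. Then every pair in $\Ball(a,\cdot)\times\Ball(b,\cdot)$ crosses a fixed threshold $t\approx r$ at step $i$, exactly as in \Cref{lem:no-lateral-because-1}. That gives $n^{(k-1)/k}$ pairs per path, so $O(n^{1+1/k})$ paths, hence $O(d\,n^{1+1/k})$ edges. I expect to need a lateral-type step here as in \Cref{lem:no-lateral-because-2}, because the two balls may be laterally unbalanced; I would balance them as $n^{\lfloor\cdot\rfloor}$ versus $n^{\lceil\cdot\rceil}$ using the segment choice.

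The main obstacle is that the edges of $P_i$ are added all at once, so $P_i$ is not a single edge. The girth argument of \Cref{lem:key-lem-boost-girth} is unavailable, and the sequential $O(n^{1+1/k})$ bound cannot be used. This is why I charge boosts per vertex, as in the parallel analysis of \Cref{prop:parallel}; that charging is exactly what costs the $\sqrt{k}$ factor, and the $d$ factor comes from \Cref{obs:path-collections-imply-graph-spanners}. To make sure each boost actually increases a ball, I would first try the nested radii $q, 2q, \dots, q^2$. The slack in $4q^2$ should guarantee the disjointness hypothesis at every level simultaneously.
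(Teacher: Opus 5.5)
Your correctness argument is fine, but the size bound does not go through. There are two gaps, and the second is a missing idea rather than a detail. First, the boosting count is internally inconsistent. With levels at radii $q,2q,\dots$ and growth $n^{q/k}$ per level, each boost adds only a $1/n^{q/k}$ fraction of the target. So a vertex can be boosted about $n^{q/k}$ times per level, giving roughly $n^{1+q/k}=n^{1+\Theta(1/\sqrt{k})}$ paths, not the $O(q\,n^{1/k})$ per vertex you later use. Switching to unit levels does not rescue the segment version either. A segment of length $L\approx d/(2q-1)$ puts $B_{H_i}(b,\rho)$ into $B_{H_{i+1}}(a,\rho+L)$, jumping $L$ levels at once, which breaks the per-level accounting.

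The paper avoids this by boosting along a \emph{single edge} of the path. With $R=\lceil\sqrt{k}\rceil$ and unit levels $1,\dots,R$ (growth $n^{1/k}$ per level), an added path is clustering if some edge $\{u,v\}\in E(P)$ has $\dist_H(u,v)>2R-1$ and a non-fully-clustered endpoint. The ordinary exchange lemma then gives at most $n^{1/k}$ boosts per vertex per level, hence $O(\sqrt{k}\,n^{1+1/k})$ clustering paths. Inserting the rest of $P$ simultaneously is harmless, since balls only grow, so the obstacle you identify is not the real one. Also, the factor $d$ in the edge count is just the number of edges per path, not a count of segment choices.

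Second, the global case cannot work at this scale. Clustering to radius $\Theta(\sqrt{k})$ gives balls of size only about $n^{R/k}\approx n^{1/\sqrt{k}}$, so $B(a,\cdot)\times B(b,\cdot)$ has far fewer than $n^{(k-1)/k}$ pairs. Balls of size $n^{1/2}$ need radius about $k/2$, which would cost a factor $k$ in the boosting count and per-edge stretch about $k$ instead of $2\lceil\sqrt{k}\rceil-1$. Moreover, your segment endpoints are only about $2q+2d$ apart in $H_i$. That is far too little for pairs in radius-$k/2$ balls around them to cross a threshold $t\approx r$.

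The missing idea is a Baswana--Sen-type reduction to a smaller virtual graph:
\begin{itemize}
\item For a non-clustering added path, every edge with a non-fully-clustered endpoint has $H$-stretch at most $2R-1$. So $P$ must contain two fully clustered vertices, since otherwise $\dist_H(x,y)\le (2R-1)d$ and $P$ would not have been added.
\item Sample a set $C$ of $O(\log n\cdot n^{1-R/k})$ centers so that every fully clustered vertex has a center within distance $R$.
\item Map each non-clustering path to the virtual edge between the centers of its leftmost and rightmost fully clustered vertices.
\item Each earlier virtual edge is realized in $H$ by a path of length at most $2R+d$. So the stretch $4R^2+2(2R-1)d$ forces every new virtual edge to join centers at virtual distance $>2R-1$.
\end{itemize}
The virtual graph is therefore a greedy $(2R-1)$-spanner on $C$ with girth $>2R$. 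It has $O(|C|^{1+1/R})=O(\log^2 n\cdot n^{1-1/k})$ edges, using $R^2\ge k$. This is where the $4\lceil\sqrt{k}\rceil^2$ term comes from, namely $2R+(2R-1)\cdot 2R$, not from segment slack.
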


\begin{proof}
    Denote $R:= \ceil{\sqrt{k}}$. We call the vertex $v$ fully clustered on some stage of the algorithm if for each $\ell = 0, 1, ... R$ we have $|B_H(v,\ell)| \geq n^{\frac{\ell}{k}}$.

    Let us call a path $P$ that is added by the greedy algorithm a clustering path if there is an edge $e = \{u,v\} \in P$ such that either $u$ or $v$ is not fully clustered and $\dist_H(u,v) > 2R - 1$ before adding $P$.
    
    \begin{claim}\label{cl:num-cl-paths-greedy}
        There are $O(\sqrt{k}\,n^{1 +\frac{1}{k}})$ clustering paths.
    \end{claim}
    \begin{proof}
        For a clustering path $P$ let $e = {u,v}$ be as in the definition. Let $\ell$ be the maximal integer such that $|B_H(u,\ell'-1)| \geq n^{\frac{\ell'-1}{k}}$ and $|B_H(v,\ell'-1)| \geq n^{\frac{\ell'-1}{k}}$ for all $\ell' \in [\ell]$. By the choice of $e$, $\ell \leq R$. Let w.l.o.g. $B_H(u,\ell) < n^{\frac{\ell}{k}}$. Note that $B_H(u,\ell)\cap B_H(v,\ell-1) = \emptyset$. Indeed, suppose there is an $x\in B_H(u,\ell)\cap B_H(v,\ell-1)$. Then $\dist_{H}(u,v) \leq \dist_H(u,x) + \dist_H(x,v) \leq 2\ell - 1 \leq 2R-1$, contradiction. On the other hand, by triangle inequality, $B_H(v,\ell-1) \subset B_{H\cup \{e\}}(u, \ell)$. Overall, we have that $B_H(v,\ell-1) \subset B_{H\cup \{e\}}(u, \ell) - B_{H}(u, \ell)$, and so $|B_{H\cup \{e\}}(u, \ell)| - |B_{H}(u, \ell)| \geq n^{\frac{\ell - 1}{k}}$, while $|B_{H}(u, \ell)| < n^{\frac{\ell}{k}}$. This implies, that each vertex can play the role of $u$ at most $n^{\frac{1}{k}}$ times for each $\ell\in [R]$ during the algorithm. The claim follows. 
    \end{proof}

    \begin{claim}\label{cl:two-vertices-on-nc-path}
        For each path $P$ that is added by the greedy algorithm, but is not clustering the following holds (in the moment of adding):
        \begin{itemize}
            \item For each $e = \{u,v\} \in P$ if $u$ or $v$ is not fully clustered, then $\dist_H(u,v) \leq 2R + 1$;
            \item $P$ has at least two fully clustered vertices.
        \end{itemize}
    \end{claim}
    \begin{proof}
        The first part is obvious from the definition of clustering path. For the second part notice that if $P = (x = v_0, v_1, ..., v_{d-1}, v_d = y)$ has at most one fully clustered vertex, then each of its edges has a not fully clustered end and thus the distance in current $H$ between its ends is by the first statement at most $2R-1$. Consequently, $\dist_H(x,y) \leq \sum_{i=0}^{d-1}\dist_H(v_i, v_{i+1}) \leq d(2R - 1)$, what means that the path already has the right stretch and thus is not added by the algorithm. 
    \end{proof}

    Now we shall bound the number of added paths that are not clustering. To do so let us first select a set of distinguished vertices called centers (similarly to the Baswana-Sen approach), such that each fully clustered vertex has a center within distance $R$. 
    \begin{claim}\label{cl:set-of-centers}
        Let $C$ be a $p$-random subset of $V$, where $p = 10\ln n \cdot n^{-\frac{R}{k}}$ (assume $n$ is large enough for this to be less than $1$). Then both following conditions hold simultaneously with positive probability:
        \begin{itemize}
            \item $|C| \leq 20\ln n\cdot n^{1-\frac{R}{k}}$;
            \item For each fully clustered vertex $v$ there exists some $c\in C$ such that $\dist_{H}(v, c)\leq R$, where $H$ is the current subgraph immediately after $v$ becomes fully clustered (and thus the same holds in any moment afterwards). 
        \end{itemize}
    \end{claim}
    \begin{proof}
        Since $\mathbb{E}|C| = pn = 10\ln n\cdot n^{1-\frac{R}{k}}$, the first condition holds with probability at least $\frac{1}{2}$ by Markov's inequality. 

        By definition, each fully clustered vertex has at least $b=n^{R/k}$ vertices at distance at most $R$. Thus the probability that none of them is selected is at most $(1-p)^{b} \leq e^{-pb} = e^{-10\ln n} = n^{-10}$. Then the probability that the second condition is broken for at least on vertex is, by the union bound, at most $n^{-9}$, that is less than $\frac{1}{2}$ for any $n \geq 2$.
    \end{proof}

    Let us now fix one particular $C$ that satisfies both the above conditions. For each clustered vertex $v$ choose arbitrarily one center at distance $\leq R$ immediately after becoming fully clustered, and denote it $c_v$. Define the graph $\hat G$ with the vertex set $C$ and the edges $\{\{c_1, c_2\} : \dist_G(c_1, c_2) \leq 2R + d\}$. Define its subgraph $\hat H$, that is initially and empty graph on the same vertices and changes during the greedy algorithm on $G$ as described below. 

    By the Claim~\ref{cl:two-vertices-on-nc-path}, each path $P$ added by the algorithm, that is not clustering, has at least two clustered vertices. Denote $\ell(P)$ and $r(P)$ the first and the last (``leftmost'' and ``rightmost'') of them (here we think of $P$ as oriented from $x$ to $y$). We will write $\ell$ instead of $\ell(P)$ and $r$ instead of $r(P)$ for brevity. Each time a non-clustering path $P$ is added to $H$, we simultaneously add the edge $\{c_{\ell}, c_{r}\}$ to $\hat H$ (see Fig.~\ref{fig:long_path}). We have the following 

    \begin{claim}
        For step as described above
        \begin{itemize}
            \item $\{c_{\ell}, c_{r}\} \in E(\hat G)$;
            \item $\dist_{\hat H}(c_{\ell}, c_{r}) > 2R - 1$. In particular, our procedure does not try to add loops or double edges.
        \end{itemize}
    \end{claim}

    \begin{figure}[h]
    \centering    \includegraphics[width=0.7\linewidth]{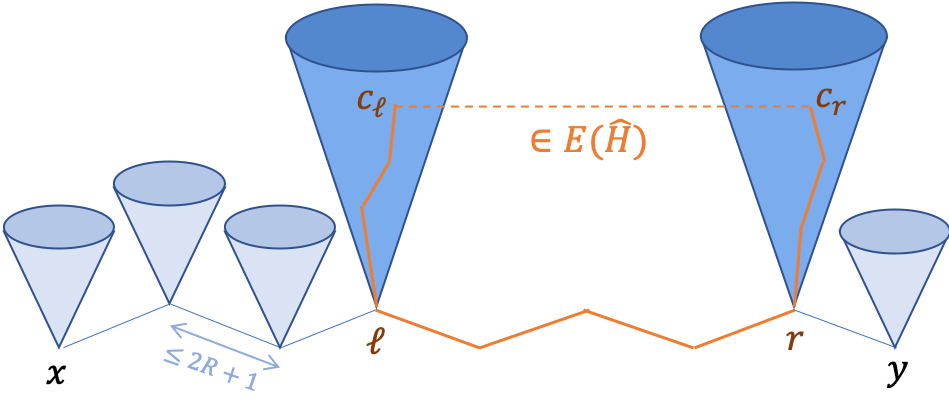}
    \caption{An addition of an edge to a virtual graph $\hat{H}$}
    \label{fig:long_path}
\end{figure}
    
    \begin{proof}
        
        The first claim is obvious: $\dist_G(c_\ell, c_r) \leq \dist_H(c_\ell, \ell) + \dist_P(\ell, r) + \dist_H(r, c_r) \leq 2R + d$. This argument actually implies a stronger statement that we shall use for the proof of the second part of the claim: if two centers $c_1$ and $c_2$ are connected by an edge in $\hat H$ on the current stage of the greedy algorithm, then in the current $H$ they are close: $\dist_H(c_1, c_2) \leq 2R + d$. We have this since once the edge $\{c_\ell, c_r\}$ is added, the path $c_\ell \rightarrow \ell \rightarrow r \rightarrow c_r$ that we measured above is entirely in $H$. 

        Assume the contrary to the second part of the claim: $\dist_{\hat H}(c_{\ell}, c_{r}) \leq 2R - 1$. Then, what we have just proven implies that $\dist_H(c_\ell, c_r) \leq (2R - 1)(d + 2R)$. Since all the edges in $P$ that come before $\ell$ or after $r$ have a non-clustered vertex and $P$ is not a clustering path, we know that the distance in $H$ between the ends of each of these edges is at most $2R - 1$. Hence, if we denote the endpoints of $P$ by $x$ and $y$, we have $\dist_H(x, \ell) \leq (2R-1) \dist_P(x, \ell)$ and $\dist_H(r, y) \leq (2R-1) \dist_P(r, y)$. Overall, we get 
        \begin{multline*}
            \dist_H(x,y) \leq \dist_H(x,\ell) + \dist_H(\ell, c_\ell) + \dist_H(c_\ell, c_r) + \dist_H(c_r, r) + \dist_H(r, y) \leq\\\leq (2R-1)d + 2R + (2R - 1)(d + 2R) =  2(2R - 1)d + 4R^2,
        \end{multline*}
        and so the path already has the right stretch and need not be added, contradiction.
    \end{proof}

    Note that, the preceding claim shows that the construction of $\hat H$ coincides with several first steps of the greedy algorithm that constructs a (usual) $2R-1$ spanner of $\hat{G}$. By a well-known girth argument, such an algorithm has at most $O(|V(\hat G)|^{1+ \frac{1}{R}})$ steps. Since $|V(\hat G)| = |C| = O(\ln n \cdot n^{1 - \frac{R}{k}})$, we have $|V(\hat G)|^{1+ \frac{1}{R}} = O(\ln^2 n \cdot n^{1 - \frac{1}{k}})$. So this quantity upper-bounds the number of added paths that are not clustering. 
    
\end{proof}

\end{document}